\newtheorem{theorem}{Theorem}
\newtheorem{lemma}[theorem]{Lemma}
\newtheorem{remark}{Remark}
\begin{document}
\title{Lattice codes for the Gaussian relay channel: Decode-and-Forward and Compress-and-Forward}
\author{Yiwei Song and Natasha Devroye%
\thanks{Yiwei Song and
Natasha Devroye are with the Department of Electrical and Computer
Engineering, University of Illinois at Chicago, Chicago, IL 60607.
Email:~ysong34, devroye@uic.edu.
The work of N. Devroye and Y. Song was partially supported by NSF under awards CCF-1216825 and 1053933. The contents of this article are solely the responsibility of the authors and do not necessarily represent the official views of the NSF. This paper was presented in part at \cite{Song:2010:latticerelay,Song:CF}. 
}}

\maketitle

\begin{abstract}
Lattice codes are known to achieve capacity in the Gaussian point-to-point channel, achieving the same rates as independent, identically distributed (i.i.d.) random Gaussian codebooks. Lattice codes are also known to outperform random codes for certain channel models that are able to exploit their linearity. In this work, we show that lattice codes may be used to achieve the same performance as known i.i.d. Gaussian random coding techniques for the Gaussian relay channel, and show several examples of how this may be combined with the linearity of lattices codes in multi-source relay networks. 
In particular, we present a nested lattice list decoding technique, by which, lattice codes are shown to achieve the Decode-and-Forward (DF) rate of single source, single destination Gaussian relay channels with one or more relays. 
We next present two examples of how this DF scheme may be combined with the linearity of lattice codes to achieve new rate regions which for some channel conditions outperform analogous known Gaussian random coding techniques in multi-source relay channels. That is,  we derive a new achievable rate region for the two-way relay channel with direct links and compare it to existing schemes,  and derive another achievable rate region for the multiple access relay channel.
We furthermore present a lattice Compress-and-Forward (CF) scheme for the Gaussian relay channel which exploits a lattice Wyner-Ziv binning scheme and  achieves the same rate
 as the Cover-El Gamal CF rate evaluated for Gaussian random codes. 
 These results suggest that structured/lattice codes may be used to mimic, and sometimes outperform, random Gaussian codes in 
 general Gaussian networks.  
\end{abstract}

\begin{keywords}
lattice codes, relay channel, Gaussian relay channel, decode and forward, compress and forward
\end{keywords}

%%%%%%%%%%%%%%%%%%%%%%%%%%%%%%%%%%%%%%%%%%%%%%%%%%%%%%%%%%%%%

\section{Introduction}
\label{sec:intro}

The derivation of achievable rate regions for general networks including relays has classically used codewords and codebooks consisting of independent, identically generated symbols (i.i.d. random coding). Only in recent years have codes which possess additional structural properties, which we term structured codes, been used in networks with relays \cite{nazer2011compute,  Narayanan:2010, Nam:IEEE, nam:2009nested, Ozgur:2010:lattice, Nokleby:2011, noklebyunchaining}. The benefit of using structured codes in networks lies not only in a somewhat more constructive achievability scheme and possibly computationally more efficient decoding than i.i.d. random codes, but also in actual rate gains which exploit the structure of the codes --  their linearity in Gaussian channels -- to decode combinations of codewords rather than individual codewords / messages.
 While past work has focussed mainly on specific scenarios in which structured or lattice codes are particularly beneficial, missing is the demonstration that lattice codes may be used to achieve the same rate as known i.i.d. random coding based schemes in Gaussian relay networks, in addition to going above and beyond i.i.d. random codes in certain scenarios. In this work  we demonstrate generic nested lattice code based schemes with computationally more efficient lattice decoding for achieving the Decode-and-Forward and Compress-and-Forward rates in Gaussian relay networks which achieve at least the same rate regions as the corresponding rates achieved using Gaussian random codes. In the longer term,  these strategies may be combined with ones which exploit the linear structure of lattice codes to obtain structured coding schemes for arbitrary Gaussian relay networks. 
 Towards this goal, we illustrate how the DF based lattice scheme may be combined with strategies which exploit the linearity of lattice codes in two examples: the two-way relay channel with direct links and the multiple-access relay channel.
\subsection{Goal and motivation}

 In relay networks, as opposed to single-hop networks,  multiple links or routes exist between a given source and destination.   
 Of key importance in such networks is how to best jointly utilize these links, which -- in a single source scenario -- all carry the same message and effectively cooperate with each other to maximize the number of messages that may be distinguished. 
 The three node relay channel with one source with one message for one destination aided by one relay is the simplest relay network where pure cooperation between the links is manifested. Information may flow along the direct link or along the relayed link; how to manage or have these links cooperate to best transmit this message is key to approaching capacity for this channel. Despite this network's simplicity, its capacity remains unknown in general. However, the following two ``cooperative'' achievability schemes may approach capacity under specific channel conditions:  ``Decode-and-Forward'' (DF) and  ``Compress-and-Forward'' (CF) strategies described in \cite{Cover:1979, xie_kumar_04, Kramer:Gastpar:Gupta, Gammal:LN}.  In the DF scheme, 
 the receiver does not obtain the entire message from the direct link nor the relayed link. Rather, 
 cooperation between the direct and relayed links may be implemented by having the receiver decode a list of possible messages (or codewords)  from the direct link, another independent list from the coherent combination of the direct link and the relayed link, which it then intersects to obtain the message sent\footnote{There are alternative schemes for implementing DF, but the main intuition about combining information along two paths remains the same.}. 
 In the CF scheme of \cite{Cover:1979}, cooperation is implemented by a two-step decoding procedure combined with Wyner-Ziv binning. 
 
Generalizations of these i.i.d. random-coding based DF and CF schemes  have been proposed for general multi-terminal relay networks \cite{ xie_kumar_04,xie2005achievable,lim2011noisy}.  However, in recent years lattice codes have been shown to outperform random codes in several Gaussian multi-source network scenarios due to their  linearity property  \cite{nazer2011compute, bresler2010approximate, jafar:very_strong_IC,Narayanan:2010, Nam:IEEE, nam:2009nested}. As such, 
one may hope to derive a coding scheme which combines the best of both worlds, i.e. incorporate lattice codes with their linearity property into coding schemes for general Gaussian networks. 
%it would potentially be beneficial if they could be incorporated into general Gaussian network coding schemes, 
At the moment we cannot simply replace i.i.d. random codes with lattice codes. That is, while nested lattice codes have been shown to be capacity achieving in the point-to-point Gaussian channel, in relay networks with multiple links/paths and the possibility of cooperation, technical issues need to be solved before one may replace random codes with lattice codes. 
 
  In this paper, we make progress in this direction by demonstrating lattice-based cooperative techniques for a number of relay channels.  One of the key new technical ingredients in the DF schemes is the usage of a lattice list decoding scheme to decode a list of lattice points (using lattice decoding) rather than a single lattice point. 
We  then extend this lattice-list-based cooperative technique and combine it with the linearity of lattice codes to provide gains for some channel conditions over i.i.d. random codes in scenarios with multiple cooperating links. 
\subsection{Related work} 

In showing that lattice codes may be used to replace i.i.d. random codes in Gaussian relay networks, we build upon work on   relay channels, on the existence of  ``good'' nested lattice codes for Gaussian source and channel coding, and on recent advancements in using lattices in multiple-relay and multiple-node scenarios. We outline the most relevant related work. 

{\bf Relay channels.} 
Two of our main results are the demonstration that nested lattice codes may be used to achieve the DF and CF rates achieved by random Gaussian codes \cite{Cover:1979}. For the DF scheme, we mimic the Regular encoding/Sliding window decoding DF strategy \cite{xie_kumar_04, Kramer:Gastpar:Gupta} in which the relay decodes the message of the source, re-encodes it, and then forwards it. The destination combines the information from the source and the relay by intersecting two independent lists of messages obtained from the source and relayed links respectively, over two transmission blocks. 
% It is interesting to notice that the achievable rate of DF is  higher than the capacity of source-relay-destination link but not necessarily higher than the capacity of direct link between source and destination. 
We will re-derive the DF rate, but with lattice codes replacing the random i.i.d. Gaussian codes.  
Of particular importance is constructing and utilizing a  lattice version of the list decoder. 
It is worth mentioning that the concurrent work \cite{Nokleby:2011} uses a different lattice coding scheme to achieve the DF rate in the three-node relay channel which does not rely on list decoding but rather on a careful nesting structure of the lattice codes. 

The DF scheme of \cite{Cover:1979} restricts the rate by requiring the relay to decode the message. The Compress-and-Forward (CF) 
achievability scheme of  \cite{Cover:1979} for the relay channel places no such restriction, as the . 
relay compresses its received signal and forwards the compression index. 
%This compression index includes some useful information about the original message, and does not impose decoding rates at the relay(s). \nrd{WHY DO YOU WANT TO INCLUDE THIS? 
%Here we note that the achievable rate of CF is higher than the capacity of direct link between source and destination but not necessarily than the relayed link. 
In Cover and El Gamal's original CF scheme, the relay's compression technique utilizes a form of binning related to the  Wyner-Ziv rate-distortion problem with decoder side-information \cite{Wyner:1974:WZcoding}. In \cite{Zamir:1998:WZ, Zamir:2002:binning} the authors describe a lattice version of  the noiseless quadratic Gaussian Wyner-Ziv coding scheme, where lattice codes quantize/compress the continuous signal; this will form the basis for our lattice-based CF strategy. 
Another simple structured approach to the relay channel is considered in \cite{khormuji:instantaneous, khormuji:sawtooth} where one-dimensional structured quantizers are used in the relay channel subject to instantaneous (or symbol-by-symbol) relaying. 
%We note that while in some channel models, structured codes  such as lattices may outperform, in terms of achievable rates, classical random codes, that this is not the case for this channel model -- i.e. the nested-lattice code-based CF strategy here and the analogous random coding based one achieve identical rates. We are nonetheless motivated to propose a nested-lattice CF scheme as an alternative to the random coding CF scheme as it will provide a building block for nested-lattice coding schemes for larger networks, in which the structure may play a crucial role. %, which may be combined 

Our extension of the single relay DF rate to a multiple relay DF rate is based on the DF multi-level relay channel scheme presented in \cite{xie2005achievable, xie_kumar_04}. These papers essentially extend the DF rate of \cite{Cover:1979}; the central idea behind mimicking the scheme of \cite{xie_kumar_04,  xie2005achievable}  is the repeated usage of the lattice list decoder, enabling the message to again be decoded from the intersection of multiple independent lists formed at the destination from the different relay - destination links.

{\bf Lattice codes for single-hop channels.}  Lattice codes are known to be ``good'' for almost everything in Gaussian point-to-point, single-hop channels \cite{Erez:2004, erez2005lattices, zamir-lattices}, from both source and channel coding perspectives. 
%Nested lattice codes are known to be both good source and channel codes for Gaussian sources and channels respectively for point-to-point systems \cite{Erez:2004}. 
In particular, nested lattice codes have been shown to be capacity achieving for the AWGN channel,  the AWGN broadcast channel \cite{Zamir:2002:binning} and the AWGN multiple access channel \cite{nazer2011compute}. 
Lattice codes may further be used in achieving the capacity of Gaussian channels with interference or state known at the transmitter (but not receiver) \cite{gelfand} using a lattice equivalent \cite{Zamir:2002:binning} of dirty-paper coding (DPC) \cite{costa}.  The nested lattice approach of  \cite{Zamir:2002:binning} for the dirty-paper channel is extended to dirty-paper networks in \cite{ Philosof:DirtyMAC},
%Kim:2004:muDPC, Akhbari:2009:DirtyRelay, Zaidi:2010:DirtyRelay
 where in some scenarios lattice codes are interestingly shown to outperform random codes. 
In $K\geq 3$-user interference channels, their structure has enabled the decoding of  (portions of) ``sums of interference'' terms \cite{bresler2010approximate, jafar:very_strong_IC, jafarian2010gaussian, ordentlich:K}, allowing receivers to subtract off this sum rather than try to decode individual interference terms in order to remove them. From a source coding perspective,  lattices have been useful in distributed Gaussian source coding when reconstructing a linear function \cite{krithivasan2009lattices, wagner2011distributed}.  

{\bf Lattice codes in multi-hop channels.} 
%Lattice codes have also been used in several  Gaussian channel models with multiple relays. 
The linearity property of lattice codes have been exploited in the Compute-and-Forward  framework \cite{nazer2011compute} for Gaussian multi-hop wireless relay 
networks  \cite{Narayanan:2010, Nam:IEEE, nam:2009nested}. There, intermediate relay nodes decode a linear combination, or 
equation,  of the transmitted codewords or equivalently messages by exploiting the noisy linear combinations provided by the channel.  Through the use of  nested lattice codes, it was 
shown that decoding linear combinations may be done at higher rates than decoding the individual codewords -- one of the key benefits of using structured 
rather than i.i.d. random codewords \cite{KornerMarton:1979}. 
Recently, progress has been made in characterizing the capacity of a single source, single destination, multiple relay network to within a constant gap for arbitrary network topologies \cite{avestimehr2011wireless}. Capacity was initially shown to be approximately achieved via an i.i.d. random quantize-map-and-forward based coding scheme \cite{avestimehr2011wireless} and alternatively, using an extension of CF based techniques termed ``noisy network coding'' \cite{lim2011noisy}. Recently, 
relay network capacity was also shown to be achievable using nested lattice codes for quantization and transmission \cite{Ozgur:2010:lattice}. Alternatively, using a new  ``computation alignment'' scheme which couples lattice codes in a compute-and-forward-like framework \cite{nazer2011compute} together with a signal-alignment scheme reminiscent of ergodic interference alignment \cite{nazer:ergodic}, the work \cite{niesen2011computation} was able to show a capacity approximation for multi-layer wireless relay networks with an approximation gap that is independent of the network depth. % and dependent only on the number of source nodes and the fading statistics.% Lattice codes have also been shown to approximately (to within a finite gap, regardless of the network topology and 
%parameters) achieve the capacity of Gaussian multiple-relay networks in \cite{Ozgur:2010:lattice, niesen2011computation} using lattice based versions of the Quantize-Map-and-Forward schemes \cite{Ozgur:2010:lattice, avestimehr2011wireless} (a Gaussian version of the Noisy Network Coding scheme for discrete memoryless networks \cite{lim2011noisy}) and Compute-and-Forward schemes \cite{nazer2011compute}, respectively.
%There, lattice codewords are used by the transmitters, and lattice-based quantizers or the compute-and-forward framework, respectively, are used at intermediate nodes in the network. 
While lattices have been used in relay networks, the goals so far have mainly been to demonstrate their utility in specific networks in which decode linear combinations of messages is beneficial, or to achieve finite-gap results. 

As a first example of the use of lattices in multi-hop scenarios, we will consider the Gaussian two-way relay channel \cite{Nam:IEEE, Narayanan:2010}.  The two-way relay channel consists of three nodes: two terminal nodes 1 and 2  that wish to exchange their two independent messages through the help of one relay node R. 
When the  terminal nodes employ {\it nested lattice codes}, the sum of their signals is again a lattice point and may be decoded at the relay.
Having the relay send this sum (possibly re-encoded) allows the terminal nodes to exploit their own message side-information to recover the other user's message \cite{Narayanan:2010,Nam:IEEE}. Gains over DF schemes where both terminals transmit simultaneously to the relay 
%(as in full-duplex operation or two-phase MABC protocols \cite{Kim:comparison} for half-duplex nodes) 
stem from the fact that, if using random Gaussian codebooks, the relay will see a multiple-access channel and require the decoding of both individual messages, even though the sum is sufficient. In contrast, no multiple-access (or sum-rate) constraint is imposed by the lattice decoding of the sum. 
An alternative non-DF (hence no rate constraints at relay) yet still structured approach to the two-way relay channel is explored in \cite{khormuji:ISIT, fong}, where simple one dimensional structured quantizers are used for a symbol-by-symbol Amplify-and-Forward based scheme. 
In the two-way relay channel, models with and without direct links between the transmitters have been considered. 
While random coding techniques have been able to exploit both the direct link and relayed links, lattice codes have only been used in channels without direct links. Here, we will present a lattice coding scheme which will combine the linearity properties, leading to less restrictive decoding constraints at the relay, with direct-link information, allowing for a form of lattice-enabled two-way cooperation.
%this form of two-way cooperation which would also combine the multiple-access benefits of lattice codes had not been previously addressed and will be tackled here.
% \nrd{Lattices have equally found their place in achievability schemes for the multi-way relay channel \cite{gunduz2010multi, sezgin2010divide}, where groups of users wish to exchange messages through a relay. In particular, lattice codes are combined with random codes, superposition and successive decoding schemes to exploit gains similar to those seen for two-way relay channels. We will not address multi-way channels in this work.  Not sure about these works.... not real contribution } 
% \nrd{More citations on two-way with random codes?}

A second example in which we will combine the linearity property with direct-link cooperation is the Gaussian multiple-access relay channel  \cite{sankar2007offset, Kramer:Gastpar:Gupta, woldegebreal2007multiple}. In this model, two sources wish to communicate independent messages to a common destination with the help of a single relay. 
As in the Gaussian two-way relay channel, the relay may choose to decode the sum of the codewords  using lattice codes, rather than the individual codewords (as in random coding based DF schemes), which it would forward to the destination. The destination would combine this sum with direct-link information (cooperation). As in the two-way relay channel, decoding the sum at the relay eliminates the multiple access sum-rate constraint. % and may thus enlarge the achievable rate region. 
%\nrd{NEED TO ADD THE REFERENCES! \cite{sankar2007offset, Kramer:Gastpar:Gupta, woldegebreal2007multiple}}
%With a similar idea in Gaussian two-way relay channel,  the relay can choose to decode the combination/sum of codewords by using lattice codes and forwards it. Similarly as the two-way relay case, this eliminate the multiple access sum constraint at the relay and thus may give a better achievable rate region.     \nrd{Let's put more background here on this MARC channel model as you fill in the work in this area.... I don't know the others}  

% Our contributions
\subsection{Contributions and outline}
Our contributions center around demonstrating that lattices may achieve the same rates as currently known Gaussian i.i.d. random coding-based achievability schemes for relay networks. While we do not prove this sweeping statement in general, we make progress towards this goal along the following lines:
\begin{itemize}
\item {\bf Preliminaries and Lattice List Decoder:} In Section \ref{sec:prelim} we briefly outline lattice coding preliminaries and notation before outlining key technical lemmas that will be needed,  including the central contribution of Section \ref{sec:prelim} -- the proposed Lattice List Decoding technique in Theorem \ref{thm:list}.
\item {\bf Decode-and-Forward, single source:} This Lattice List Decoding technique is used to show that nested lattice codes may achieve the Decode-and-Forward rate for the Gaussian relay channel achieved by i.i.d. random Gaussian codes  \cite{Cover:1979}  in Section \ref{sec:DF1}, Theorem \ref{thm:DF}. We furthermore extend this result to the general single source, multiple relay Gaussian channel in Theorem \ref{thm:DFm}.
\item  {\bf Decode-and-Forward, multiple source including two-way relay and multiple access relay channels:} In Section \ref{sec:DFm} relays  decode and forward combinations of messages as in the Compute-and-Forward framework, which is combined with direct link side-information at the destination. In particular, we present lattice-based achievable rate regions for the Gaussian two-way relay channel with direct links in Theorem \ref{thm:two-way}, and the Gaussian multiple-access relay channel in Theorem \ref{thm:MARC}.
\item {\bf Compress-and-Forward, single source:} In Section \ref{sec:CF1}, we revisit our goal of showing that lattice codes may mimic the performance of i.i.d. Gaussian codes in the relay channel by demonstrating a lattice code-based Compress-and-Forward scheme  which achieves the same rate as the CF scheme in  \cite{Cover:1979} evaluated for i.i.d. Gaussian codebooks. The proposed lattice CF scheme is based on a variation of the  lattice-based Wyner-Ziv scheme of  \cite{Zamir:1998:WZ, Zamir:2002:binning}, as outlined in Theorem \ref{thm:LCF}. We note that lattices have been shown to achieve the Quantize-Map-and-Forward rates for general relay channels using Quantize-and-Map scheme (similar to the CF scheme) which simply quantizes the received signal at the relay and re-encodes it without any form of binning / hashing in \cite{Ozgur:2010:lattice}; the contribution is to show an alternative lattice-coding based achievability scheme which employs computationally more efficient lattice decoding. %, demonstrating that lattices may be used to achieve CF-based rates in a number of fashions.
%\item {\bf DF and CF in multi-hop channels with state:} Finally, in Section \ref{sec:state} we show an interesting application of the derived lattice-based DF and CF schemes -- that of transmitting signals in multi-hop networks with state known at some of the nodes. Central results are reported in Theorems \ref{thm:DFstate} and \ref{thm:CFstate}.  \nrd{Here we see that.... what is main message of this part?}
\end{itemize}
%{\bf Application 1: Decode-and-Forward}
%
%Application 1a: DF rate for 3 user relay channel using lattices (Allerton, DONE)
%
%Application 1b: DF rate for general multi-hop relay channel (TO DO)
%
%\medskip
%
%{\bf Application 2: DF in multiple-source: combining compute-and-forward and DF}
%
%Application 2a: Two-way relay channel with direct links (Allerton, DONE)
%
%Application 2b: Multiple access relay channel
%
%{\bf Application 3:  interesting application of list-forwarding}
%
%
%{\bf Application 4: CF rates in one-way relay channels? }
%
%\section{Single source Decode and Forward}
%\label{sec:DF1}
%
%\section{Multi-source Decode and Forward -- combining compute-and-forward and DF}
%\label{sec:DFm}
%
%\section{Single source Compress and Forward}
%\label{sec:CF1}
%
%\section{Utility of lattices in DF and CF networks with state}
%\label{sec:state}
%
%% Paper outline
%\subsection{Outline}
%%%%%%%%%%%%%%%%%%%%%%%%%%%%%%%%%%%%%%%%%%%%%%%%%%%%%%%%%%%
\section{Preliminaries, Notation, and the Lattice List Decoder}
\label{sec:prelim}

We introduce our notation for lattice codes, nested lattice codes, and nested lattice chains and present several existing lemmas. We next present the new Lattice List Decoder (Theorem \ref{thm:list})  in which the decoder, instead of outputting a single estimated codeword, outputs a list which contains the correct one with high probability. The lemma bounds the number of points in the list.  The unique-decoding equivalent of the Lattice List Decoder Theorem \ref{thm:list} is provided in Lemma \ref{lem:unique}. % based on the decoding procedure.

\subsection{Lattice codes}

Our notation for (nested) lattice codes for transmission over AWGN channels  follows that of  \cite{Zamir:2002:binning, nam:2009nested}; comprehensive treatments may be found in \cite{loeliger1997averaging, Zamir:2002:binning,  Erez:2004} and in particular \cite{zamir-lattices}.  An $n$-dimensional lattice $\Lambda$ is a discrete subgroup of Euclidean space $\mathbb{R}^n$ with Euclidean norm $|| \cdot ||$ under vector addition and may be expressed as all integral combinations of basis vectors ${\bf g_i}\in {\mathbb R}^n$
\[ \Lambda = \{ \lambda = G \; {\bf i}: \; {\bf i}\in \mathbb{Z}^n\},\]
for $\mathbb{Z}$ the set of integers, $n\in {\mathbb Z}_+$, and $G := [{\bf g_1} | {\bf g_2}| \cdots {\bf g_n}]$ the $n\times n$ generator matrix corresponding to the lattice $\Lambda$. We use bold ${\bf x}$ to denote  column vectors, ${\bf x}^T$ to denote the transpose of the vector ${\bf x}$. All vectors are generally in ${\mathbb R}^n$ unless otherwise stated, and all logarithms are base 2. Let ${\bf 0}$  denote the all zeros vector of length $n$, ${\bf I}$ denote the $n\times n$ identity matrix, and ${\cal N}(\mu,\sigma^2)$ denote a Gaussian random variable (or vector) of mean $\mu$ and variance $\sigma^2$.  Define $C(x): = \frac{1}{2}\log_2\left(1+x \right)$. Further define

$\bullet$ The {\it nearest neighbor lattice quantizer} of $\Lambda$ as \[ Q({\bf x}) = \arg \min_{\lambda\in \Lambda} ||{\bf x}-\lambda||;\]

$\bullet$ The $\mod \Lambda$ operation as ${\bf x} \mod \Lambda : = {\bf x} - Q({\bf x})$;

$\bullet$ The {\it fundamental Voronoi region of $\Lambda$} as the points closer to the origin than to any other lattice point \[\mathcal{V}:= \{{\bf x}:Q({\bf x}) = {\bf 0}\},\]
which is of volume $V: = \mbox{Vol}({\mathcal V})$ (also sometimes denoted by $V(\Lambda)$ or $V_i$ for lattice $\Lambda_i$);

$\bullet$ The {\it second moment per dimension of a uniform distribution over ${\mathcal V}$} as
\[ \sigma^2(\Lambda) : = \frac{1}{V}\cdot \frac{1}{n} \int_{\mathcal V} ||{\bf x}||^2 \; d{\bf x};\]

$\bullet$ The {\it normalized second moment} of a lattice $\Lambda$ of dimension $n$ as
\[ G(\Lambda) : = \frac{\sigma^2(\Lambda)}{V^{2/n} };\]

$\bullet$ A sequence of $n$-dimensional lattices  $\Lambda^{(n)}$ is said to be {\it Poltyrev good} \cite{zamir_q, Erez:2004, nam:2009nested} (in terms of channel coding over the AWGN channel) if, for ${\bf \overline{Z}}\sim {\cal N}(0,\overline{\sigma}^2{\bf I})$ and $n$-dimensional vector, we have
\[ \Pr\{{\bf \overline{Z}} \notin {\mathcal V}^{(n)}\} \leq e^{-n (E_P(\mu) - o_n(1) )} , \]
 which upper bounds the error probability of nearest lattice point decoding when using lattice points as codewords in the AWGN channel. Here $E_p(\mu)$ is the {\it Poltyrev exponent }\cite{Erez:2004, poltyrev1994coding} which is given as 
\[ E_p(\mu)= \left\{ \begin{array}{rcl} \frac{1}{2} [(\mu -1) - \log \mu ], &  1< \mu \leq 2 \\ 
\frac{1}{2} \log \frac{e\mu}{4} &  2\leq\mu \leq4, \\ 
 \frac{\mu}{8} &  \mu \geq 4.
\end{array}\right.\]
and $\mu$ is volume-to-noise ratio (VNR)  defined as \cite{erez2005lattices}  \[\mu: = \frac{(\mbox{Vol}({\mathcal V}))^{2/n}}{2\pi e\overline{\sigma}^2} + o_n(1).\]
Since $E_p(\mu) > 0$ for $\mu > 1$, a necessary condition for the reliable decoding of a single point is $\mu>1$,  thereby relating the size of the fundamental Voronoi region (and ultimately how many points one can transmit reliably) to the noise power, aligning well with our intuition about Gaussian channels.

%Since $E_p(\mu) > 0$ for $\mu > 1$, a necessary condition for the reliable decoding of a single point is $\mu>1$,  thereby relating the size of the fundamental region (and ultimately how many points one can transmit reliably) to the noise power, aligning well with our intuition about Gaussian channels. 

$\bullet$ {A sequence of $n$-dimensional lattices $\Lambda^{(n)}$ is said to be {\it Rogers good} \cite{rogers}} if 
\[ \lim_{n \rightarrow \infty} \frac{r_{cov}^{(n)}}{r_{eff}^{(n)}} = 1,  \]
where the covering radius $r_{cov}^{(n)}$ is the radius of the smallest sphere which contains the fundamental Voronoi region of $\Lambda^{(n)}$, and  
%\[ r^{cov} = \min\{r : \mathcal{V} \subseteq r\mathcal{B} \}. \]
%where $\mathcal{B}$ is an $n$-dimensional unit sphere centered at the origin.\\
the effective radius $r_{eff}^{(n)}$ is the radius of a sphere of the same volume as the fundamental Voronoi  region of $\Lambda^{(n)}$.
%, i.e.
%\[ r^{eff} = \left( \frac{V}{\mbox{Vol}(\mathcal{B}) } \right)^{\frac{1}{n}}. \]
%\nrd{Do we need asymp. eff for sphere covering (then need to define it, but I don't think we need it?) Could mention quantization good though, as we may need that..... }

$\bullet$ A sequence of $n$-dimensional lattices $\Lambda^{(n)}$ is said to be {\it good for mean-squared error quantization} if
\[ \lim_{n\rightarrow \infty} G(\Lambda^{(n)}) = \frac{1}{2\pi e};\]
It may be shown that if a sequence of lattices is Rogers good, that it is also good for mean-squared error quantization \cite{Erez:2005}. 
Furthermore, for a Rogers' good lattice $\Lambda$, it may be shown that $\sigma^2(\Lambda)$ and $V = \mbox{Vol}({\mathcal V})$ are in one-to-one correspondence (up to a constant) as in \cite[Appendix A]{nam:2009nested}; hence for a Rogers good lattice we may define either its second moment per dimension or its volume. This will be used in generating nested lattice chains.

%\nrd{Yiwei's: If a lattice $\Lambda$ is Rogers good, this implies $\Lambda$ is asymptotically efficient for sphere covering and $\Lambda$ is also good for mean squared error quantization, i.e.
%\[ G(\Lambda) = \frac{1}{2\pi e}. \]}
%where the normalized second moment of $\Lambda$ $G(\Lambda)$ is defined as 
%\[ G(\Lambda) = \frac{\sigma^2(\Lambda)}{V^{2/n} }. \]}

{Finally, we include a statement of the useful ``Crypto lemma'' for completeness.}
{\begin{lemma}{{\it Crypto lemma \cite{Erez:2004,Forney:ShannonWinener:2003}.}}
\label{lem:crypto}
For any random variable ${\bf x}$ distributed over the fundamental region $\mathcal{V}$ and statistically independent of ${\bf U}$, which is  uniformly distributed over $\mathcal{V}$, 
 $({\bf  x} + {\bf U}) \mod \Lambda$ is independent of ${\bf x}$ and uniformly distributed over $\mathcal{V}$.
\end{lemma}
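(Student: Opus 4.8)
The plan is to condition on the value of ${\bf x}$ and show that the conditional law of $({\bf x}+{\bf U})\bmod\Lambda$ is the uniform distribution on $\mathcal{V}$ for \emph{every} fixed value of ${\bf x}$. Since the resulting conditional law then does not depend on the conditioning value, it follows at once that the output is both uniform on $\mathcal{V}$ and independent of ${\bf x}$. Concretely, fix ${\bf x}={\bf a}$. Because ${\bf x}$ and ${\bf U}$ are independent, conditioning on ${\bf x}={\bf a}$ leaves ${\bf U}$ uniform on $\mathcal{V}$, so it suffices to prove that the map $T_{\bf a}:\mathcal{V}\to\mathcal{V}$ defined by $T_{\bf a}({\bf u})=({\bf a}+{\bf u})\bmod\Lambda$ pushes the uniform distribution on $\mathcal{V}$ forward to the uniform distribution on $\mathcal{V}$.

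The heart of the argument is that $T_{\bf a}$ is a \emph{measure-preserving bijection} of $\mathcal{V}$. Bijectivity follows from the distributive identity $\left({\bf y}_1+({\bf y}_2\bmod\Lambda)\right)\bmod\Lambda=({\bf y}_1+{\bf y}_2)\bmod\Lambda$, which is immediate from $\bmod\Lambda={}\cdot{}-Q(\cdot)$ together with $Q({\bf y}+\lambda)=Q({\bf y})+\lambda$ for all $\lambda\in\Lambda$; this identity gives $T_{-{\bf a}}\circ T_{\bf a}=\mathrm{id}$ on $\mathcal{V}$, so $T_{-{\bf a}}$ is the inverse of $T_{\bf a}$. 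For measure preservation I would partition $\mathcal{V}$ into the (at most countably many) cells on which $Q({\bf a}+{\bf u})$ takes a constant value $\lambda$; on each such cell $T_{\bf a}$ acts as the pure translation ${\bf u}\mapsto {\bf u}+{\bf a}-\lambda$, which preserves Lebesgue measure, and since $T_{\bf a}$ is a bijection onto $\mathcal{V}$ the images of these cells tile $\mathcal{V}$ up to a set of measure zero. Hence $T_{\bf a}$ preserves volume, and the pushforward of the constant density $1/V$ on $\mathcal{V}$ is again $1/V$ on $\mathcal{V}$.

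Since the conditional density of $({\bf x}+{\bf U})\bmod\Lambda$ given ${\bf x}={\bf a}$ equals $1/V$ on $\mathcal{V}$ for every value ${\bf a}$, the conditional law is constant in ${\bf x}$, which is precisely the statement that the output is independent of ${\bf x}$ and uniformly distributed on $\mathcal{V}$. The main technical nuisance I expect is the careful bookkeeping of the measure-zero cell boundaries in the partition argument. This can be sidestepped entirely by a Fourier argument over the dual lattice $\Lambda^*$: for any ${\bf l}\in\Lambda^*$ one has $\langle {\bf l},\lambda\rangle\in\mathbb{Z}$ for all $\lambda\in\Lambda$, so $E\!\left[e^{2\pi i\langle {\bf l},({\bf a}+{\bf U})\bmod\Lambda\rangle}\right]=e^{2\pi i\langle {\bf l},{\bf a}\rangle}\,E\!\left[e^{2\pi i\langle {\bf l},{\bf U}\rangle}\right]$, and since ${\bf U}$ is uniform on $\mathcal{V}$ this last expectation vanishes for ${\bf l}\neq {\bf 0}$ and equals $1$ for ${\bf l}={\bf 0}$. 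As the Fourier coefficients indexed by $\Lambda^*$ determine a distribution supported on $\mathcal{V}$, this shows in a single step that the output is uniform on $\mathcal{V}$ and carries no dependence on ${\bf a}$.
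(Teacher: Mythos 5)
Your proof is correct. Note that the paper itself offers no proof of this lemma: it is quoted ``for completeness'' with citations to \cite{Erez:2004,Forney:ShannonWinener:2003}, so there is no in-paper argument to compare against. Your conditioning argument is essentially the standard proof found in those references: fix ${\bf x}={\bf a}$, show that $T_{\bf a}({\bf u})=({\bf a}+{\bf u})\bmod\Lambda$ is a measure-preserving bijection of $\mathcal{V}$, and conclude that the conditional law of the output is uniform on $\mathcal{V}$ for every ${\bf a}$, hence independent of ${\bf x}$. (Equivalently, and this is the viewpoint Forney emphasizes, $\mathcal{V}$ under mod-$\Lambda$ addition is the compact group $\mathbb{R}^n/\Lambda$, the uniform distribution is its Haar measure, and Haar measure is translation invariant.) Your cell-partition bookkeeping is a sound elementary substitute for the Haar-measure appeal: the cells $\mathcal{V}\cap(\lambda-{\bf a}+\mathcal{V})$ are in fact only finitely many since $\mathcal{V}$ is bounded, $T_{\bf a}$ restricts to a translation on each, and bijectivity forces the images to tile $\mathcal{V}$ up to boundary sets of measure zero. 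The dual-lattice Fourier argument you give as an alternative is also valid and cleanly avoids the boundary issues, since $\langle {\bf l},Q(\cdot)\rangle\in\mathbb{Z}$ kills the quantizer term inside the character and distributions on $\mathbb{R}^n/\Lambda$ are determined by their character expectations; either route fully establishes the lemma, also for ${\bf a}$ not restricted to $\mathcal{V}$, which is all the paper ever uses.
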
}

%\vspace{-0.3cm}
\subsection{Nested lattice codes}
\label{subsec:nested}

Consider two lattices $\Lambda$ and $\Lambda_c$ such that $\Lambda \subseteq \Lambda_c$ with fundamental regions ${\cal V}, {\cal V}_c$ of volumes $V, V_c$ respectively. Here $\Lambda$ is termed the {\it coarse} lattice which is a sublattice of  $\Lambda_c$,  the {\it fine} lattice, and hence $V \geq V_c$.  When transmitting over the AWGN channel, one may use the set  $ \mathcal{C}_{\Lambda_c, {\cal V}} = \{ \Lambda_c \cap \mathcal{V} \} $ as the codebook. The coding rate $R$ of this {\it nested  $(\Lambda, \Lambda_c)$ {lattice pair}} is defined as
\[ R = \frac{1}{n} \log |\mathcal{C}_{\Lambda_c, {\cal V}}| = \frac{1}{n} \log \frac{V}{V_c},\] where $\rho = |\mathcal{C}_{\Lambda_c, {\cal V}}|^{\frac{1}{n}} = \left( \frac{V}{V_c} \right)^{\frac{1}{n}}$ is the nesting ratio of the {nested lattice pair}.  It was shown that there exist nested lattice pairs which achieve the  capacity of the AWGN channel \cite{Erez:2004}.
% and have been used in many contexts since then \cite{nazer2011compute,Zamir:2002:binning, he2008providing, krithivasan2009latticessan2007lattices,, yildiz2007differential, nam:2009nested,  Nam:IEEE}.

%\vspace{-0.3cm}
\subsection{Nested lattice chains}
\label{subsec:chains}

In the following, we will use an extension of nested lattice codes termed nested lattice chains as in \cite{nam:2009nested,  Nam:IEEE}, and shown  in Figure \ref{fig:chain} (chain of length 3).
%\nrd{Yiwei: is this correct - is it really a generalization and who first introduced this term? Yes, I checked it again and it is a generalization. Nam first used the term 'lattice chain'.}
%Before proceeding to the proposed encoding and decoding using list codes for nested lattice codes, w
We first re-state  a slightly modified version of \cite[Theorem 2]{nam:2009nested} on the existence of good nested lattice chains, of use in our achievability proofs.

\smallskip

%\nrd{Yiwei: don't we need longer chains than this? Could you include a more general statement (for arbitrary number of lattices K as well as the proof as your statement is not exactly the same as Nam's? We may not include it, but I would like to see your proof.}
%\nrd{Made it look more like Nam}
\theorem{{\it Existence of ``good'' nested lattice chains (adapted from Theorem 2 of \cite{nam:2009nested}).}
\label{thm:nam}
For any $P_1 \geq P_2 \geq \dots \geq P_{K} >0$ and $\gamma > 0$, there exists a sequence of $n$-dimensional lattice $\Lambda_1 \subseteq \Lambda_2 \subseteq \dots \subseteq \Lambda_{K} \subseteq \Lambda_C$ ($\mathcal{V}_1 \supseteq \mathcal{V}_2 \supseteq \dots \supseteq \mathcal{V}_K \supseteq \mathcal{V}_C$) satisfying:\\
a) $\Lambda_1$, $\Lambda_2$, $\dots$, $\Lambda_{K}$ are simultaneously Rogers-good and and Poltyrev-good while $\Lambda_C$ is Poltyrev-good. \\
b) For any $\delta > 0$, $P_i-\delta \leq \sigma^2(\Lambda_i) \leq P_i $, $1 \leq i \leq K$ for sufficiently large n.\\
c) The coding rate associated with {the nested lattice pair }$\Lambda_{K} \subseteq \Lambda_{C}$ is $R_{K,C} = \frac{1}{n} \log \frac{V_{K}}{V_C} = \gamma + o_n(1)$ where $o_n(1) \rightarrow 0$ as $n \rightarrow \infty$.
 Moreover, for $1 \leq i  \, <  \,  j \leq K$, the coding rate of the nested lattice pair $\Lambda_i \subseteq \Lambda_j$ is $R_{i, j} := \frac{1}{n} \log \frac{V_i}{V_j} = \frac{1}{2} \log \frac{P_i}{P_j} + o_n(1)$ and $R_{i,C} = R_{i,K} + R_{K,C} = \frac{1}{2} \log \frac{P_i}{P_{K}} + \gamma + o_n(1)$ ($1\leq i \leq K-1$). %Thus, we can choose any arbitrary nest rates as we can choose $P_1, P_2, \dots, P_{K-1}$ and $\beta$. 
\begin{proof}
From Theorem 2 of \cite{nam:2009nested} there exists a nested lattice chain which satisfies the properties a) and b) and for which $R_{K,C} = \gamma + o_n(1)$, and $R_{i,C} = \frac{1}{n} \log \frac{V_i}{V_C} = R_{K,C} + \frac{1}{2}\log \frac{P_i}{P_K} + o_n(1)$. Now notice that $R_{i,j} = \frac{1}{n} \log \frac{V_i}{V_j} = \frac{1}{n} \log \frac{V_i}{V_C} - \frac{1}{n} \log \frac{V_C}{V_j} = R_{i,C} - R_{j,C} = \frac{1}{2}\log\frac{P_i}{P_j} + o_n(1)$.
\end{proof}
%In the lattice list decoder described in Section \ref{subsec:list}, we will use the nested lattice chain $\Lambda \subseteq \Lambda_s \subseteq \Lambda_c$ in which $\Lambda$ and $\Lambda_s$ are both Rogers-good and Poltyrev-good while $\Lambda_c$ is Poltyrev-good.
%There exists a sequence of $n$-dimensional lattices $\Lambda \subseteq \Lambda_s \subseteq \Lambda_c $ ($\mathcal{V}  \supseteq \mathcal{V}_s \supseteq \mathcal{V}_c $) satisfying:  \\
%a) $\Lambda$ and $\Lambda_s$ are simultaneously Rogers-good \cite[pg.7]{nam:2009nested} and Poltyrev-good while $\Lambda_c$ is Poltyrev-good.\\
%b) For any $\epsilon > 0, P-\epsilon \leq \sigma^2(\Lambda) \leq P $. \\
%c) The three rates $ R = \frac{1}{n} \log \frac{V}{V_c} $, $R_1 = \frac{1}{n} \log \frac{V}{V_s} $, $R_2 = \frac{1}{n} \log \frac{V_s}{V_c}$ may  approach any values as $n\rightarrow \infty$, with $R = R_1 + R_2 $, i.e. we have two degrees of freedom in choosing $V,V_s,V_c$ (or equivalently  their second moments as all lattices are Poltyrev good).
%( $R$ is the coding rate transmitted in the channel, while $R_1$ and $R_2$ are the fake rates used in the following theorems. ) \\
}

\begin{figure}
%\begin{figure}
\centering
\includegraphics[width=6cm]{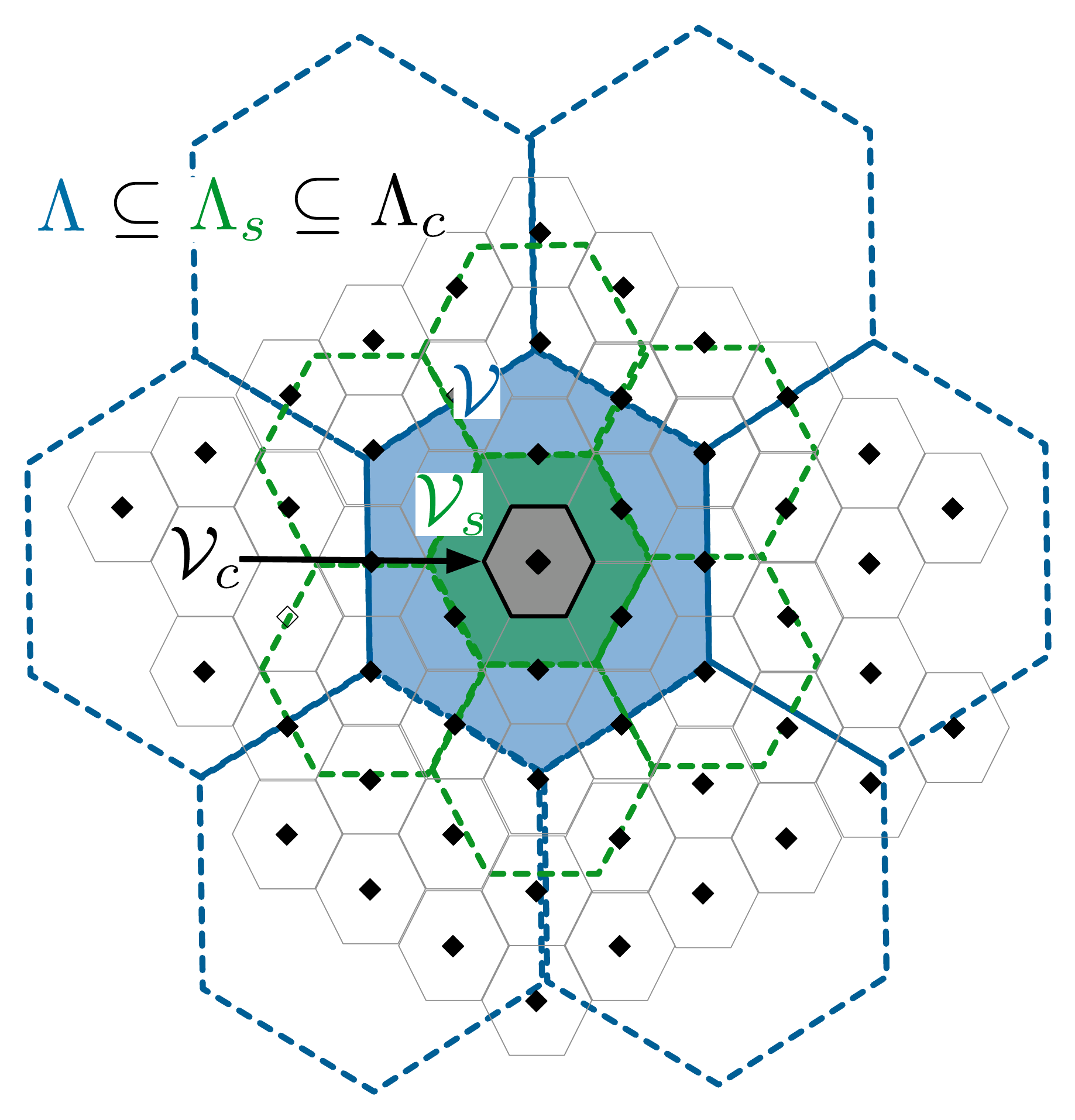}
\caption{A lattice chain $\Lambda\subseteq\Lambda_s \subseteq \Lambda_c$ with fundamental regions  $\mathcal{V}  \supseteq \mathcal{V}_s \supseteq \mathcal{V}_c $ of volumes $V\geq V_s\geq V_c$. Color is useful.}% in understanding this figure
\label{fig:chain}
%\end{figure}
\end{figure}

\subsection{A lattice list decoder}
\label{subsec:list}

List decoding here refers to a decoding procedure in which, instead of outputting a single codeword corresponding to a single message, the decoder outputs a list of possible codewords which includes the correct (transmitted) one {with high probability}. Such a decoding scheme is useful in cooperative scenarios when a message is transmitted above the capacity of a given link (and hence the decoder would not be able to correctly distinguish the true transmitted codeword from that  given link), and is combined with additional information at the receiver to decode a single message point from within the list.
%\nrd{ Naturally, this may result in larger error rates, which may be reduced to the single lattice point decoder described in the previous section by changing decoding parameters or combining with other decoding techniques (e.g. intersecting the list with another suitable list) Natasha: I don't understand this sentence}.
We present our key theorem next which bounds the list size for a lattice list decoder which will decode a list which contains the correct message with high probability. 

%now formalize what we mean by a ``lattice list decoder'' nextfor transmitting a message of rate $R$ over the channel ${\bf Y}={\bf X}+{\bf Z}$ where ${\bf Z}$ is noise which is a mixture of Gaussian noise and independent noises which are
%uniformly distributed over fundamental Voronoi  regions of Rogers-good lattices, and where ${\bf X}$ denotes an $n$-dimensional input vector, subject to the average power constraint $P$. %$\frac{1}{n}\sum_{t=1}^n (X^{(t)})^2 \leq P$ where $X^{(t)}$ is the channel input at channel use $t$.
%We first present our encoders and decoders.

%"In computer science, particularly in coding theory, list decoding is alternative to unique decoding of error correcting codes for larger error rates. The notion was proposed by Elias in 1950s. The main idea behind list decoding is that decoding algorithm instead of outputting a single possible message outputs a list of possibilities one of which is correct. This allows for handling a great number of errors than that allowed by unique decoding" quote from wikipedia (List-decoding)\\

\smallskip

%\nrd{Yiwei: there is a lot of repetition in the unique and list decoding schemes. Need it all be repeated? I am not sure about this, but I think it is better to give a complete list decoding scheme for nested lattice rather than just a list decoding decoder,  though the encoding part is same as unique decoding}

\smallskip
\theorem{{\it Lattice list decoding in mixed noise.}
\label{thm:list}
Consider the channel ${\bf Y} = {\bf X}+{\bf Z}$, subject to input power constraint $\frac{1}{n}E[{\bf X}^T{\bf X}]\leq P$, where ${\bf Z} = {\bf Z_G} + \sum_{i=1}^L {\bf Z_{i}}$ is noise which is a mixture of Gaussian noise ${\bf Z_G} \sim {\cal N}({\bf 0}, \sigma_G^2 {\bf I})$ and independent noises ${\bf Z_{i}}$ which are
uniformly distributed over fundamental Voronoi  regions of Rogers-good lattices with second moments $P_{i}$. Thus, ${\bf Z}$ is of equivalent total variance $N = \frac{1}{n}\mathbb{E}({\bf Z}^T {\bf Z}) = \sigma_G^2 + \sum_{i=1}^L P_i$.
For any $|L| > 2^{n(R - C(P/N))}, \, \delta > 0$, $R > C(P/N)$,  and $n$ large enough, there exists
a chain of nested lattices such that the lattice list decoder can produce
a list of size $|L|$, which does not contain the correct codeword with
probability smaller than $\delta$. 
%Using the encoding and decoding scheme defined above, the receiver decodes a list of codewords of size $2^{n(R-C(P/N))}$ (\nrd{we assume $R > C(P/N)$}) with probability of error $P_e \rightarrow 0$ as $n \rightarrow \infty $.
}

\begin{proof}
%\nrd{Put all encoding etc in here}.

{\bf Encoding:} We consider a good nested lattice chain  $\Lambda\subseteq\Lambda_s \subseteq \Lambda_c$ as in Figure \ref{fig:chain}and  Theorem \ref{thm:nam}, in which $\Lambda$ and $\Lambda_s$ are both Rogers good and Poltyrev good while $\Lambda_c$ is Poltyrev good. We define the coding rate $R = \frac{1}{n} \log \frac{V}{V_c}$ and the nesting rate $R_1 = \frac{1}{2} \log \frac{V}{V_s}$.  % $\Lambda$ with fundamental Voronoi  region ${\cal V}$ of volume $V$ and rate $R$.
Each message $w\in \{1,\dots,2^{nR}\}$ is one-to-one mapped to the lattice point ${\bf t}(w) \in \mathcal{C}_{\Lambda_c, {\cal V}}= \{ \Lambda_c \cap \mathcal{V} \} $, and the transmitter sends ${\bf X} = ({\bf t}(w) - {\bf U}) \mod \Lambda$, where ${\bf U}$ is an $n$-dimensional dither signal (known to the encoder and decoder) uniformly distributed over $\mathcal{V}$.

{\bf Decoding:} Upon receiving {\bf Y}, the receiver computes
\begin{align}
{\bf Y^\prime} &= (\alpha {\bf Y} + {\bf U}) \mod \Lambda  \nonumber \\
&= ({\bf t}(w) - (1-\alpha){\bf X} + \alpha {\bf Z}) \mod \Lambda \nonumber \\
&= ({\bf t}(w) + (- (1-\alpha){\bf X} + \alpha {\bf Z} )\mod \Lambda ) \mod \Lambda \nonumber \\
&= ({\bf t}(w) + {\bf Z^\prime}) \mod \Lambda, \label{eq:YY}
\end{align}
for $\alpha\in \mathbb{R}$. We choose $\alpha$ to be the MMSE coefficient $\alpha = \frac{P}{P+N}$ and note that the equivalent noise ${\bf Z^\prime} = (-(1-\alpha){\bf X} + \alpha {\bf Z} ) \mod \Lambda$ is independent of ${\bf t}(w)$. %and $\Lambda_c$.
The receiver decodes the {\it list}  of  messages
\begin{equation}
L_{S-D}^{w}({\bf Y}) := \{w | \; {\bf {t}}(w) \in  S_{\mathcal{V}_s,\Lambda_c} ({\bf Y^\prime}) \mod \Lambda\},
\label{eq:listdecoding}\end{equation}
where  \[ S_{\mathcal{V}_s,\Lambda_c}({\bf Y'}) := \bigcup_{\lambda_c \in \Lambda_c} \{\lambda_c|\lambda_c  \in ({\bf Y^\prime} + \mathcal{V}_s)\},\] %\{ \Lambda_c \cap (\overline{\bf x}+\mathcal{V}_s) \},$
is the set of lattice points $\lambda_c \in \Lambda_c$ inside ${\cal V}_s$ centered at the point ${\bf Y'}$ as shown in Figure \ref{fig:encdec}. 
\begin{remark}
The notation used for the list of messages, i.e. $L_{S-D}^{w}({\bf Y})$ should be understood as follows: the $S-D$ subscript is meant to denote the transmitter $S$ and the receiver $D$, the dependence on ${\bf Y}$ (rather than ${\bf Y'}$) is included, though in all cases we will make the analogous transformation from ${\bf Y}$ to ${\bf Y'}$ as in \eqref{eq:YY} (but for brevity do not include this in future schemes), and the superscript $w$ is used to recall what messages are in the list,  useful in multi-source and Block Markov schemes. 
%This notation will aid in understanding the list decoding in the subsequent schemes. 
\end{remark}

\smallskip

{\bf Probability of error for list decoding:} Pick $\delta>0$. In decoding a list, we require that the correct, transmitted codeword ${\bf t}(w)$ lies in the list with high probability as $n\rightarrow \infty$, i.e. the probability of error is (for $n$ the blocklength or dimension of the lattices) $P_{n,e} : = \Pr\{w \notin L_{S-D}^{w}({\bf Y}) | w\mbox{ sent}\}$, which should be made less than $\delta$ as $n\rightarrow \infty$. This is easy to do with large list sizes; we bound the list size next. %. In the following we bound the list size.
The following Lemma allows us to more easily bound the probability of list decoding error.

\begin{figure*}
%\begin{figure}
\centering
\includegraphics[width=14cm]{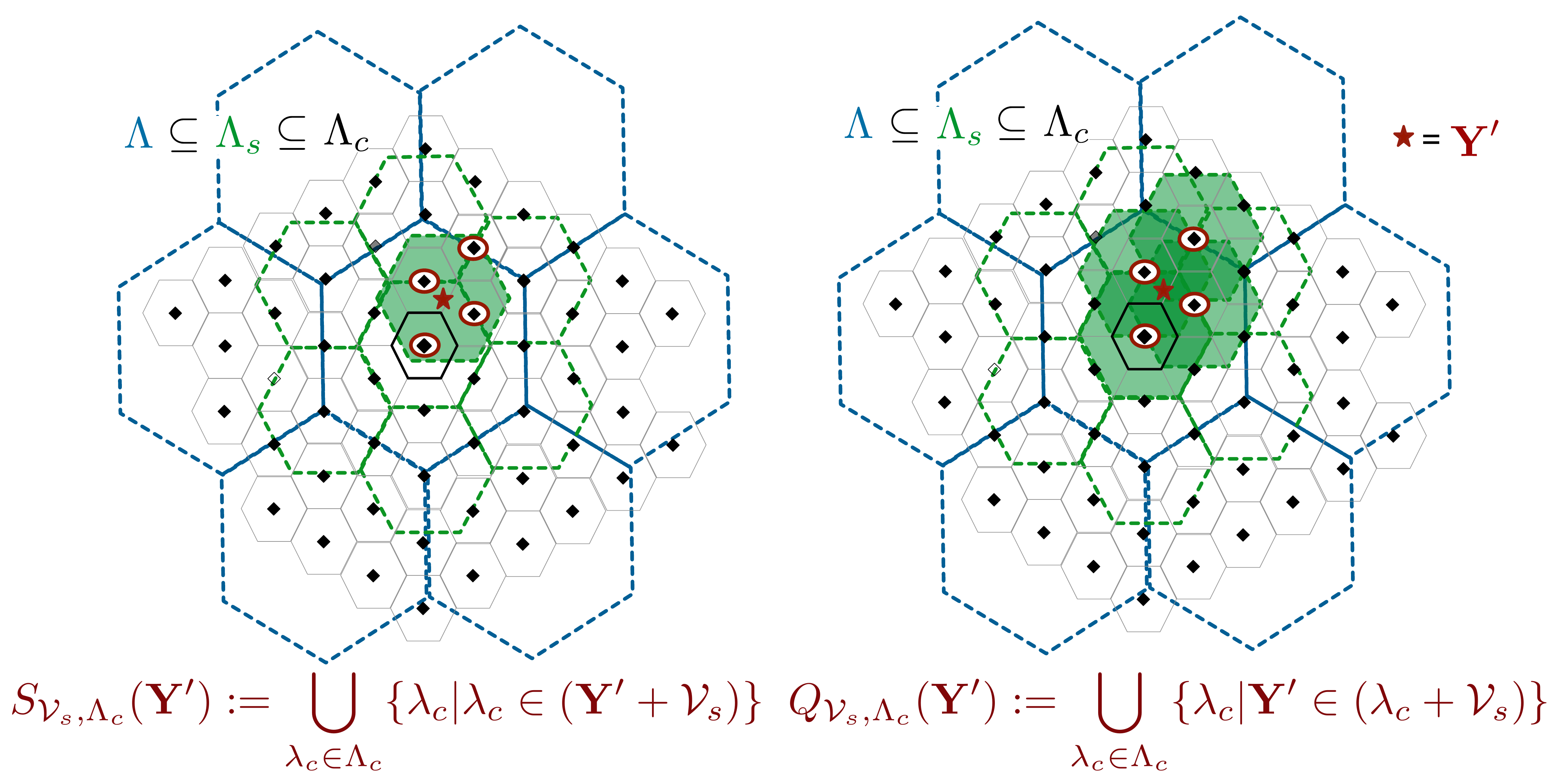}
\vspace{-0.2cm}
\caption{The two equivalent lists, in this example consisting of the four points encircled in red. 
%The correct message  lattice point is the center. 
Color is useful.}% in understanding this figure.}
\label{fig:encdec}
%\end{figure}
\end{figure*}

%\nrd{MLAN channels? Should we re-write the above encoding/decoding for this more general channel? Show existence of lattices by constructing at the beginning. Need to put in epsilons and deltas!}

\begin{lemma}{{\it Equivalent decoding list.}}
\label{lem:equivalent}
%The sets $S_{\mathcal{V}_s,\Lambda_c} ({\bf Y^\prime}) \mod \Lambda$ and $Q_{\mathcal{V}_s,\Lambda_c} ({\bf Y^\prime}) \mod \Lambda$  are equal
For the nested lattices $\Lambda_s \subseteq \Lambda_c$ and given ${\bf Y'}\in \mathbb{R}^n$, define 
\begin{equation}
 Q_{\mathcal{V}_s,\Lambda_c} ({\bf Y^\prime}) := \bigcup_{\lambda_c \in \Lambda_c} \{\lambda_c|{\bf Y^\prime} \in (\lambda_c + \mathcal{V}_s) \}.\label{eq:listdecoding2}
 \end{equation}
 and 
 \[ S_{\mathcal{V}_s,\Lambda_c}({\bf Y'}) := \bigcup_{\lambda_c \in \Lambda_c} \{\lambda_c|\lambda_c  \in ({\bf Y^\prime} + \mathcal{V}_s)\},\] 
Then the sets $S_{\mathcal{V}_s,\Lambda_c} ({\bf Y^\prime}) \mod \Lambda$ and $Q_{\mathcal{V}_s,\Lambda_c} ({\bf Y^\prime}) \mod \Lambda$  are equal.
 \end{lemma}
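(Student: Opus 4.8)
The plan is to prove the stronger statement that the two sets are already equal as subsets of $\Lambda_c$, i.e. $S_{\mathcal{V}_s,\Lambda_c}({\bf Y'}) = Q_{\mathcal{V}_s,\Lambda_c}({\bf Y'})$ before any reduction modulo $\Lambda$; the claimed identity after taking $\cdot \mod \Lambda$ then follows immediately, since applying the same map to two equal sets yields equal sets. So the content is really to match up the two defining conditions.

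First I would rewrite each set as the intersection of the fine lattice $\Lambda_c$ with a single translated copy of $\mathcal{V}_s$. Unwinding the definitions, $\lambda_c \in S_{\mathcal{V}_s,\Lambda_c}({\bf Y'})$ iff $\lambda_c \in ({\bf Y'} + \mathcal{V}_s)$, that is $\lambda_c - {\bf Y'} \in \mathcal{V}_s$, so $S_{\mathcal{V}_s,\Lambda_c}({\bf Y'}) = \Lambda_c \cap ({\bf Y'} + \mathcal{V}_s)$. Likewise $\lambda_c \in Q_{\mathcal{V}_s,\Lambda_c}({\bf Y'})$ iff ${\bf Y'} \in (\lambda_c + \mathcal{V}_s)$, that is ${\bf Y'} - \lambda_c \in \mathcal{V}_s$, so $Q_{\mathcal{V}_s,\Lambda_c}({\bf Y'}) = \Lambda_c \cap ({\bf Y'} - \mathcal{V}_s)$. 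Thus the only discrepancy between the two sets is the sign in front of $\mathcal{V}_s$.

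The key step is then to invoke the central symmetry of the fundamental Voronoi region, namely $\mathcal{V}_s = -\mathcal{V}_s$. This holds because $\Lambda_s$ is a subgroup and hence $-\Lambda_s = \Lambda_s$, which forces the nearest-neighbor quantizer of $\Lambda_s$ to be odd, $Q_{\Lambda_s}(-{\bf x}) = -Q_{\Lambda_s}({\bf x})$; consequently $Q_{\Lambda_s}({\bf x}) = {\bf 0}$ iff $Q_{\Lambda_s}(-{\bf x}) = {\bf 0}$, i.e. ${\bf x} \in \mathcal{V}_s \iff -{\bf x} \in \mathcal{V}_s$. With $\mathcal{V}_s = -\mathcal{V}_s$ we get ${\bf Y'} - \mathcal{V}_s = {\bf Y'} + \mathcal{V}_s$, and therefore $Q_{\mathcal{V}_s,\Lambda_c}({\bf Y'}) = S_{\mathcal{V}_s,\Lambda_c}({\bf Y'})$. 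Reducing both sides $\mod \Lambda$ yields the stated claim.

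The only delicate point, and the one I would be most careful about, is the behavior on the boundary $\partial \mathcal{V}_s$, where the $\arg\min$ defining $Q_{\Lambda_s}$ is not unique and the oddness relation can fail under whatever tie-breaking rule is used to fix a half-open fundamental region. This is a measure-zero event in ${\bf Y'}$ (the boundary has Lebesgue measure zero and ${\bf Y'}$ possesses a density in the regime of interest), so it affects neither the decoded list nor the subsequent probability-of-error bound; I would dispose of it either by fixing a symmetric tie-breaking convention or simply by noting that it occurs with probability zero.
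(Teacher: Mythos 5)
Your proof is correct and follows essentially the same route as the paper's: both reduce the claim to the centro-symmetry of the fundamental Voronoi region ($\mathcal{V}_s = -\mathcal{V}_s$), which turns the condition ${\bf Y'} \in \lambda_c + \mathcal{V}_s$ into $\lambda_c \in {\bf Y'} + \mathcal{V}_s$, so the two sets coincide even before reducing $\bmod\ \Lambda$. Your additional justification of the symmetry via oddness of the nearest-neighbor quantizer, and your remark on boundary tie-breaking, are finer-grained than the paper's appeal to ``the definition of a lattice and fundamental Voronoi region,'' but they do not change the substance of the argument.
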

 
\begin{proof}
$Q_{\mathcal{V}_s,\Lambda_c} ({\bf Y^\prime}) $ is the set of  $\lambda_c \in \Lambda_c$ points satisfying ${\bf Y^\prime} \in (\lambda_c + \mathcal{V}_s)$. %For Part (1), the equivalence of \eqref{eq:listdecoding} and \eqref{eq:listdecoding2} (as shown in Figure \ref{fig:encdec}) follows from combining the facts that
Also note that the fundamental Voronoi region $\mathcal{V}$ of any lattice $\Lambda$ is centro-symmetric ($\forall x \in \mathcal{V}$, we have that $-x \in \mathcal{V}$) by definition of a lattice and fundamental Voronoi region (alternatively, see  \cite{Coppel}). Hence,  for any two points $x$ and $x^\prime$, and a centro-symmetric region $\mathcal{V}$, %%\in {\cal V}$,
$x^\prime \in x + \mathcal{V} \Leftrightarrow x \in x^\prime + \mathcal{V}$. Applying this to $ S_{\mathcal{V}_s,\Lambda_c}({\bf Y'})$ and $Q_{\mathcal{V}_s,\Lambda_c} ({\bf Y^\prime}) $ yields the lemma.
\end{proof}
%need to show: (1) Our decoding operation $ {\bf \hat{T}} = S_{\mathcal{V}_s,\Lambda_c} ({\bf Y^\prime}) \mod \Lambda$ is equivalent - i.e. produces the same list - to $Q_{\mathcal{V}_s,\Lambda_c} ({\bf Y^\prime}) \mod \Lambda$, where
%$Q_{\mathcal{V}_s,\Lambda_c} ({\bf Y^\prime}) $ is defined as all $\lambda_c \in \Lambda_c$ points satisfying ${\bf Y^\prime} \in (\lambda_c + \mathcal{V}_s)$,
%that is

 We continue with the proof of Theorem \ref{thm:list}. We first use Lemma \ref{lem:equivalent} to see that the lists $S_{\mathcal{V}_s,\Lambda_c} ({\bf Y^\prime}) \mod \Lambda$ and $Q_{\mathcal{V}_s,\Lambda_c} ({\bf Y^\prime}) \mod \Lambda$ are equal. Next  notice that 
 the probability of error may be bounded as follows: %\nrd{that the transmitted codeword is not in the set of } $Q_{\mathcal{V}_s,\Lambda_c} ({\bf Y^\prime}) \mod \Lambda$ is
%\begin{align}
%P_e & = Pr\{ Z^\prime \not \in \mathcal{V}_s \} = e^{ -n( E_p(\nrd{2}^{2(C(P/N) - R_1)}) - o_n(1))} \label{eq:pe}
%\end{align}
\begin{align}
P_{n,e} & = \Pr\{w \notin L_{S-D}^{w}({\bf Y}) | \; w\mbox{ sent}\} \\
& = \Pr\{{\bf {t}}(w) \not \in  S_{\mathcal{V}_s,\Lambda_c} ({\bf Y^\prime}) \mod \Lambda | \; w\mbox{ sent}\}\\
& =  \Pr\{{\bf t}(w) \not \in  Q_{\mathcal{V}_s,\Lambda_c} ({\bf Y^\prime}) \mod \Lambda | \; w\mbox{ sent}\} \\
& = \Pr\{{\bf Y'} \not \in ({\bf t}(w) + \mathcal{V}_s)| \; w\mbox{ sent} \} \\
& = \Pr\{({\bf t}(w) + {\bf Z'}) \mod \Lambda \not \in ({\bf t}(w)+\mathcal{V}_s)| \; w\mbox{ sent}\} \\
& = \Pr\{ {\bf Z'} \not \in \mathcal{V}_s| \; w\mbox{ sent} \} \\
& \leq \Pr\{ {\bf Z''} \not \in \mathcal{V}_s| \; w\mbox{ sent} \}
\end{align}
where ${\bf Z^\prime} = (-(1-\alpha){\bf X} + \alpha {\bf Z} ) \mod \Lambda$ and ${\bf Z''} = - (1-\alpha){\bf X} + \alpha {\bf Z}$.
We now use Lemma \ref{lem:gaussianbound} to show that the pdf of ${\bf Z''}$ can be upper bounded by the pdf of a Gaussian random vector of not much larger variance, which in turn is used to bound the above probability of error.
%\nrd{Can use either \cite[Lemma 8, Appendix A]{nazer2011compute} or \cite[Appendix B]{Nam:IEEE} but need to say something more. Also need to argue why $R_1$.}
%We may now use Theorem 5 of \cite{Erez:2004} \nrd{or shall we include a modified version which is applicable to this scenario with the equivalent noises?} to view this last equation as one in which we transmit a message over the modulo $\Lambda$ channel
%${\bf Y'} = [{\bf t}+{\bf Z'}]\mod \Lambda$ of capacity $C(P/N)$, and the decoding error is given by 
%\[   e^{ -n( E_p(\nrd{2}^{2(C(P/N) - R_1)}) - o_n(1))} \label{eq:pe}\]
%\nrd{Make new Lemma for $P_e$!!!}

%\nrd{Re-do give more equation (3) and think about more than one for new theorem statement.}
%where $ o_n(1) \rightarrow 0 $ as $n \rightarrow \infty $, $E_p(.)$ is the Poltyrev exponent, and $R_1$ is as defined in Theorem \ref{thm:nam} as $\frac{1}{n} \log ( \frac{V}{V_s} )$. This follows from our construction of nested lattices according to Theorem \ref{thm:nam} and Theorem 5 of \cite{Erez:2004}. Since $\Lambda$ are both Rogers-good and Poltyrev-good while $\Lambda_s$ is Poltyrev-good, and ${\bf Z^\prime}$ and $\mathcal{V}_s$ are consistent with those in Theorem 5 of \cite{Erez:2004}, all the conditions of Theorem 5 of \cite{Erez:2004} are satisfied and it may thus be applied. %\nrd{Maybe list Theorem 5 as a Theorem here to more easily refer to it and highlight that all conditions are met?}

%\nrd{In above use different notation for $P_e$}

\begin{lemma}
\label{lem:gaussianbound}
Let ${\bf Z}_G \sim {\cal N}({\bf 0}, \sigma_G^2{\bf I})$, ${\bf X}$ be uniform over the fundamental Voronoi region of the Rogers good $\Lambda$,  of effective and covering radii $r_{eff}$ and $r_{cov}$ and second moment $P$, and ${\bf Z}_i$  be uniform over the fundamental Voronoi region of the Rogers good $\Lambda_i$ of effective and covering radii $r_{eff,i}$ and $r_{cov,i}$ and second moments $P_i$, $i=1,\cdots L$. Let ${\bf Z''} : = -(1-\alpha){\bf X} + \alpha {\bf Z}_G + \alpha \sum_{i=1}^L {\bf Z}_i$.  
Then there exists an i.i.d. Gaussian vector 
\[ {\bf Z}^\star = -(1-\alpha) {\bf Z}^\star_X + \alpha {\bf Z}_G + \alpha \sum_{i=1}^L {\bf Z}^\star_i \] 
 with variance $\sigma^2$ satisfying 
\[\sigma^2 \leq (1-\alpha)^2 \left( \frac{r_{cov}}{r_{eff}} \right)^2 P + \alpha^2 \sigma^2_G + \alpha^2 \sum_{i=1}^L \left( \frac{r_{cov,i}}{r_{eff,i}} \right)^2 P_i \]
such that the density of ${\bf Z''}$ is upper bounded as:
\begin{equation} f_{\bf Z''} ({\bf z}) \leq e^{(c(n)+\sum_{i=1}^L c_i(n))n } f_{{\bf Z}^\star}({\bf z}) \label{eq:sumR} \end{equation}
where $c(n) = \ln \left( \frac{r_{cov}}{r_{eff}}\right) + \frac{1}{2} \ln 2\pi e G^{(n)}_{\mathcal{B}} + \frac{1}{n} $ and $c_i(n) = \ln \left( \frac{r_{cov,i}}{r_{eff,i}}\right) + \frac{1}{2} \ln 2\pi e G^{(n)}_{\mathcal{B}} + \frac{1}{n} $, and $G^{(n)}_{\mathcal{B}}$ is the normalized second moment of an $n$-dimensional ball.
 \end{lemma}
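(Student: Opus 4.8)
The plan is to bound the density of each non-Gaussian summand in ${\bf Z''}$ by a scaled Gaussian density and then combine these pointwise bounds through convolution, exploiting the independence of the summands and the fact that a sum of independent Gaussians is again Gaussian. Since ${\bf Z''} = -(1-\alpha){\bf X} + \alpha {\bf Z}_G + \alpha\sum_{i=1}^L {\bf Z}_i$ is a sum of independent terms, it suffices to (i) dominate $f_{\bf X}$ and each $f_{{\bf Z}_i}$ by a scaled Gaussian, (ii) observe that scaling by a constant and convolving preserve such a pointwise domination, and (iii) read off the resulting total variance.

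For step (i), consider ${\bf X}$ uniform over $\mathcal{V}$ (the argument for each ${\bf Z}_i$ is identical). I would dominate $f_{\bf X}$ by a Gaussian in two stages. First, since $\Lambda$ is Rogers-good, its fundamental Voronoi region is contained in the ball $\mathcal{B}({\bf 0}, r_{cov})$ of radius equal to the covering radius; because $\mathcal{V}$ and $\mathcal{B}({\bf 0}, r_{eff})$ have equal volume, comparing the uniform density over $\mathcal{V}$ with the uniform density $g_{\mathcal{B}(r_{cov})}$ over $\mathcal{B}({\bf 0}, r_{cov})$ gives $f_{\bf X}({\bf x}) \le (r_{cov}/r_{eff})^n\, g_{\mathcal{B}(r_{cov})}({\bf x})$, contributing the factor $e^{n\ln(r_{cov}/r_{eff})}$. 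Second, I would dominate the uniform-over-ball density by an i.i.d. Gaussian ${\bf Z}_X^\star \sim \mathcal{N}({\bf 0}, \sigma_X^{\star 2}{\bf I})$: the uniform density equals the constant $1/\mathrm{Vol}(\mathcal{B}(r_{cov}))$ on the ball and vanishes outside, while the Gaussian density attains its minimum over the ball at the boundary $\|{\bf x}\| = r_{cov}$, so choosing $\sigma_X^{\star 2} = r_{cov}^2/(n+2)$ and using the identity $\mathrm{Vol}(\mathcal{B}(r_{cov}))^{2/n} = r_{cov}^2/((n+2)G_{\mathcal{B}}^{(n)})$ yields exactly the factor $e^{n(\frac{1}{2}\ln(2\pi e G_{\mathcal{B}}^{(n)}) + \frac{1}{n})}$. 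Composing the two stages gives $f_{\bf X}({\bf x}) \le e^{c(n)n} f_{{\bf Z}_X^\star}({\bf x})$ with $c(n)$ as stated.

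It then remains to check that the variance $\sigma_X^{\star 2} = r_{cov}^2/(n+2)$ meets the claimed bound $(r_{cov}/r_{eff})^2 P$. Writing $P = \sigma^2(\Lambda) = G(\Lambda)\,\mathrm{Vol}(\mathcal{V})^{2/n}$ and $\mathrm{Vol}(\mathcal{V})^{2/n} = V_n^{2/n}r_{eff}^2$ (with $V_n$ the volume of the unit-radius ball), one finds $(r_{cov}/r_{eff})^2 P = \frac{r_{cov}^2}{n+2}\cdot\frac{G(\Lambda)}{G_{\mathcal{B}}^{(n)}}$, and since the ball minimizes the normalized second moment among all bodies we have $G(\Lambda) \ge G_{\mathcal{B}}^{(n)}$, so indeed $\sigma_X^{\star 2} \le (r_{cov}/r_{eff})^2 P$. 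The same computation applies to each ${\bf Z}_i$, with constant $c_i(n)$ and variance $\sigma_i^{\star 2} = r_{cov,i}^2/(n+2) \le (r_{cov,i}/r_{eff,i})^2 P_i$.

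For steps (ii) and (iii), I would note that multiplying a variable by a constant $c$ rescales both $f$ and its dominating Gaussian identically and hence preserves the domination constant, and that if $f_A \le K_A f_{A^\star}$ and $f_B \le K_B f_{B^\star}$ pointwise with $A,B$ (and $A^\star,B^\star$) independent, then convolution gives $f_{A+B}({\bf z}) = \int f_A({\bf a})f_B({\bf z}-{\bf a})\,d{\bf a} \le K_A K_B f_{A^\star+B^\star}({\bf z})$. Applying this to the independent summands $-(1-\alpha){\bf X}$, $\alpha {\bf Z}_G$ (already Gaussian, contributing constant $1$), and $\alpha {\bf Z}_i$ produces $f_{\bf Z''}({\bf z}) \le e^{(c(n)+\sum_{i=1}^L c_i(n))n} f_{{\bf Z}^\star}({\bf z})$ with ${\bf Z}^\star = -(1-\alpha){\bf Z}_X^\star + \alpha{\bf Z}_G + \alpha\sum_{i=1}^L {\bf Z}_i^\star$, whose variance is $(1-\alpha)^2\sigma_X^{\star 2} + \alpha^2\sigma_G^2 + \alpha^2\sum_{i=1}^L\sigma_i^{\star 2}$, which satisfies the stated bound by the previous paragraph. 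The main obstacle is the second stage of step (i): obtaining domination of the uniform-over-ball density with exactly the constant $\frac{1}{2}\ln(2\pi e G_{\mathcal{B}}^{(n)}) + \frac{1}{n}$ hinges on the specific choice $\sigma_X^{\star 2} = r_{cov}^2/(n+2)$ together with the precise relation among the ball's volume, radius, and normalized second moment; the Voronoi-in-ball containment, the scaling invariance, and the convolution step are all routine.
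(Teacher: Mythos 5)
Your proof is correct and follows essentially the same route as the paper, whose own ``proof'' simply defers to \cite[Appendix A]{nazer2011compute} and \cite[Lemmas 6 and 11]{Erez:2004}: those references carry out exactly your argument (dominate each uniform-over-Voronoi density by the uniform density on the covering ball, then by a Gaussian chosen so the boundary value controls the ratio, and propagate the pointwise bounds through scaling and convolution of the independent summands), with the only wrinkle---distinct Rogers-good lattices yielding the sum $c(n)+\sum_i c_i(n)$ in the exponent---handled the same way you do. Your variance choice $r_{cov}^2/(n+2)$ and the check $\sigma_X^{\star 2} \leq (r_{cov}/r_{eff})^2 P$ via $G(\Lambda) \geq G^{(n)}_{\mathcal{B}}$ are both sound (one small nit: $\mathcal{V}\subseteq\mathcal{B}({\bf 0},r_{cov})$ holds for any lattice by definition of the covering radius; Rogers-goodness is only needed so that $c(n)\rightarrow 0$).
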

\begin{proof}
The proof follows \cite[Appendix A]{nazer2011compute} and \cite[Lemma 6 and 11]{Erez:2004} almost exactly, where the central difference with \cite[Appendix A]{nazer2011compute} is that we need to bound the pdf of a sum of random variables uniformly distributed over {\it different} Rogers good lattices rather than identical ones. This leads to the summation in the exponent of \eqref{eq:sumR} but note that we will still have $c(n), c_i(n) \rightarrow 0$ as $n\rightarrow \infty$. 
\end{proof}

Continuing the proof of Theorem \ref{thm:list},  according to Lemma \ref{lem:gaussianbound}, 
\begin{align} 
P_{n,e}  \leq \Pr\{ {\bf Z''} \not \in \mathcal{V}_s \} \leq e^{(c(n)+\sum_{i=1}^L c_i(n))n } \Pr\{ {\bf Z}^\star \not \in \mathcal{V}_s \} \label{eq:x}. 
\end{align}
To bound  $\Pr\{ {\bf Z}^\star \not \in \mathcal{V}_s \}$, 
% show that the probability that the Gaussian ${\bf Z^*}$ is not contained in the fundamental Voronoi  region of a Poltyrev good ${\cal V}_s$ may be made arbitrarily small, 
we first need to show that the VNR of $\Lambda_s$ relative to ${\bf Z}^\star$, $\mu$, is greater than one:
\begin{align}
\mu = \frac{(V(\Lambda_s))^{2/n}}{2 \pi e \sigma^2 } + o_n(1)
&\geq \frac{(V(\Lambda))^{2/n}/2^{2R_1}}{2 \pi e \frac{P N}{P + N}} + o_n(1) \label{eq:1} \\
&= \frac{1}{2^{2R_1}} \frac{1}{2 \pi e G(\Lambda)} \frac{P}{ \frac{P N}{P + N}} + o_n(1) \label{eq:2}\\
&= \frac{1}{2^{2R_1}} \left( 1 + \frac{P}{N}\right) + o_n(1)  \label{eq:3} \\
&= 2^{2(C(P/N)-R_1)} + o_n(1) \label{eq:4}
\end{align}
where (\ref{eq:1}) follows from Lemma \ref{lem:gaussianbound},  the fact that $\Lambda$ and $\Lambda_i$ ($1 \leq i \leq L$) are all Rogers good, and recalling that $\alpha = \frac{P}{P+N}$, where $N = \sigma_G^2+ \sum_{i=1}^L P_i$. Then  (\ref{eq:2}) follows from the definition of $G(\Lambda)$ and $(\ref{eq:3})$ follows as $\Lambda$ is Rogers good. 
Combining (\ref{eq:x}),   (\ref{eq:4}), and  the fact that $\Lambda_s$ is Poltyrev good, by definition 
\begin{align}
P_{n,e}  &\leq e^{(c(n)+\sum_{i=1}^L c_i(n))n } \Pr\{ {\bf Z}^\star \not \in \mathcal{V}_s \} \\
&\leq e^{(c(n)+\sum_{i=1}^L c_i(n))n} e^{-n ( E_p(\mu) - o_n(1) )}\\
&\leq e^{-n(E_p(2^{2(C(P/N) - R_1)}) - o_n(1))} \label{eq:cn}
\end{align}
%\nrd{Pick one $n_1$ to make $P_e$ small, another $n_2$ to make as close to list size as want and let $n > \max(n_1,n_2)$}
where \eqref{eq:cn} follows as $\Lambda, \Lambda_1, \cdots \Lambda_L$ are Rogers good and hence $c(n), c_i(n)$ all tend to $0$ as $n \rightarrow \infty$. 

%\nrd{From this, we may conclude that the probability of error of our decoding operation defined in \eqref{eq:listdecoding} is \eqref{eq:pe}. 
To ensure $P_{n,e} <\delta $ as $n \rightarrow \infty $ we need
$ C(P/N) - R_1 > 0$, where $R_1 = \frac{1}{n} \log ( \frac{V}{V_s} ) = \frac{1}{2} \log (\frac{P}{P_s}) + o_n(1)$, and $n$ sufficiently large.
By choosing an appropriate $P_s$ according to Theorem \ref{thm:nam}, we may set $R_1 = \frac{1}{n} \log ( \frac{V}{V_s} ) = C(P/N) - \epsilon_n$ for any $\epsilon_n > 0$. %, which satisfies $ C(P/N) - R_1 > 0$. 
%\[ C = \frac{1}{2} \log ( 1 + \frac{P}{N} ) \]
%$R_1 = \frac{1}{n} \log ( \frac{V}{V_s} )$.
Combining these, we obtain% a lower bound on the volume of the fundamental region of $\Lambda_s$
%\[  \frac{1}{2} \log ( 1 + \frac{P}{N} ) - \frac{1}{n} \log ( \frac{V}{V_1} ) < 0 \]
\begin{equation}
 V_s = \left(\frac{N}{P+N}\right)^{n/2} 2^{n\epsilon_n} V.
 \label{eq:Vs}
 \end{equation}
%Notice that $V_c = \frac{ V}{2^{nR}}$ since $ R = \frac{1}{n} \log ( \frac{V}{V_c} ).$

The cardinality of the decoded list $L_{S-D}^w({\bf Y})$, in which the true codeword lies with high probability as $n\rightarrow \infty$, may be bounded as
\begin{align*}
|L_{S-D}^w({\bf Y})| &= \frac{V_s}{V_c} = \frac{\frac{N^{n/2}V}{(P+N)^{n/2}} 2^{n\epsilon_n}}{\frac{ V}{2^{nR}}} = 2^{n(R-C(P/N))}2^{n\epsilon_n},
\end{align*}
 since $ R = \frac{1}{n} \log ( \frac{V}{V_c} )$.
Setting $\epsilon_n = \frac{1}{n^2}$, $2^{n\epsilon_n} \rightarrow 1$, and so $|L_{S-D}^w({\bf Y})| \rightarrow 2^{n(R-C(P/N))}$ as $n \rightarrow \infty$.

%As $V_c$ may be chosen arbitrarily close to $ \frac{ V}{2^{nR}}$ by Theorem \ref{thm:nam}, the cardinality of the decoded list $L({\bf \hat{t}}) = S_{{\cal V}_s,\Lambda_c} ({\bf Y^\prime}) \mod \Lambda$, in which the true codeword lies with high probability as $n\rightarrow \infty$, may be bounded as %\nrd {*** need upper bound as well as lower bound!!!***}
%Thus, the cardinality of $\hat{T} = S_{v_1,\Lambda_2} (Y^\prime) \mod \Lambda$ is
%\begin{align*}
%|L ({\bf \hat{t}})| &= \frac{V_s}{V_c} >  \frac{\frac{N^{n/2}V}{(P+N)^{n/2}}}{\frac{ V}{2^{nR}}} = 2^{n(R-C(P/N))}.
%\end{align*}
%For a given ``good'' lattice chain $\Lambda \subseteq \Lambda_s \subseteq \Lambda_c $ as defined in Theorem  \ref{thm:nam}, the size of the decoded list is fixed (and is not a %random variable as in the random-coding based list decoder of \cite{Cover:1979}).
%For example, consider the threshold point, i.e. choose a ``good''  lattice chain
%Thus, we may choose  $V_s = \left(\frac{N}{P+N}\right)^{n/2} \, V$, so that the size of decoded list is arbitrarily close to $ 2^{n(R-C(P/N))}$ as $n\rightarrow \infty$.
\end{proof}

\begin{remark}
Note that in our Theorem statement we have assumed $R > C(P/N)$; when $R < C(P/N)$, the decoder can decode an unique codeword with high probability, as stated in Lemma \ref{lem:unique}. 
%as  may be derived from  \cite{Erez:2004} and \cite[Appendix A]{nazer2011compute}.\\

\lemma{{\it Lattice unique decoding in mixed noise.}
\label{lem:unique}
Consider the channel ${\bf Y} = {\bf X}+{\bf Z}$, subject to input power constraint $\frac{1}{n}E[{\bf X}^T{\bf X}]\leq P$, where ${\bf Z} = {\bf Z_G} + \sum_{i=1}^L {\bf Z_{i}}$ is noise which is a mixture of Gaussian noise ${\bf Z_G} \sim {\cal N}({\bf 0}, \sigma_G^2 {\bf I})$ and independent noises ${\bf Z_{i}}$ which are
uniformly distributed over fundamental Voronoi  regions of Rogers-good lattices with second moments $P_{i}$. Thus, ${\bf Z}$ is of equivalent variance $N = \frac{1}{n}\mathbb{E}({\bf Z}^T {\bf Z}) = \sigma_G^2 + \sum_{i=1}^L P_i$.
For any  $\delta > 0$, $R < C(P/N)$,  and $n$ large enough, there exist lattice codebooks such that the decoder can decode an unique codeword with probability  of error smaller than $\delta$. 
\begin{proof}
This lemma can be derived as a special case of Compute-and-Forward \cite[Theorem 1] {nazer2011compute}; in particular this is found in \cite[Example 2]{nazer2011compute}, where the decoder is interested in one of the messages and treats all other messages as noise. We may view ${\bf Z}_i$ in this lemma as the signals from other (lattice-codeword based) transmitters in \cite[Example 2]{nazer2011compute}.
\end{proof}}
\end{remark}
%%%%%%%%%%%%%%%%%%%%%%%%%%%%%%%%%%%%%%%%%%%%%%%%%%%%%%%%%%%
\section{Single source Decode and Forward}
\label{sec:DF1}

We first show that nested lattice codes may be used to achieve the Decode-and-Forward (DF) rate of  \cite[Theorem 5]{Cover:1979} for the Gaussian relay channel using nested lattice codes at the source and relay, and a lattice list decoder at the destination. We then extend this result to show that the generalized DF rate for a Gaussian relay network with a single source, a single destination and multiple DF relays may also be achieved using an extension of the single relay lattice-based achievability scheme.

\subsection{DF for the AWGN single relay channel}
%\vspace{-0.3cm}
\label{subsec:DF}

Consider a  relay channel in which the source node $S$, with channel input $X_{S}$  transmits a message $w\in \{1,2,\cdots, 2^{nR}\}$ to destination node $D$ which has access to the channel output $Y_{D}$ and is aided by a relay node $R$ with channel input and output $X_{R}$ and $Y_{R}$.  Input and output random variables lie in ${\mathbb R}$. At each channel use, the channel inputs and outputs are related as
$ Y_D = X_S+X_R+Z_D, \; Y_R = X_S+Z_R$,
where $Z_R, Z_D$ are independent Gaussian random variables of zero mean and variance $N_R$ and $N_D$ respectively. Let ${\bf X_S}$ denote a sequence of $n$ channel inputs (a row vector), and similarly, let ${\bf X_R}, {\bf Y_R}, {\bf Y_D}$ all denote the length $n$ sequences of channel inputs and outputs. Then the channel may be described by %The physically degraded Gaussian relay channel with transmit power $P$ and relay power $P_R$ is described by %the input/output equations
\begin{align}
{\bf Y_{D}} = {\bf X_{S}} + {\bf X_{R}} + {\bf Z_{D}},  \;\;\;\; {\bf Y_{R}} = {\bf X_{S}} + {\bf Z_{R}}, \label{eq:DFmodel}
\end{align}
%for ${\bf Z_2} = {\bf Z_R} + {\bf Z_2^\prime} $, where
where  ${\bf Z_D} \sim{\cal N}({\bf 0},N_D{\bf I})$ and ${\bf Z_R}\sim {\cal N}({\bf 0},N_R{\bf I})$, and inputs are subject to the power constraints  % are sequences
%(bold denotes vectors indexed over time, where channel use indices have been left out for clarity)
%of independent identically distributed Gaussian random variables with mean zero and variances $N_D$ and $N_R$ respectively. 
%Average power constraints $P$ at the source and $P_R$ at the relay require 
$\frac{1}{n} E[{\bf X_S}^T{\bf X_S}] \leq P$ and $\frac{1}{n} E[{\bf X_R}^T{\bf X_R}]\leq P_R$. 

An $(2^{nR}, n)$ code for the relay channel consists of the set of messages $w$ uniformly distributed over  ${\cal M} : = \{1,2,\cdots 2^{nR}\}$, an encoding function $X_S^n: {\cal M} \rightarrow {\mathbb R}^n$ satisfying the power constraint, a set of relay functions $\{f_i\}_{i=1}^n$ such that the relay channel input at time $i$ is a function of the previously received relay channel outputs from channel uses $1$ to $i-1$, $X_{R,i} = f_i(Y_{R,1}, \cdots Y_{R,i-1})$, and finally a decoding function $g: {\cal Y}_D^n \rightarrow {\cal M}$ which yields the message estimate $\hat{w}: = g(Y_D^n)$. We define the average probability of error of the code to be $P_{n,e} : = \frac{1}{2^{nR}} \sum_{w\in {\cal M}} \Pr\{\hat{w} \neq w|w \mbox{ sent}\}$. The rate $R$ is then said to be achievable by a relay channel if, for any $\epsilon>0$ and for sufficiently large $n$, there exists an $(2^{nR},n)$ code such that $P_{n,e} < \epsilon$. The capacity $C$ of the relay channel is the supremum of the set of achievable rates. 

%The channel input ${\bf X_1}$  is a function of the message $w$. The relay's channel input $X_R$ at time $i$ is a function of the previously received relay outputs $Y_R$ from channel uses $1$ to $i-1$. Destination node 2 computes its message estimate $\widehat{w}$ as a function of its received signal ${\bf Y_2}$. The capacity of the relay channel with  input and relay power constraints $P$ and $P_R$ is the supremum of rates $R$ for which we may find encoding and decoding functions (subject to the power  constraints) at the source, relay and destination nodes such that $\Pr(\widehat{w}\neq w)<\epsilon$ for any possible $\epsilon$, by letting $n\rightarrow \infty$.

\begin{figure*}
%\begin{figure}
\centering
\includegraphics[width=14cm]{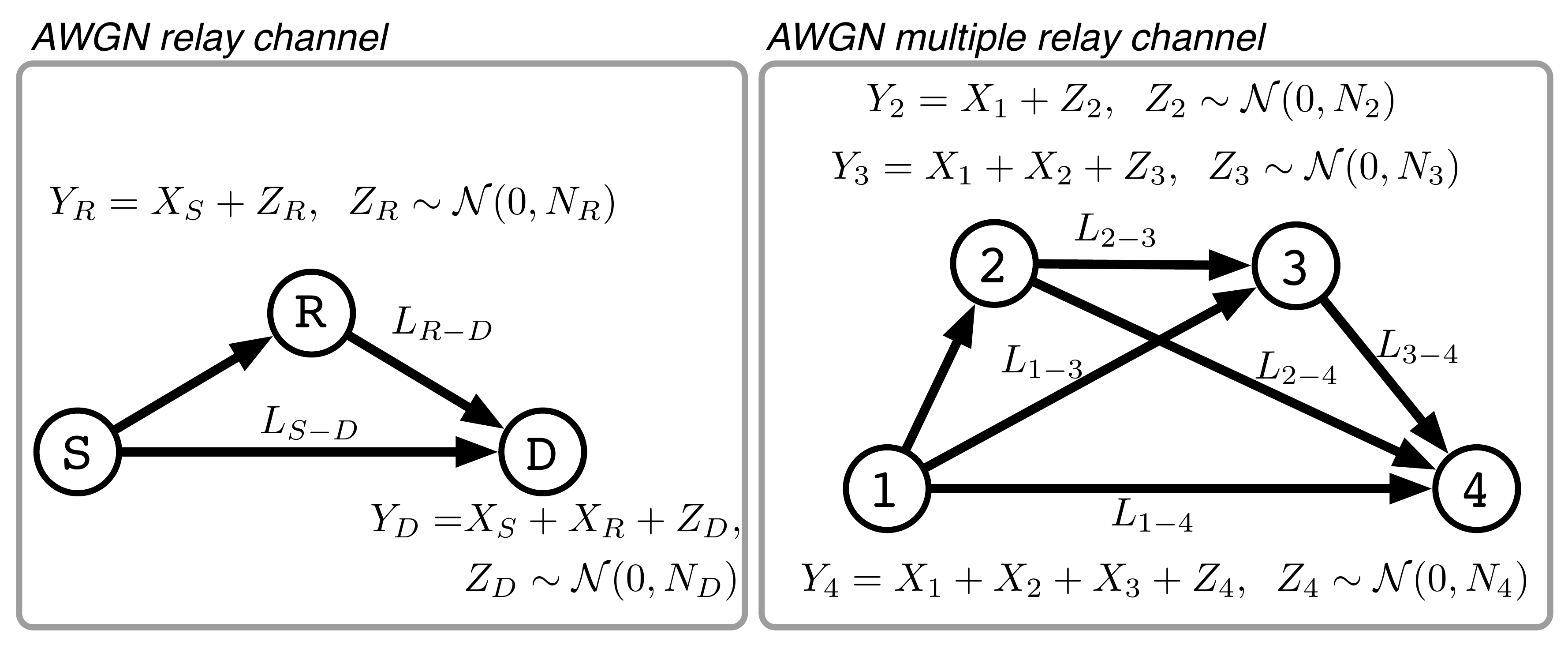}
\caption{The two Gaussian relay channels under consideration in Section \ref{subsec:DF} and Section  \ref{subsec:TWRC}. For the AWGN relay channel we have assumed a particular relay order (2,3) for our achievability scheme  and shown the equivalent channel model used in deriving the achievable rate rather than the general channel model. %Note that each receiver may subtract off its own message.
}\label{fig:relay-channel}
%\end{figure}
\end{figure*}

We are first interested in showing that the DF rate achieved by Gaussian random codebooks of \cite[Theorem 5]{Cover:1979} may be achieved using lattice codes.  As outlined in \cite{Kramer:Gastpar:Gupta}, this DF rate may be achieved using irregular encoding / successive decoding as in \cite{Cover:1979}, regular encoding / sliding-window decoding as first shown in \cite{carleial1982multiple}, and using regular encoding / backwards decoding as in \cite{Willems:thesis}. 
%The capacity of the Gaussian relay channel
%%with average input power constraint $P$ and average relay power constraint $P_R$
% is still unknown in general, but DF is known to achieve capacity for the class of physically degraded relay channels  \cite[Theorem 5]{Cover:1979} for which $Z_2 = Z_R +  Z_2'$ for some Gaussian random variable $Z_2'$. 
  We will mimic  the regular encoding/sliding-window decoding scheme of \cite{xie2005achievable}, which 
%The capacity of this Gaussian  physically degraded relay channel was obtained in \cite[Theorem 5]{Cover:1979};
includes: (1) random coding, (2) list decoding, (3) two joint typicality decoding steps, (4) coding for the cooperative multiple-access channel, (5) superposition coding and (6) block Markov encoding. We re-derive the DF rate,  following the achievability scheme of  \cite{xie2005achievable}, 
but with lattice codes replacing the random Gaussian coding techniques.  Of particular importance is the usage of two lattice list decoders to replace two joint typicality decoding steps in the random coding achievability scheme.

\medskip

\begin{theorem}%{\bf Lattices achieve the capacity of physically degraded relay channels.}
\label{thm:DF}
%The capacity of the physically degraded AWGN relay channel
% \begin{align*}
% C &=  \max_{0\leq\alpha\leq 1} \left\{ \right. \\
% & \left. \min \left\{ \frac{1}{2} \log \left( 1 + \frac{ P + P_R + 2\sqrt{\bar{\alpha} P P_R} } { N + N_R} \right), \frac{1}{2}\log \left (1+\frac{\alpha P}{N_R} \right)\right\}\right\} \end{align*}
%may be achieved using nested lattice codes.
{\it Lattices achieve the DF rate achieved by random Gaussian codebooks for the relay channel.} The following Decode-and-Forward rates can be achieved using nested lattice codes for the Gaussian relay channel described by \eqref{eq:DFmodel}:
\begin{equation}
R < \max_{\alpha\in [0,1]}\min\left\{\frac{1}{2}\log\left( 1+\frac{\alpha P}{N_R}\right), \frac{1}{2}\log\left(1+\frac{P+P_R+2\sqrt{\bar{\alpha}P P_R}}{N_D} \right)  \right\},\quad \bar{\alpha} = 1-\alpha.
\label{eq:DF}
\end{equation}
\end{theorem}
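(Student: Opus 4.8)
The plan is to mimic the regular-encoding/sliding-window DF scheme of \cite{xie2005achievable} over $B$ blocks, replacing the random Gaussian codebooks by nested lattice codes, replacing the two joint-typicality decoding steps at the destination by two applications of the lattice list decoder of Theorem \ref{thm:list}, and then intersecting the two lists. Fix $\alpha\in[0,1]$ and split the source power as $\alpha P$ for the fresh message and $\bar\alpha P=(1-\alpha)P$ for cooperation. Using a nested lattice chain from Theorem \ref{thm:nam} with second moments tuned to $\alpha P$, $\bar\alpha P$, and $P_R$, I would have the source transmit in block $b$ a dithered $\bmod\,\Lambda$ superposition of a codeword carrying the new message $w_b$ at power $\alpha P$ and a codeword carrying $w_{b-1}$ at power $\bar\alpha P$; the relay, having decoded $w_{b-1}$ at the end of block $b-1$, transmits a codeword carrying $w_{b-1}$ at power $P_R$, aligned with the source's cooperative codeword so that the two add coherently at the destination, contributing effective power $(\sqrt{\bar\alpha P}+\sqrt{P_R})^2=\bar\alpha P+P_R+2\sqrt{\bar\alpha P P_R}$. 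At the end of block $b$ the relay already knows $w_{b-1}$, subtracts the cooperative component of the source signal, and is left with the fresh codeword (power $\alpha P$) in noise $N_R$; since we operate with $R<C(\alpha P/N_R)$, Lemma \ref{lem:unique} lets it recover $w_b$ reliably, which yields the first term of \eqref{eq:DF}.

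For the destination I would decode $w_{b-1}$ at the end of block $b$, assuming $w_{b-2}$ has already been recovered. From block $b-1$, after subtracting the known cooperative transmission of $w_{b-2}$, I apply the lattice list decoder to the fresh component (signal power $\alpha P$, pure Gaussian noise $N_D$), obtaining a list $L_1$ of size $\approx 2^{n(R-C(\alpha P/N_D))}$. From block $b$, I apply the lattice list decoder to the coherently combined cooperative transmission of $w_{b-1}$, treating the as-yet-unknown fresh codeword for $w_b$ as additional noise; crucially, that interfering codeword is a dithered lattice point and hence (by the Crypto lemma) uniform over a Voronoi region of a Rogers-good lattice with second moment $\alpha P$, so it fits exactly the mixed-noise model of Theorem \ref{thm:list} with effective noise $\alpha P+N_D$. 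This gives a list $L_2$ of size $\approx 2^{n(R-C((\sqrt{\bar\alpha P}+\sqrt{P_R})^2/(\alpha P+N_D)))}$. The true $w_{b-1}$ lies in $L_1\cap L_2$ with high probability, and since the two lists are built from independent noise realizations in distinct blocks, the expected number of wrong messages surviving the intersection is $\approx 2^{nR}\,2^{-nC(\alpha P/N_D)}\,2^{-nC((\sqrt{\bar\alpha P}+\sqrt{P_R})^2/(\alpha P+N_D))}$, which vanishes whenever $R$ is below the sum of the two link rates. A direct computation shows this sum telescopes: $C(\alpha P/N_D)+C\big((\sqrt{\bar\alpha P}+\sqrt{P_R})^2/(\alpha P+N_D)\big)=C\big((P+P_R+2\sqrt{\bar\alpha P P_R})/N_D\big)$, which is exactly the second term of \eqref{eq:DF}. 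Taking the minimum of the two constraints, maximizing over $\alpha$, and letting $B\to\infty$ to wash out the block-Markov rate loss completes the argument.

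The step I expect to be the main obstacle is making the list-intersection counting rigorous in the lattice setting. With i.i.d.\ random codes the two lists are statistically independent and the expected-intersection bound is essentially immediate; with lattices one must instead verify that (i) the list cardinalities delivered by Theorem \ref{thm:list} are as claimed, i.e.\ $V_s/V_c$ for the appropriate intermediate lattice, and (ii) for a fixed wrong lattice point the events of landing in $L_1$ and in $L_2$ factorize and each scale like the corresponding volume ratio (so that the geometric ``fraction of messages in a list'' behaves like $2^{-nC(\cdot)}$). A secondary technical point is realizing the coherent beamforming gain $2\sqrt{\bar\alpha P P_R}$ with structured codes: the source and relay must map $w_{b-1}$ to the same scaled and aligned lattice codeword so their cooperative signals combine constructively while the superposition and the subtraction of known components remain consistent modulo the coarse lattice.
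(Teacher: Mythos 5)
Your proposal reproduces the paper's proof essentially step for step: the same power split and superposition at the source, the same relay operation (the paper has the relay transmit $\sqrt{P_R/(\bar{\alpha}P)}\,{\bf X_2'}(w_{b-1})$, a scaled copy of the source's own dithered cooperative codeword, which resolves the ``alignment'' concern in your last paragraph exactly as you suggest), the same relay decoding via Lemma \ref{lem:unique}, the same two lists with the same cardinalities obtained from Theorem \ref{thm:list}, and the same telescoping sum $C(\alpha P/N_D)+C\bigl((\sqrt{\bar{\alpha}P}+\sqrt{P_R})^2/(\alpha P+N_D)\bigr)=C\bigl((P+P_R+2\sqrt{\bar{\alpha}PP_R})/N_D\bigr)$. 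However, there is one genuine gap, and it sits precisely at the step you flagged as the main obstacle: you justify the list-intersection bound by saying the two lists ``are built from independent noise realizations in distinct blocks.'' That is not the operative source of independence, and by itself it is insufficient. The ingredient your construction never introduces --- and which the paper makes explicit in its codebook construction --- is that each message $w$ is mapped to codewords ${\bf t_1}(w)\in{\cal C}_1=\{\Lambda_{c1}\cap{\cal V}_1\}$ and ${\bf t_2}(w)\in{\cal C}_2=\{\Lambda_{c2}\cap{\cal V}_2\}$ by two \emph{random and mutually independent} one-to-one mappings, known to all nodes.

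To see why your justification fails without this: in the lattice setting each list is a deterministic function of the received signal and of the message-to-codeword map ($L_1$ is the set of messages whose ${\bf t_1}$-codeword lies in a translate of ${\cal V}_{s1}$, and similarly for $L_2$). Independent noise across blocks only makes the two decoding \emph{regions} independent; it says nothing about the two \emph{message sets}. Both regions are centered near the codewords of the true $w_{b-1}$, so if the two mappings were deterministic and structure-preserving (say, both enumerating lattice points in the same geometric order), the messages occupying the two regions could coincide in far more than one index, and the intersection step would fail no matter how the noise behaves. With independent random mappings --- together with the dithers, which by the Crypto lemma make the transmitted signals, and hence the decoding regions, independent of the realized mappings --- the events ${\bf t_1}(\tilde{w})\in(\text{region }1)$ and ${\bf t_2}(\tilde{w})\in(\text{region }2)$ factorize for any fixed wrong message $\tilde{w}$, each having probability roughly $|L_i|/2^{nR}$, so the expected number of wrong messages in the intersection is $|L_1||L_2|/2^{nR}=2^{n(R-C(\alpha P/N_D)-R_R)}$, which vanishes exactly under your telescoped constraint; this is the argument the paper invokes via \cite[Eq.~(27) and Lemma 3]{Cover:1979} or the packing lemma of \cite{elgamalkim}. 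So your outline is correct, but to close it you must add the independent random message-to-codebook mappings as an explicit part of the codebook construction; verifying your step (ii) is impossible without them.
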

\begin{proof}
%\vspace{-0.3cm}
%\subsection{Proof of Theorem \ref{thm:degraded}; achievability of Gaussian physically degraded relay channel capacity with lattice codebooks}
%\nrd{Yiwei: in the following the notation $(\Lambda_1, \Lambda_{c1})$ for nested lattice codebooks has NOT yet been introduced, so is ambiguous.}

\begin{figure*}
\centering
\includegraphics[width=14cm]{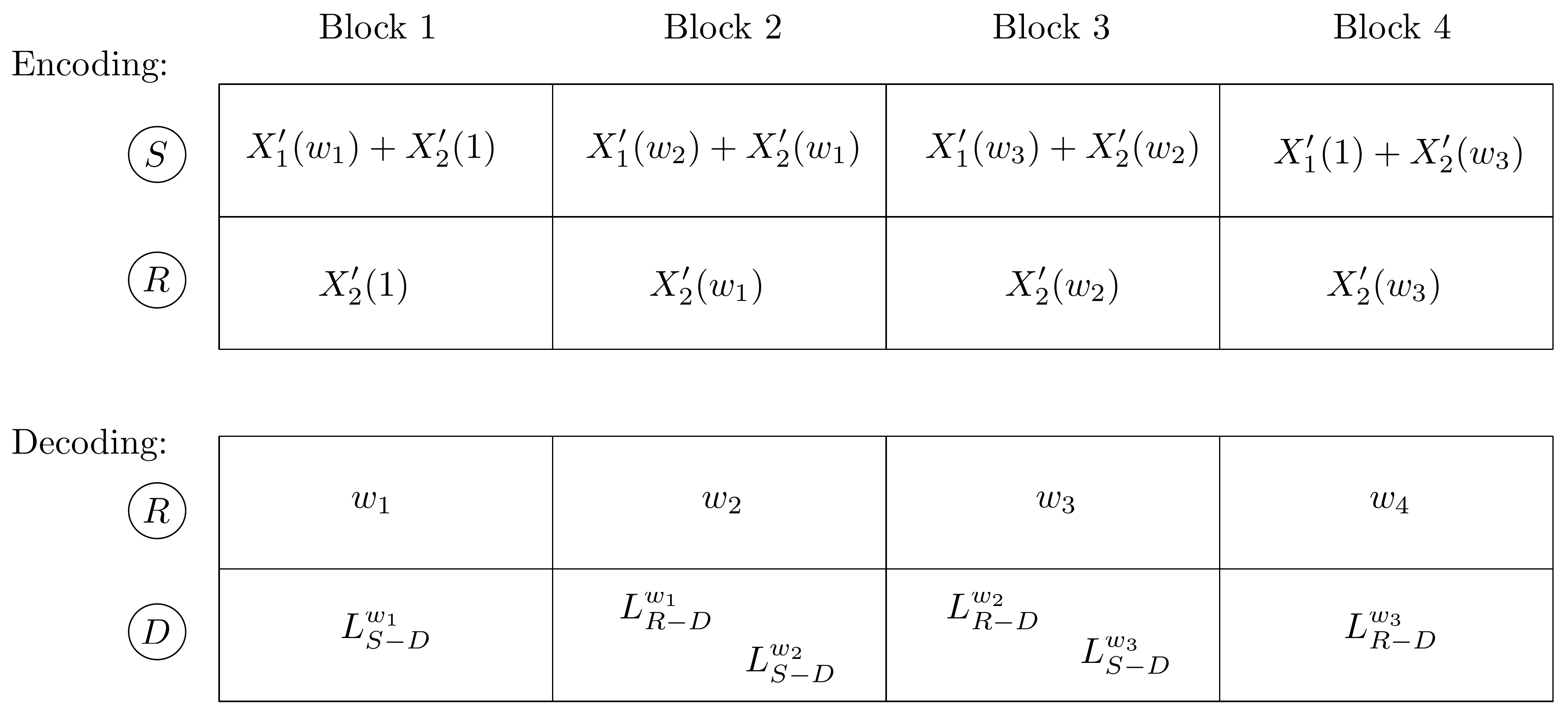}
\caption{Lattice Decode-and-Forward scheme for the AWGN relay channel.} 
\label{DF for onerelay}
\end{figure*}

\noindent{\bf Codebook construction:} 
We consider two nested lattice chains of length three $\Lambda_1 \subseteq \Lambda_{s1}\subseteq \Lambda_{c1}$,  and $\Lambda_2 \subseteq \Lambda_{s2}\subseteq \Lambda_{c2}$ whose existence is guaranteed by Theorem \ref{thm:nam}, and whose parameters $P_i, \gamma$ we still need to specify. 
The nested lattice pairs $(\Lambda_1, \Lambda_{c1})$   and $(\Lambda_2, \Lambda_{c2})$ are used to construct lattice codebooks of coding rate $R$ with $\sigma^2(\Lambda_1) = \alpha P $ and $\sigma^2(\Lambda_2) = \bar{\alpha} P$ for given $\alpha\in [0,1]$. Since $\Lambda_1$ and $\Lambda_2$ will not be the finest lattice in the chain, they will be Rogers good, and hence $\sigma^2(\Lambda_1) = \alpha P $ will define the volume of $\Lambda_1, V_1$, and $\sigma^2(\Lambda_2) = \bar{\alpha} P$ will define the volume of $\Lambda_2, V_2$. 
% If $0\leq \alpha <\frac{1}{2}$ then $\alpha P < \bar{\alpha} P$, and hence $\Lambda_2 \subseteq \Lambda_1$; if $\frac{1}{2}\leq \alpha \leq 1$ then $\Lambda_1\subseteq \Lambda_2$. For the remainder we assume $0\leq \alpha <\frac{1}{2}$.  
Since $(\Lambda_1, \Lambda_{c1})$   and $(\Lambda_2, \Lambda_{c2})$ are used to construct lattice codebooks of coding rate 
\[ R = \frac{1}{n}\log\left(\frac{V_1}{V_{c1}}\right) = \frac{1}{n}\log\left(\frac{V_2}{V_{c2}}\right),\]
 this will in turn define $V_{c1}$ in terms of $V_1$ and rate $R$; similarly for $V_{c2}$ in terms of $V_2$ and rate $R$.  
 %Since $V_1<V_2$, this implies $V_{c1}<V_{c2}$ and hence $\Lambda_{c1}$ is the finest lattice. Hence $\Lambda_{c2}$ is Rogers good as well, and by setting $P_{c2}$ we are able to obtain the needed $V_{c2}$. 
 Since $\Lambda_{c1}$ and $\Lambda_{c2}$ are only Poltyrev good, we may obtain the needed $V_{c1}, V_{c2}$ by appropriate selection of $\gamma$ in Theorem \ref{thm:nam}. Finally, the lattices $\Lambda_{s1}$ and $\Lambda_{s2}$ (whose second moments we may still specify arbitrarily, and which will be used for lattice list decoding at the destination node) will also be Rogers good and their volumes, or equivalently, second moments,  will be selected in the course of the proof.  
Randomly map the messages $w \in \{1,2,\dots,2^{nR}\}$ to codewords ${\bf t_1}(w) \in \mathcal{C}_1 = \{ \Lambda_{c1} \cap \mathcal{V}_1\}$ and ${\bf t_2}(w) \in \mathcal{C}_2 = \{ \Lambda_{c2} \cap \mathcal{V}_2\}$. Let these two mappings be independent and known to all nodes.
%the transmitter, relay and receiver. % the message set $\{1,2,\dots,2^{nR_R}\}$ with $ {\bf t_2} \in \mathcal{C}_2 = \{ \Lambda_{c2} \cap \mathcal{V}_2\}$.
%The message indices (and corresponding codewords) of the first message set are randomly and uniformly assigned
%(equal $\#$ have the same assignment)
 % indices  in $ \{1,2,\cdots, 2^{nR_R}\}$.
% The relay, the receiver and the transmitter agree on this assignment.

We use block Markov coding and define $w_{b}$ as the new message index to be sent in block $b$ ($b=1,2,\cdots, B$); %$s_{b}$ is the index corresponding to $w_{b-1}$ in $ \{1,2,\cdots, 2^{nR_R}\}$,
 define $w_0 = 1$. At the end of block $b-1$, the receiver knows $(w_1,\dots,w_{b-2} )$ and the relay knows $(w_1,\dots,w_{b-1} )$. 
 We let ${\bf Y_R}(b), {\bf Y_D}(b)$ denote the vectors of length $n$ of received signals at the relay and the destination, respectively, during the $b$-th block, and ${\bf U}_1(b), {\bf U}_2(b)$ denote dithers during block $b$ known to all nodes which are i.i.d.,  change from block to block, and are uniformly distributed over $\mathcal{V}_1$ and $\mathcal{V}_2$ respectively.  The encoding and decoding steps are outlined in Figure \ref{DF for onerelay}. %The encoding and decoding are as follows:
%\nrd{Yiwei: do the dithers depend on $w_b$ and $s_b$ -- do they change from block to block? Dithers are i.i.d generated every time we use it. They do not depend on $w_b$ and $s_b$, but they are different for different $w_b$ and $s_b$. So they may change from block to block }

\smallskip
\noindent
{\bf Encoding:} During the $b$-th block, the transmitter sends the superposition (sum) ${\bf X_S}(w_b,w_{b-1}) = {\bf X'_{1}}(w_b) +{\bf X'_{2}}(w_{b-1})$,  and the relay sends  ${\bf X_{R}}(w_{b-1})$, where
\begin{align*}
{\bf X'_{1}}(w_b) &= ({\bf t_1}(w_b) - {\bf U_1}(b)) \mod \Lambda_1, \\
{\bf X'_{2}}(w_{b-1}) &= ({\bf t_2}(w_{b-1}) - {\bf U_2}({b-1})) \mod \Lambda_2 \\
{\bf X_{R}}(w_{b-1}) &= \sqrt { \frac{P_R}{\bar{\alpha}P}} {\bf X'_2}(w_{b-1}) = \left(\sqrt { \frac{P_R}{\bar{\alpha}P}} {\bf t_2}(w_{b-1}) - \sqrt { \frac{P_R}{\bar{\alpha}P}}{\bf U_2}({b-1}) \right) \mod \sqrt { \frac{P_R}{\bar{\alpha}P}} \Lambda_2.
\end{align*}
  By the Crypto lemma  ${\bf X'_1}(w_b)$ and ${\bf X'_2}(w_{b-1})$ are uniformly distributed over $\mathcal{V}_1$ and $\mathcal{V}_2$ and independent of all else. %\nrd{We also use the notation ${\bf X'_1}(w_b)$ to emphasize  the dependence of ${\bf X_1'}$ in block $b$ on $w_b$.}
\smallskip

\noindent
{\bf Decoding:}

%\nrd{Need to change / re-write this for Reviewer 1!! More precise and use lattice list decoder theorem explicitly.}

1. At the $b$-th block, the relay knows $w_{b-1}$ and consequently ${\bf X'_2}(w_{b-1})$, and so may decode the message $w_b$ from the received signal $ {\bf Y_{R}}(b)  - {\bf X'_2}(w_{b-1}) = {\bf X'_1}(w_b)  + {\bf Z_{R}}(b) $ as long as $ R < C(\alpha P/N_R)$,
since $(\Lambda_1, \Lambda_{c1})$ may achieve the capacity of the point-to-point channel \cite{Erez:2004} or Lemma \ref{lem:unique}.

2. The receiver first decodes a list of messages $w_{b-1}$, $L_{R-D}^{w_{b-1}}({\bf Y_D}(b))$, defined according to \eqref{eq:listdecoding} as 
\begin{equation} L_{R-D}^{w_{b-1}}({\bf Y_D}(b)) = \{w_{b-1} | \; {\bf {t_2}}(w_{b-1}) \in  S_{ \kappa \mathcal{V}_{s2},\kappa \Lambda_{c2}} ({\bf Y_D^\prime}(b)) \mod \kappa \Lambda_2\}, \label{eq:LRD} \end{equation}
%\begin{equation} L_{R-D}^{w_{b-1}}({\bf Y_D}(b)) = \{w_{b-1} | \; {\bf {t_2}}(w_{b-1}) \in  S_{\mathcal{V}_{s2},\Lambda_{c2}} ({\bf Y_D^\prime}(b)) \mod \Lambda_2\}, \label{eq:LRD} \end{equation}
 of  asymptotic size $2^{n (R - R_R)}$ from the signal 
 \begin{align}
 {\bf Y_D}(b) &= {\bf X_S}(w_b,w_{b-1})+ {\bf X_R}(w_{b-1}) + {\bf Z_D}(b) \\
 & = {\bf X'_1}(w_b) + \kappa {\bf X'_2}(w_{b-1}) + {\bf Z_D}(b)
 \end{align}
for  $\kappa  =  \left(1+\sqrt{\frac{P_R}{\bar{\alpha}P}}\right)$ %\footnote{Note that ${\bf Y_D^\prime}(b)$ is obtained from ${\bf Y_D}(b)$  by analogous MMSE scaling and modulo operation that brings ${\bf Y}$ to ${\bf Y^\prime}$ in \eqref{eq:YY}. }
using the lattice list decoding scheme of Theorem \ref{thm:list}.  
%Note that ${\bf Y_D^\prime}(b)$ is obtained from ${\bf Y_D}(b)$  by analogous MMSE scaling and modulo operation that brings ${\bf Y}$ to ${\bf Y^\prime}$ in \eqref{eq:YY}. This is possible as long as
%%The decoded list of messages $w_{b-1}$ is denoted by $L_{R-2}(w_{b-1}) $. 
%\[ R_R < \frac{1}{2} \log \left( 1 + \frac{ (\sqrt{\bar{\alpha}P} + \sqrt{P_R})^2 }{ \alpha P +N_D} \right),\]
Notice that Theorem \ref{thm:list} is applicable as the ``noise'' in decoding a list of $w_{b-1}$ from ${\bf Y_D}(b)$ is composed of the sum of a Gaussian signal ${\bf Z_D}(b)$ and ${\bf X_1'}(w_b)$ which is uniformly distributed over the fundamental Voronoi region of the Rogers good lattice of second moment $\alpha P$. The equivalent noise variance in Theorem \ref{thm:list} is thus $\alpha P + N_D$, and the capacity of the channel is \cite{Erez:2004} $C(\kappa^2 \bar{\alpha}P / (\alpha P+N_D)) = C((\sqrt{\bar{\alpha P}} + \sqrt{P_R})^2 / (\alpha P+N_D))$. We may thus obtain a list of size $2^{n(R-R_R)}$ as long as 
\begin{align}
R_R &< \frac{1}{2} \log \left( \frac{\kappa^2 \bar{\alpha}P }{ \frac{\kappa^2 \bar{\alpha}P  (\alpha P + N_D) } { \kappa^2 \bar{\alpha}P  + \alpha P + N_D } } \right)
%&= \frac{1}{2} \log \left( 1 + \frac{P^\prime}{ \alpha P + N + N_R } \right) \\
%&= \frac{1}{2} \log \left( 1 + \frac{(1 +\sqrt { \frac{P_R}{\bar{\alpha}P}} )^2 \bar{\alpha}P}{ \alpha P + N + N_R } \right) \
= \frac{1}{2} \log \left( 1 + \frac{ (\sqrt{\bar{\alpha}P} + \sqrt{P_R})^2 }{ \alpha P + N_D } \right). \label{eq:RR}
\end{align}
One may directly apply Theorem \ref{thm:list}; for additional details on this step, please see Appendix \ref{app:list}.

%\nrd{Yiwei: need to define $\Lambda_{s1}$ for the list decoder. What do you mean by 'define' ? It is just the list decoding lattice as used in list decoding scheme.}
%\nrd{Yiwei: in the following paragraph, do we want to use the notation $L$ for list, or that of ${\bf T}$? It's best to stick with one notation for a list -- may need to change other parts of the paper. Oh, I see. We will use $L(\bf{\hat{t}})$, since T is used as the combination of lattice codes}
3. A second list of messages $w_{b-1}$ was obtained at the end of block $b-1$ from the direct link  between the transmitter node S and the destination node D, denoted as $L_{S-D}^{w_{b-1}}({\bf Y_D}(b-1) - \kappa{\bf X_2'}(w_{b-2}))$ defined according to \eqref{eq:listdecoding} and analogous to \eqref{eq:LRD} using a lattice list decoder. 
 We now describe the formation of the list $L_{S-D}^{w_b}({\bf Y_D}(b)- \kappa{\bf X_2'}(w_{b-1}))$ in block $b$ which will be used in block $b+1$. Assuming that the receiver has decoded $w_{b-1}$ successfully, it  subtracts $\kappa{\bf X_2'}(w_{b-1})$ from ${\bf Y_D}(b)$: ${\bf Y_D}(b) - \kappa{\bf X_2'}(w_{b-1}) = {\bf X'_1}(w_b) + {\bf Z_D}(b)$, and then decodes another list  of possible messages $w_b$ of  asymptotic size $2^{n(R - C(\alpha P/(N_D)))}$ using Theorem \ref{thm:list}. This is done using the nested lattice chain  $\Lambda_1 \subseteq \Lambda_{s1}\subseteq \Lambda_{c1}$. Again, Theorem \ref{thm:list} is applicable as we have a channel ${\bf X_1'}(w_b)+{\bf Z_D}(b)$ of capacity $C(P/N_D)$ where the noise is purely Gaussian of  second moment $N_D$. Here, choose the list decoding lattice $\Lambda_{s1}$ to have a fundamental Voronoi region of volume approaching $V_{s1} = \left( \frac{N_D}{ \alpha P + N_D} \right)^{n/2} V_1$ {asymptotically (analogous to \eqref{eq:Vs})} so that the size of the decoded list approaches $2^{n(R-C(\alpha P/(N_D)))}$. Notice that this choice of $V_{s1} < V_1$ and hence is permissible by Theorem 2 (as $P_1>P_{s1}$). 
 For the interesting case when $R$ approaches $\frac{1}{2}\log\left(1+\frac{P+P_R+2\sqrt{\bar{\alpha}P P_R}}{N_D} \right) $ (and hence list decoding is needed / relevant),  $V_{c1} = \left( \frac{N_D}{ P + P_R + 2\sqrt{\bar{\alpha}PP_R} + N_D} \right)^{n/2} V_1$ asymptotically in the sense of \eqref{eq:Vs}. Thus $V_{c1} < V_{s1} < V_1$ as needed.%The decoded list is denoted by $L_{1-2}(w_b)$ and will be used in the next block $b+1$.

 4. The receiver now 
 %determines $w_{b-1}$ by intersecting the two independent lists of possible messages: $L_{R-2}^{w_{b-1}}({\bf Y_2}(b)$} (obtained in step 2) and $L_{1-2}(w_{b-1})$ (obtained in step 3) to obtain, with high probability,  the unique $\hat{w}_{b-1}$ as the decoded message. 
 decodes $w_{b-1}$ by  intersecting two independent lists $L_{R-D}^{w_{b-1}}({\bf Y_D}(b))$ and $L_{S-D}^{w_{b-1}}({\bf Y_D}(b-1)- \kappa{\bf X_2'}(w_{b-2}))$ and declares a success if there is a unique $w_{b-1}$  in this intersection. 
 Errors are declared if there is no, or multiple messages in this intersection. We are guaranteed by Theorem \ref{thm:list} that the correct message will lie in each list, and hence also in their intersection, with high probability by appropriate choice $V_{s1}$ and $V_{s2}$.  To see that no more than one message will lie in the list, notice that the two lists are {\it independent} due to the random and independent mappings between the message and two codeword sets.  Thus, following the arguments surrounding \cite[Eq. (27) and Lemma 3]{Cover:1979}, or alternatively by independence of the lists and applying \cite[Packing Lemma]{elgamalkim}, with high probability, there is no more than one correct message in this intersection if $R - C(\alpha P/(N_2)) -R_R <0$, or
\begin{align*} R & < \frac{1}{2} \log \left( 1 + \frac{\alpha P}{N_D} \right) + R_R < \frac{1}{2} \log \left( 1 + \frac{ P + P_R + 2\sqrt{\bar{\alpha} P P_R} } { N_D} \right).
\end{align*}

%Thus, the rate region of \eqref{eq:DF} is achieved.
\end{proof}
%{\bf One comment on probability of error analysis:}
%We do not repeat every step of probability of error analysis in \cite{Cover:1979}. The most important step of the successive decoding scheme is that it is highly possible that there is a unique element in the intersection of two sets. The probability of error in this step may be expressed as
%{\small \begin{align*}
%&P\{ \text{there is more than one element in the intersection of $s_b$ } \\
%&\qquad \text{and $L(w_{b-1})$ }| \text{the decoding of last block is correct} \} \\
%=& P \{ \text{there exists $w \neq w_{b-1}$ such that $w \in L(w_{b-1}) \cap s_b$ }| \text{the} \\
%&\qquad \text{decoding of last block is correct} \} \\
%\leq& \sum_{w \neq w_{b-1}, w \in L(w_{b-1})} P(w \in s_b|\text{the decoding of last block is correct}) \\
%=& ( 2^{n(R-C(\alpha P/(N + N_R)))} - 1) 2^{-nR_R}.
%\end{align*} }
%\nrd{Yiwei: are you sure this last step is an equality? I think so, since there are exactly how many codewords in the list when the two lattices are nested.Also we choose the threshold point
%where we have chosen
%%$V_s = \left(\frac{N+N_R}{\alpha P+N+N_R}\right)^{n/2} \, V$}.
%%If $ R < C(\alpha P/(N+N_R)) + R_R $, the probability of error of this step goes to zero as $n \rightarrow \infty$. Here, we choose
%a ``good'' lattice chain with $V_s = \left(\frac{N+N_R}{\alpha P+N+N_R}\right)^{n/2} \, V$ so that the size of the decoded list is $2^{n(R-C(\alpha P/(N + N_R)))}$.
%We thus see that ``cooperation'' in this simple three node relay channel is achieved in a simila

{\begin{remark}
While we have mimicked the regular encoding / sliding window decoding method to achieve the DF rate, lattice list decoding may equally be used in the irregular encoding and backwards decoding schemes. 
The intuition we want to reinforce is that one may obtain similar results to random-coding based DF schemes using lattice codes by intersecting multiple independent lists to decode a unique message. Furthermore, as the lattice list decoder is a Euclidean lattice decoder, it does not increase the complexity at the decoder. We note that using lists is not necessary -- other novel lattice-based schemes can be used instead of lattice list decoding such as \cite{Nokleby:2011} to achieve the same DF rate region.
\end{remark}}

\subsection{DF for the multi-relay Gaussian relay channel}
We now show that nested lattice codes may also be used to achieve the DF rates of the single source, single destination multi-level relay channel \cite{xie_kumar_04, xie2005achievable, Kramer:Gastpar:Gupta}. Here, all definitions remain the same as in Section \ref{subsec:DF}; changing the channel model to account for an arbitrary number of full-duplex relays. For the 2 relay scenario we show the input/output relations used in deriving achievable rates in Figure \ref{fig:relay-channel}. In general we would for example have $Y_2 = X_1+X_2+X_3+Z_2$, but that, for our achievability scheme we assume a relay order (e.g. 2 then 3) which results in the equivalent input/output equation $Y_2=X_1+Z_2$ at node 2. This is equivalent due to the achievability scheme we will propose combined with the assumed relaying order, in which node 2 will be able to cancel out all signals transmitted by itself as well as node 3 (more generally, node $i$ may cancel  out all relay transmissions ``further'' in the relay order than itself).  %Figure \ref{fig:relay-channel} depicts the channel model specialized to our achievability scheme.

The central idea remains the same -- we cooperate via a series of lattice list decoders and replace multiple joint typicality checks with the intersection of multiple independent lists obtained via the lattice list decoder. For clarity, we focus on the two-relay case as in Figure \ref{fig:relay-channel}, but the results may be extended to the $N$-relay case in a straightforward manner.  Let $\pi(\cdot)$ denote a permutation (or ordering) of the relays. In the $N=2$ case as shown in Figure \ref{fig:relay-channel} we have two possible permutations:  the first the identity permutation $\pi(2) = 2, \pi(3)=3$ and  the second $\pi(2) = 3, \pi(3)=2$. 
%For the general $N$-node relay channel we have $N!$ such permutations or relay orders. %multiple list decoding.

%\begin{figure*}
%\centering
%\includegraphics[width=17cm]{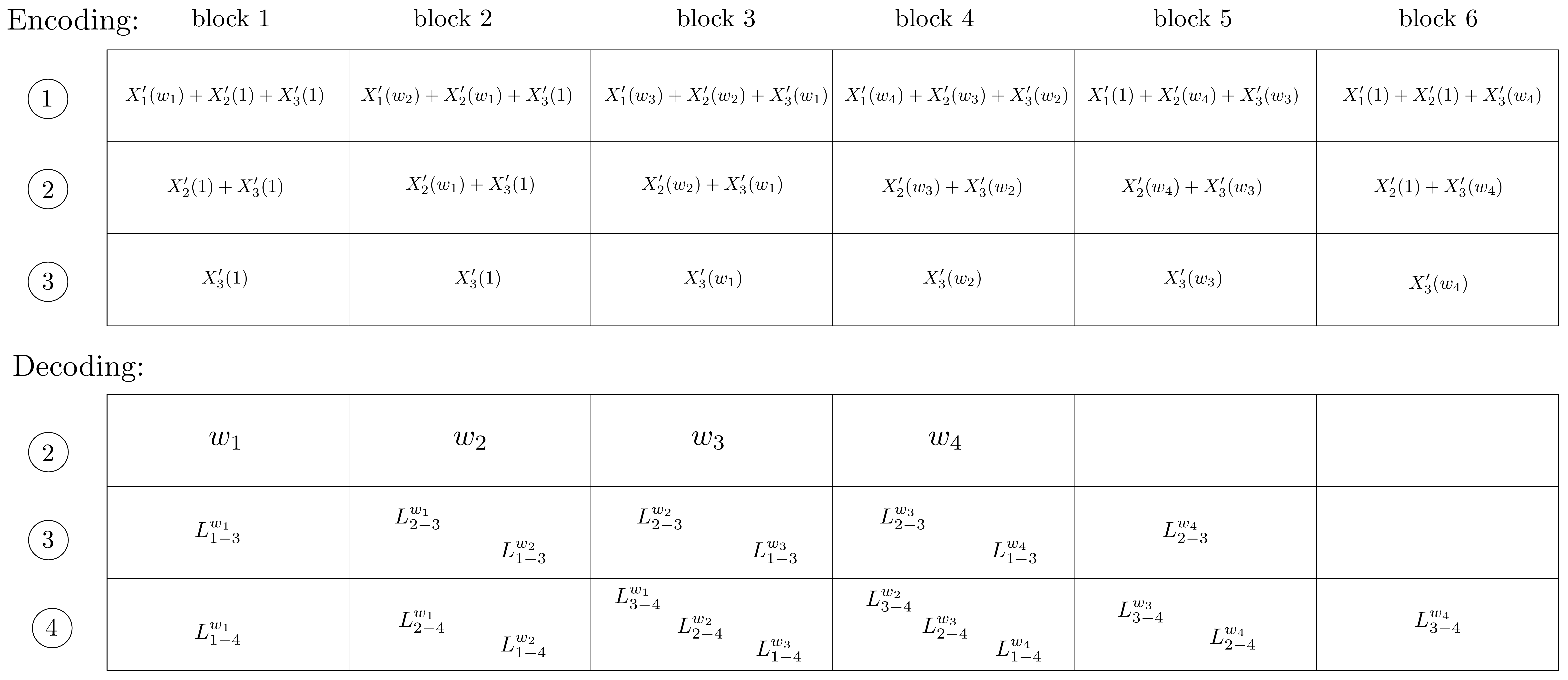}
%\caption{Lattice Decode-and-Forward scheme for the AWGN multi-relay channel.}
%\label{DF for multirelay}
%\end{figure*}

The channel model is expressed as (a node's own signal is omitted as it may subtract it off) %their own signals are omitted here)
\begin{align*}
{\bf Y_2} &= {\bf X_1} + {\bf X_3} + {\bf Z_2} \\
{\bf Y_3} &= {\bf X_1} + {\bf X_2} + {\bf Z_3} \\
{\bf Y_4} &= {\bf X_1} + {\bf X_2} + {\bf X_3} + {\bf Z_4},
\end{align*}
where ${\bf Z_2} \sim{\cal N}({\bf 0}, N_2{\bf I})$, ${\bf Z_3}\sim{\cal N}({\bf 0}, N_3{\bf I})$ and ${\bf Z_4}\sim{\cal N}({\bf 0}, N_4{\bf I})$, and nodes are subject to input power constraints
% are sequences of independent identically distributed Gaussian random variables with mean zero and variances $N_2$, $N_3$ and $N_4$ respectively. 
%If we impose average power constraints $P_1$, $P_2$,  and  $P_3$ at the Node 1, Node 2 and Node 3,   we require 
$\frac{1}{n} E[{\bf X_1}^T{\bf X_1}] \leq P_1$ , $\frac{1}{n} E[{\bf X_2}^T{\bf X_2}]\leq P_2$, and $\frac{1}{n} E[{\bf X_3}^T{\bf X_3}]\leq P_3$.

\begin{theorem}
{\it Lattices achieve the DF rate achieved by Gaussian random codebooks for the multi-relay channel.} The following rate $R$ is achievable using nested lattice codes for the Gaussian two relay channel described by \cite{xie_kumar_04}:
\begin{align*}
R < & \max_{\pi(\cdot)} \max_{0 \leq \alpha_1, \beta_1, \alpha_2 \leq 1} \min  \left\{ C \left(  \frac{\alpha_1 P_1}{N_{\pi(2)}} \right) , C \left( \frac{\alpha_1 P_1 + (\sqrt{\beta_1P_1} + \sqrt{\alpha_2 P_{\pi(2)}})^2}{N_{\pi(3)}} \right) \right., \\
& \left. C\left(\frac{\alpha_1P_1 + \left(\sqrt{\beta_1P_1} + \sqrt{\alpha_2P_{\pi(2)}}\right)^2 + \left(\sqrt{(1 - \alpha_1-\beta_1P_1)} + \sqrt{(1-\alpha_2)P_{\pi(2)}} +\sqrt{P_{\pi(3)}} \right)^2 }{ N_4}\right) \right\} 
\end{align*}
%Notice we can also choose any order of these relay. 
\label{thm:DFm}
\end{theorem}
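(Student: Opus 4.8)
The plan is to mimic the single-relay scheme of Theorem~\ref{thm:DF}, replacing its two-list intersection by a cascade of list decoders, adding one list per relay exactly as the regular-encoding/sliding-window multi-level DF scheme of \cite{xie2005achievable} stacks one more joint-typicality check per relay. First I would fix a relay order $\pi$ (the outer $\max_{\pi(\cdot)}$ merely selects the better of the two orders), so that the transmission chain is source $1 \to \pi(2) \to \pi(3) \to$ destination $4$, and use block-Markov encoding over $B$ blocks in which each successive node along the chain decodes a given message one block later than the previous node: the first relay with no delay, the second relay one block later, and the destination two blocks later. I would split the source power into three coherent ``roles'' — a brand-new part $\alpha_1 P_1$, a one-block-old part $\beta_1 P_1$, and a two-block-old part $(1-\alpha_1-\beta_1)P_1$ — split the first relay's power into $\alpha_2 P_{\pi(2)}$ (one-old) and $(1-\alpha_2)P_{\pi(2)}$ (two-old), and have the second relay spend all of $P_{\pi(3)}$ on the two-old part.

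For the codebook I would draw three independent message-to-lattice maps $\mathbf{t}^{(a)},\mathbf{t}^{(b)},\mathbf{t}^{(c)}$ for the new, one-old, and two-old roles, each built on a nested lattice chain supplied by Theorem~\ref{thm:nam} with the extra list-decoding lattice $\Lambda_s$ inserted as in Theorem~\ref{thm:list}. The crucial design choice is that, for a fixed role, the source and the relays transmit appropriately scaled copies of the \emph{same} underlying dithered codeword: e.g. for the one-old role both node $1$ and node $\pi(2)$ send scalar multiples of $(\mathbf{t}^{(b)}(w)-\mathbf{U}^{(b)})\bmod\Lambda^{(b)}$, with scalings chosen so that their sum is a single scaled lattice point of power $(\sqrt{\beta_1 P_1}+\sqrt{\alpha_2 P_{\pi(2)}})^2$, which by the Crypto Lemma (Lemma~\ref{lem:crypto}) is uniform over the Voronoi region of a scaled, hence still Rogers-good, lattice. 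This reconciles the two competing requirements: coherent beamforming needs a shared codeword across nodes, whereas the later list-intersection argument needs independence, which is supplied across the three roles $a,b,c$.

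I would then run the decoders in chain order. At node $\pi(2)$, after subtracting its own and node $\pi(3)$'s (further-in-order) transmissions together with the already-known forwarded copies of old messages, only the clean new-message signal of power $\alpha_1 P_1$ over noise $N_{\pi(2)}$ remains, so Lemma~\ref{lem:unique} gives unique decoding of $w_b$ when $R<C(\alpha_1 P_1/N_{\pi(2)})$; this is the first term. At node $\pi(3)$ I would, over a two-block window, form one list from the block in which the target message was new (useful power $\alpha_1 P_1$, noise $N_{\pi(3)}$) and a second from the block in which it was one-old (useful power $(\sqrt{\beta_1 P_1}+\sqrt{\alpha_2 P_{\pi(2)}})^2$, with the still-undecoded new message of power $\alpha_1 P_1$ folded into the mixed noise of Theorem~\ref{thm:list}); intersecting the two independent lists pins down a unique message whenever $R$ is below the sum of the two list rates, and a telescoping $\tfrac12\log$ computation collapses this sum to the second term. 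At the destination I would repeat this with a three-block window, adding a third list for the two-old stage (useful power $(\sqrt{(1-\alpha_1-\beta_1)P_1}+\sqrt{(1-\alpha_2)P_{\pi(2)}}+\sqrt{P_{\pi(3)}})^2$, with the earlier-stage signals again entering as mixed noise); the three list rates telescope to the third term. Uniqueness in each intersection follows from the independence of the role maps together with the packing-lemma argument (as around \cite[Lemma 3]{Cover:1979}) used in Theorem~\ref{thm:DF}.

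The routine parts are the telescoping identities, already verified in the single-relay case, and the bookkeeping of which signals each node may cancel. The main obstacle I anticipate is purely structural: choosing all the lattice second moments and nesting ratios so that Theorem~\ref{thm:nam} simultaneously realizes the three chains, the list-decoding lattices $\Lambda_s$ at each stage, and — most delicately — the shared-codeword/scaling arrangement that makes every coherent combination land on a single scaled Rogers-good lattice, so that the interference seen at each list decoder is exactly the Gaussian-plus-uniform mixed noise to which Theorem~\ref{thm:list} applies. Once the two-relay case is organized this way, extending to $N$ relays is a matter of adding one role, one transmission stage, and one list per additional relay, and re-running the same telescoping argument.
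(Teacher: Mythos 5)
Your proposal is correct and follows essentially the same route as the paper's proof in Appendix B: the same three-way power split with three independently mapped codebooks (one per ``role''), the same coherent transmission of scaled copies of a shared dithered codeword across nodes, unique decoding at the first relay via Lemma \ref{lem:unique}, and the same cascade of lattice list decoders (two lists intersected at the second relay, three at the destination) whose rates telescope to the three terms of the bound. The structural concern you flag---arranging the nested chains and list lattices so every coherent combination is a scaled Rogers-good lattice point and the residual interference is exactly the mixed noise of Theorem \ref{thm:list}---is handled in the paper precisely as you anticipate, via Theorem \ref{thm:nam} with nesting order determined by the system parameters.
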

%\nrd{\begin{remark}
%We note that the above rate may be extended to the $N$-relay channel as \cite{xie_kumar_04}, $N\in {\mathbb N}$, nothing that we would take the convex hull over the union of all $N!$ relay orders. 
%\end{remark}}
The proof of Theorem \ref{thm:DFm} may be found in Appendix \ref{app:DFm}, and follows along the same lines as Theorem \ref{thm:DF}.
%%%%%%%%%%%%%%%%%%%%%%%%%%%%%%%%%%%%%%%%%%%%%%%%%%%%%%%%%%%
\section{Multi-source Decode and Forward -- combining compute-and-forward and DF}
\label{sec:DFm}

We now illustrate how  list decoding may be combined with the linearity of lattice codes in more general networks by considering two examples. In particular, we consider relay networks in which two messages are communicated, along relayed and direct links, as opposed to the single message case previously considered. The relay channel may be viewed as strictly cooperative in the sense that all nodes aid in the transmission of the same message and the only impairment is noise; the presence of multiple messages leads to the notion of interference and the possibility of decoding combinations of messages. 

We again focus on demonstrating the utility of lattices in DF-based achievability schemes. In the previous section it was demonstrated that lattices may achieve {\it the same} rates as Gaussian random coding based schemes. Here, the presence of multiple messages/sources gives lattices a potential rate benefit over random coding-based schemes, as encoders and decoders may exploit the linearity of the lattice codes to better decode a linear combination of messages. Often, such a linear combination is sufficient to extract the desired messages if combined with the appropriate side-information, and may enlarge the achievable rate region for certain channel conditions.   
In this section, we demonstrate two examples of combining Compute-and-Forward based decoding of the sum of signals at relays with direct link side-information in:
%a few examples in which the linearity of lattice codes may be combined with their ability to cooperate, as demonstrated in the previous section, to obtain potentially larger achievable rate regions. 
%As examples we consider 
1) the two-way relay channel with direct links and 2) the multiple-access relay channel. To the best of our knowledge, these are the first lattice-coding based achievable rate regions for these channels. % is the first time lattice codes have been used to derive achievable rate regions for these channels.

\begin{figure}
\centering
\includegraphics[width=16cm]{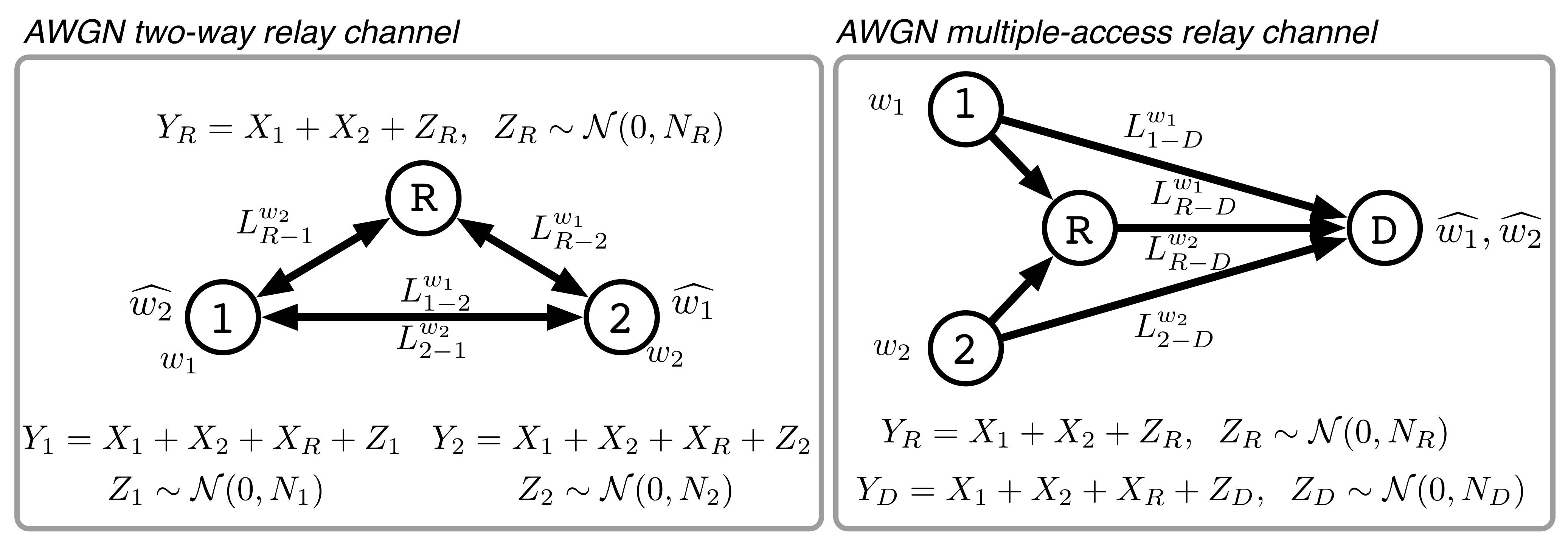}
\caption{The AWGN two-way relay channel with direct links and the AWGN multiple-access relay channel. We illustrate the lists $L_{i-j}^{w}$ of messages $w$ carried by the codewords at node $i$ and  list decoded according to Theorem \ref{thm:list} at node $j$.}
\label{fig:multi-source relay}
\end{figure}

\subsection{The two-way Gaussian relay channel with direct links}
\label{subsec:TWRC}

The two-way relay channel is the logical extension of the classical relay channel for one-way point-to-point  communication aided by a relay to allow for two-way communication.
%A  large body of work  has emerged %\footnote{We unfortunately only cite a few examples of the large body of work in this area.}
% \cite{Rankov:2006, Kim:comparison, avestimehr2010capacity, Nam:IEEE} (a few examples), which may be differentiated based on the channel (direct link between terminal or not), the type of relaying (compress, decode, amplify-and forward, or structured codes), and duplexing (half-duplex or full-duplex).
While the capacity region is in general unknown, it is known for half-duplex channel models under the 2-phase MABC protocol  \cite{Kim:comparison}, to within 1/2 bit for the full-duplex Gaussian channel model with no direct links  \cite{Nam:IEEE,Narayanan:2010}, and to within 2 bits for the same model with direct links in certain cases \cite{avestimehr2010capacity}.

%\vspace{-0.3cm}

%\vspace{0.2cm}
%\subsubsection{Prior achievable rate regions with lattices}
Random coding techniques employing  DF, CF, and AF relays have been the most common in deriving achievable rate regions for the two-way relay channel, but a handful of work \cite{Baik:2008, Nam:IEEE, Narayanan:2010, ong2010capacity} has considered lattice-based schemes which, in a DF-like setting,  effectively exploit the additive nature of the Gaussian noise channel in allowing the sum  of the two transmitted lattice points to be decoded at the relay. The intuitive gains of decoding the sum of the messages rather than the individual messages stem from the absence of the classical multiple-access sum constraints.  This sum-rate point is forwarded to the terminal 
%(which may be re-encoded using a random or lattice code), 
which utilizes its own-message side-information to subtract off its own message from the decoded sum.
%In all prior work using lattice codes at the terminals, it has been assumed that direct links between the terminal nodes are absent - i.e. all information passes through the relay. This
While random coding schemes have been used in deriving achievable rate regions in the presence of direct links, lattice codes -- of interest in order to exploit the ability to decode the sum of messages at the relay -- have so far not been used. 
We present such a lattice-based scheme next. 
% been successfully utilized at the terminal nodes in the presence of direct links.
%The absence of direct links simplifies the problem significantly, as the tradeoff between sending information directly to the opposite terminal and transmitting a portion through the relay is missing.
%We attack this next and use a lattice version of the random binning technique of \cite{xie2007network} for the two-way relay channel, combined with lattice list decoding at the end users.

%\vspace{-0.3cm}
%\vspace{0.2cm}
%\subsubsection{Channel model}

The two-way Gaussian relay channel with direct links consists of two terminal nodes with inputs $X_1, X_2$ with power constraints $P_1, P_2$ (without loss of generality, it is assumed $P_1 \geq P_2$) and outputs $Y_1, Y_2$ which wish to exchange messages $w_1\in \{1,2,\cdots , 2^{nR_1}\}$ and $w_2\in \{1,2,\cdots, 2^{nR_2}\}$ with the help of the relay with input $X_R$ of power $P_R$ and output $Y_R$. We assume, without loss of generality (WLOG),  the channel:
%a channel model described by the vector input/output relations (recalling that bold font denotes vectors of length $n$)%\begin{align*}
%{Y_1} &= {X_R} +  {X_2} + {Z_1}, \;\;\;\; { Z_1} \sim {\cal N}(0,N_1) \\
%{Y_2} &= { X_R} + { X_1} + { Z_2}, \;\;\;\; { Z_2} \sim {\cal N}(0,N_2) \\
%{Y_R} &= { X_1} + { X_2} + { Z_R}, \;\;\;\; { Z_R} \sim {\cal N}(0,N_R).
%\end{align*}
%\begin{align*}
%{\bf Y_1} &= {\bf X_R} + {\bf X_2} + {\bf Z_1}, \;\;\;\; {\bf Z_1} \sim \mbox{ i.i.d. }{\cal N}(0,N_1) \\
%{\bf Y_2} &= {\bf X_R} + {\bf X_1} + {\bf Z_2}, \;\;\;\; {\bf Z_2} \sim \mbox{ i.i.d. }{\cal N}(0,N_2) \\
%{\bf Y_R} &= {\bf X_1} + {\bf X_2} + {\bf Z_R}, \;\;\;\; {\bf Z_R} \sim \mbox{ i.i.d. }{\cal N}(0,N_R)
%\end{align*}
\begin{align*}
{\bf Y_1} &= {\bf X_R} + h_{21}{\bf X_2} + {\bf Z_1}, \;\;\;\; {\bf Z_1} \sim {\cal N}({\bf 0},N_1{\bf I}) \\
{\bf Y_2} &= {\bf X_R} + h_{12}{\bf X_1} + {\bf Z_2}, \;\;\;\; {\bf Z_2} \sim {\cal N}({\bf 0},N_2{\bf I}) \\
{\bf Y_R} &= {\bf X_1} + {\bf X_2} + {\bf Z_R}, \;\;\;\; {\bf Z_R} \sim {\cal N}({\bf 0},N_R{\bf I}),
\end{align*}
subject to input power constraints $\frac{1}{n} E[{\bf X_1}^T{\bf X_1}] \leq P_1, \, \frac{1}{n} E[{\bf X_2}^T{\bf X_2}] \leq P_2, \, \frac{1}{n} E[{\bf X_R}^T{\bf X_R}] \leq P_R$ and real constants $h_{12}, h_{21}$. 
The channel model is shown in Figure \ref{fig:multi-source relay}, and all input and output alphabets are  ${\mathbb R}$.
%Note that, due to the additive nature of AWGN channel models, we have dropped the transmitters' own signal for simplicity at its receiver (i.e. $Y_1 = X_1+X_R+X_2+Z_1$ but $X_1$ is known by the receiver and so it may subtract it off in this full duplex channel model).
%This channel model may be assumed WLOG  as we allow for arbitrary noise and input powers.
%\nrd{I finally agree!} %(There are six links among the three nodes, which correspond to six SNR parameters. There always exist a solution of $(P_1, P_2, P_R, N_1, N_2, N_R)$ for given $(SNR_{1R}, SNR_{2R}, SNR_{R1}, SNR_{R2}, SNR_{12}, SNR_{21} )$ with six equations: $\frac{P_1}{N_R} = SNR_{1R}$, $\frac{P_2}{N_R} = SNR_{2R}$, $\frac{P_R}{N_1} = SNR_{R1}$, $\frac{P_R}{N_2} = SNR_{R2}$, $\frac{P_1}{N_2} = SNR_{12}$, $\frac{P_2}{N_1} = SNR_{21}$. So the channel gains can be assumed to be identical without loss of generality .)
 %We call this two-way relay channel {\it physically degraded} if
 %${ Z_1} ={Z_R} +{Z_1'}$ (${ Z_1^\prime} \sim {\cal N}(0,N_1^\prime)$) and ${Z_2} = {Z_R} + { Z_2'}$ (${ Z_2^\prime} \sim {\cal N}(0,N_2^\prime)$); and {\it stochastically degraded} if $N_1, N_2 \geq N_R$. We again let bold fonts denote sequences / vectors of length $n$. %${\bf Z_1}$ denote an $n$-sequence
% ${\bf Z_1} ={\bf Z_R} +{\bf Z_1'}$ (${\bf Z_1^\prime} \sim {\cal N}(0,N_1^\prime)$) and ${\bf Z_2} = {\bf Z_R} + {\bf Z_2'}$ (${\bf Z_2^\prime} \sim {\cal N}(0,N_2^\prime)$); and {\it stochastically degraded} if $N_1, N_2 \geq N_R$.  % for ${N_1', N_2'>0}$.

An $(2^{nR_1}, 2^{nR_2}, n)$ code for the two-relay channel consists of the two sets of messages $w_i$, $i=1,2$ uniformly distributed 
over  ${\cal M}_i : = \{1,2,\cdots , 2^{nR_i}\}$,  and two encoding functions $X_i^n: {\cal M}_i \rightarrow {\mathbb R}^n$ (shortened to ${\bf X_i}$) satisfying the 
power constraints $P_i$, a set of relay functions $\{f_j\}_{j=1}^n$ such that the relay channel input at time $j$ is a function of the 
previously received relay channel outputs from channel uses $1$ to $j-1$, $X_{R,j} = f_j(Y_{R,1}, \cdots,Y_{R,j-1})$, and finally two 
decoding functions $g_i: {\cal Y}_i^n \times {\cal M}_i \rightarrow {\cal M}_{\bar{i}}$ which yields the message estimates $\hat{w}_{\bar{i}}: = g_i(Y_i^n, w_i)$ for $\bar{i} = \{1,2\}\setminus i$. 
%We let ${\bf X_R}$ denote $(X_{R,1}, X_{R,2}, \cdots, X_{R,n})$ for blocklength $n$. 
We define the average probability of error of the code to be $P_{n,e} : = \frac{1}{2^{n(R_1+R_2)}} \sum_{w_1\in {\cal M}_1, w_2\in {\cal M}_2} \Pr\{(\hat{w_1}, \hat{w_2}) \neq (w_1,w_2)|(w_1,w_2) \mbox{ sent}\}$. The rate 
pair $(R_1,R_2)$ is then said to be achievable by the two-relay channel if, for any $\epsilon>0$ and for sufficiently large $n$, there 
exists an $(2^{nR_1},2^{nR_2},n)$ code such that $P_{n,e} < \epsilon$. The capacity region $C$ of the two-way relay channel is the 
supremum of the set of achievable rate pairs.

\begin{theorem}
%{\bf Achievable rate region in which both directions use the direct and relay links.}
\label{thm:two-way}
{\it Lattices in two-way relay channels with direct links.} The following rates are achievable for the two-way AWGN relay channel with direct links
\begin{align}
R_1 & \leq \min \left( \left[\frac{1}{2} \log \left( \frac{P_1}{P_1 + P_2} + \frac{P_1}{N_R}\right)\right]^+ , \frac{1}{2} \log\left(1+\frac{h_{12}^2P_1+P_R}{N_2}\right)\right) \label{eq:R21}\\
R_2 &\leq  \min \left( \left[\frac{1}{2} \log \left( \frac{P_2}{P_1 + P_2} + \frac{P_2}{N_R}\right)\right]^+ , \frac{1}{2} \log\left(1+\frac{h_{21}^2P_2+P_R}{N_1}\right)\right) \label{eq:R22}.
\end{align}
\end{theorem}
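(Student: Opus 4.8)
The plan is to combine the Compute-and-Forward idea of decoding the \emph{sum} of the two terminals' lattice codewords at the relay with the lattice list decoder of Theorem~\ref{thm:list} at the two terminals, all wrapped in a block-Markov/sliding-window structure analogous to the single-relay scheme of Theorem~\ref{thm:DF}. By the symmetry of the two bounds I would argue only the bound on $R_1$ (terminal $1 \to$ terminal $2$); the bound on $R_2$ follows by swapping the roles of the terminals.

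First I would construct the codebooks using Theorem~\ref{thm:nam}, building two nested lattice chains, one per terminal, whose shaping lattices $\Lambda_1,\Lambda_2$ have second moments $P_1,P_2$ so that the dithered mod-$\Lambda$ codewords ${\bf X}_1,{\bf X}_2$ meet the power constraints, and with a common fine-lattice structure so that the mod-$\Lambda$ sum ${\bf t}_1(w_1)+{\bf t}_2(w_2)$ is again a lattice point. I would also reserve sublattices $\Lambda_{s1},\Lambda_{s2}$ for list decoding as in Theorem~\ref{thm:list}, give each message two independent codeword mappings, and have the relay re-encode the decoded sum through its own independent mapping (scaled to power $P_R$). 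This independent-mapping device, mirroring the ${\bf t}_1/{\bf t}_2$ split in Theorem~\ref{thm:DF}, is what makes the two lists formed at each terminal statistically independent. Encoding is block-Markov: in block $b$ the terminals send fresh codewords for $w_{1,b},w_{2,b}$ while the relay forwards a re-encoding of the sum $s_{b-1}$ decoded in the previous block.

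The two bounds then arise from two separate decoding events. The first comes from the relay: on ${\bf Y}_R={\bf X}_1+{\bf X}_2+{\bf Z}_R$ it decodes the lattice sum $s_b$ by the Compute-and-Forward/nested-lattice argument (the existence half of Theorem~\ref{thm:nam} together with Lemma~\ref{lem:unique}), which with $P_1\ge P_2$ succeeds provided $R_i<\bigl[\tfrac12\log(\tfrac{P_i}{P_1+P_2}+\tfrac{P_i}{N_R})\bigr]^+$; this is the first term in each $\min$, and the $[\cdot]^+$ merely reflects that no positive rate remains once the bracket drops below one. The second comes from terminal $2$ decoding $w_{1,b-1}$ by intersecting two lists: (i) from the direct link in block $b-1$, after subtracting the already-decoded relay contribution, it list-decodes $w_{1,b-1}$ from $h_{12}{\bf X}_1(w_{1,b-1})+{\bf Z}_2$ via Theorem~\ref{thm:list}, giving a list of size $2^{n(R_1-C(h_{12}^2P_1/N_2))}$; and (ii) from the relay transmission in block $b$ carrying $s_{b-1}$, it list-decodes $s_{b-1}$ while treating the fresh direct-link term $h_{12}{\bf X}_1(w_{1,b})$ as additional noise (which is precisely why the mixed-noise form of Theorem~\ref{thm:list} is needed, since this interference is a lattice codeword uniform over a Rogers-good Voronoi region), then uses its own-message side information $w_{2,b-1}$ to convert $s_{b-1}$ into $w_{1,b-1}$, giving a list of size $2^{n(R_1-C(P_R/(h_{12}^2P_1+N_2)))}$. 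Intersecting two independent such lists yields a unique message with high probability as long as $R_1$ is below the sum of the two list rates, and the short computation $C(h_{12}^2P_1/N_2)+C(P_R/(h_{12}^2P_1+N_2))=\tfrac12\log(1+\tfrac{h_{12}^2P_1+P_R}{N_2})$ gives the second term in the $\min$ for $R_1$. A union bound over the relay-decoding, the two list-decoding, and the list-intersection error events, each vanishing by Theorems~\ref{thm:list} and~\ref{thm:nam} and Lemma~\ref{lem:unique}, completes the achievability.

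The step I expect to be the main obstacle is guaranteeing that the two lists at each terminal are genuinely independent while keeping the lattice structure compatible with sum-decoding at the relay. The relay is constrained to decode and forward a \emph{single} lattice sum, so both the relay-link list and the direct-link list are governed by $w_{1,b-1}$; independence must be engineered through the independent relay re-encoding and the two-mapping construction, and I would verify it via the Crypto lemma (Lemma~\ref{lem:crypto}) so that the Packing-Lemma counting used in Theorem~\ref{thm:DF} still applies to the intersection. A secondary, purely technical point is checking that the volumes of the shaping, list-decoding, and coding lattices demanded by the relay bound and by the two terminal list sizes can all be realized simultaneously within one nested chain from Theorem~\ref{thm:nam}, exactly as was carried out in Step~3 of the proof of Theorem~\ref{thm:DF}.
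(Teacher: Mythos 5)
Your proposal is correct and follows essentially the same route as the paper's proof: compute-and-forward decoding of the lattice sum at the relay (yielding the $\left[\tfrac12\log\left(\tfrac{P_i}{P_1+P_2}+\tfrac{P_i}{N_R}\right)\right]^+$ constraints), independent re-encoding of that sum at the relay, mixed-noise lattice list decoding via Theorem~\ref{thm:list} of both the direct-link and relayed-link observations in a sliding-window fashion, own-message side information to convert the sum-list into a message list, and intersection of the two independent lists to get $C(h_{12}^2P_1/N_2)+C\left(P_R/(h_{12}^2P_1+N_2)\right)=\tfrac12\log\left(1+\tfrac{h_{12}^2P_1+P_R}{N_2}\right)$. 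The only cosmetic difference is that you additionally give each message two codeword mappings as in Theorem~\ref{thm:DF}, whereas the paper needs only one mapping per terminal message, with list independence supplied exactly by the mechanism you identify: the relay's independent mapping from the decoded sum to its transmit codeword.
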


\begin{proof}
The achievability proof combines a lattice version of regular encoding/sliding window decoding scheme (to take advantage of the direct link), decoding of the sum of transmitted signals at the relay using nested coarse lattices to take care of the asymmetric powers, as in \cite{Nam:IEEE}, a lattice  binning technique equivalent to the random binning technique developed by \cite{xie2007network}, and lattice list decoding at the terminal nodes to combine direct and relayed information.
%We use the nested coarse lattice developed by \cite{Nam:IEEE} to solve the power asymmetry problem.
%We also use block Markov encoding,  nested lattice codes at the terminals,
%% so as to decode the sum of the lattice codewords at the relay,
%and the lattice list decoding scheme of Section \ref{subsec:list}, Theorem \ref{thm:list}.

\begin{figure*}
\centering
\includegraphics[width=14cm]{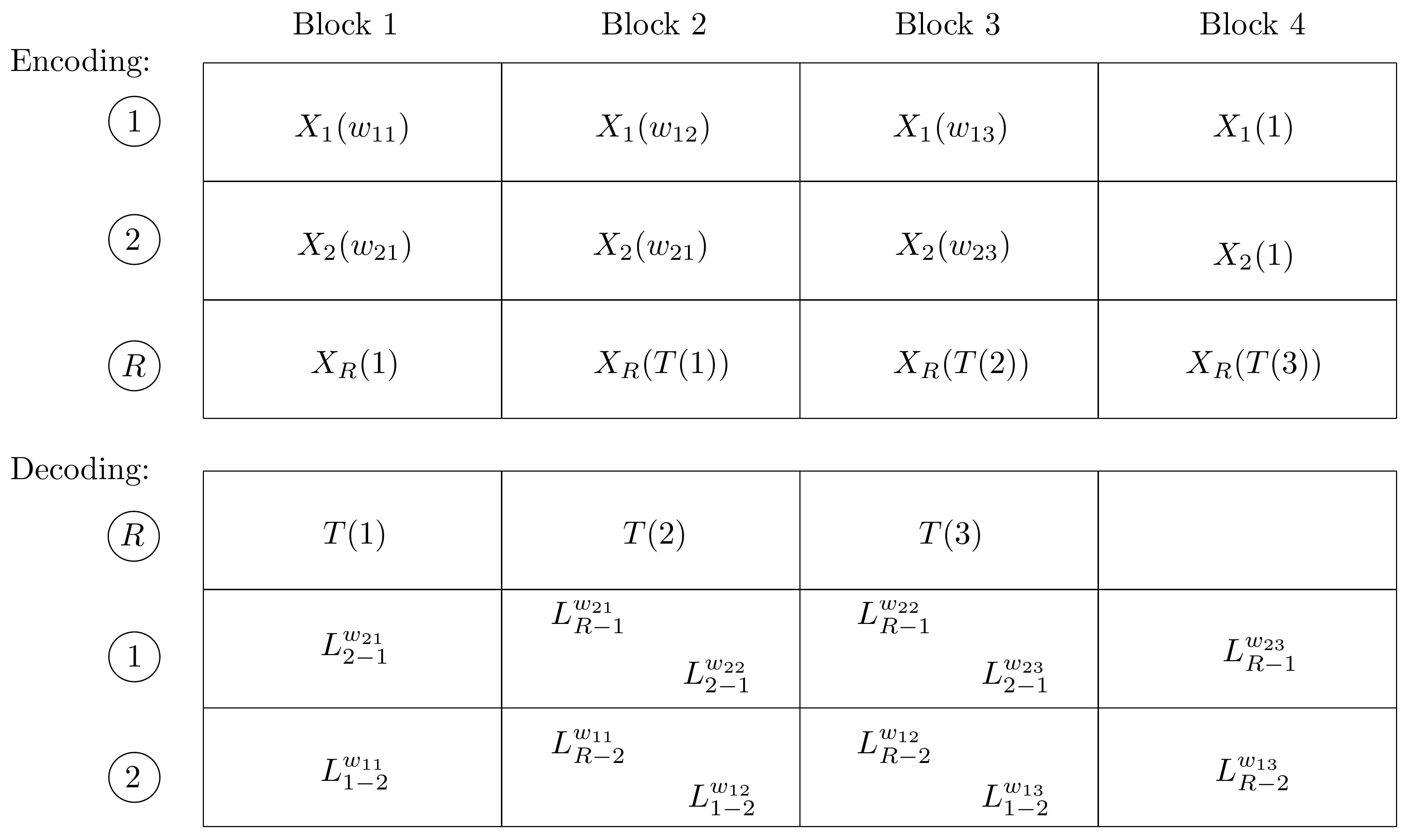}
\caption{Lattice Decode-and-Forward scheme for the AWGN two-way relay channel with direct links. } 
\label{DF for twoway}
\end{figure*}

{\bf Codebook construction:} 
Assume WLOG that $P_1>P_2$.  We construct two nested lattice chains according to Theorem \ref{thm:nam}. 
The first consists of the lattices $\Lambda_1, \Lambda_2, \Lambda_{s1}, \Lambda_{s2}, \Lambda_{c1}, \Lambda_{c2}$ all nested such that:
\begin{itemize}
\item  $\Lambda_1 \subseteq \Lambda_{s1}\subseteq \Lambda_{c1}$  and  $\Lambda_2 \subseteq \Lambda_{s2}\subseteq \Lambda_{c2}$.; the coarsest lattice is $\Lambda_1$ or $\Lambda_2$ and the finest is $\Lambda_{c1}$ or $\Lambda_{c2}$. 
%The overall ordering of the six lattices may 
%be determined in a manner similar to Theorem \ref{thm:DF} as follows: Suppose WLOG that $P_1>P_2$ and take 
\item $\sigma^2(\Lambda_1) = P_1, \sigma^2(\Lambda_2)=P_2$
\item the coding rate of $(\Lambda_1, \Lambda_{c1})$ is $R_1 = \frac{1}{n}\log\left(\frac{V_1}{V_{c1}}\right)= \frac{1}{2}\log\left(\frac{P_1}{P_{c1}}\right) + o_n(1)$, and that of $(\Lambda_2, \Lambda_{c2})$ is $R_2 = \frac{1}{n}\log\left(\frac{V_2}{V_{c2}}\right)= \frac{1}{2}\log\left(\frac{P_2}{P_{c2}}\right) + o_n(1)$. 
Associate each message $w_1 \in \{1,\dots,2^{nR_1}\}$ with ${\bf t_1}(w_1) \in \mathcal{C}_1 = \{ \Lambda_{c1} \cap \mathcal{V}_1 \}$ and each message $w_2 \in \{1,\dots,2^{nR_2}\}$ with ${\bf t_2}(w_2) \in \mathcal{C}_2 = \{ \Lambda_{c2} \cap \mathcal{V}_2 \}$. 
\item if $V_{c1}>V_{c2}$ (determined by relative values of $R_1,P_1$ and $R_2, P_2$ in the above), then $\Lambda_{c1}\subseteq \Lambda_{c2}$, implying $\Lambda_{c1}$ may be Rogers good and hence we may  guarantee the desired $V_{c1}$ by proper selection of $P_{c1}$ in Theorem \ref{thm:nam} $\left( \mbox{as }R_1  = \frac{1}{2}\log\left(\frac{P_1}{P_{c1}}\right) + o_n(1) = \frac{1}{n}\log\left(\frac{V_1}{V_{c1}}\right)\right)$; otherwise by proper selection of $\gamma$ 
in Theorem \ref{thm:nam} (and likewise for $\Lambda_{c2}$). 
\item the lattices $\Lambda_{s1}$ and $\Lambda_{s2}$ which will be 
used for lattice list decoding at node 2 and 1 respectively are both Rogers good and hence may be specified by the volumes of their 
fundamental Voronoi  regions $V_{s1}$ and $V_{s2}$ (under the constraints $V_1\geq V_{s1}\geq V_{c1}$ and $V_2\geq V_{s2}\geq V_{c2}$), or the corresponding $P_{c1}, P_{c2}$. These will be chosen in the course of the proof. 
\item Then final relative ordering of the six lattices 
will then depend on the relative sizes of their fundamental region volumes.
%and hence $\Lambda_1\subseteq \Lambda_2$. Since $\Lambda_1, \Lambda_2$ are 
%not the finest, they are Rogers good and hence the second moments of $P_1$ and $P_2$ determine $V_1$ and $V_2$. Now, let the coding rate of $(\Lambda_1, \Lambda_{c1})$ be $R_1 = \frac{1}{n}\log\left(\frac{V_1}{V_{c1}}\right)= \frac{1}{2}\log\left(\frac{P_1}{P_{c1}}\right) + o_n(1)$, and that of $(\Lambda_2, \Lambda_{c2})$ be $R_2 = \frac{1}{n}\log\left(\frac{V_2}{V_{c2}}\right)= \frac{1}{2}\log\left(\frac{P_2}{P_{c2}}\right) + o_n(1)$. 
%These determine the desired $V_{c1}$ (or $P_{c1}$) and $V_{c2}$ (or $P_{c2}$) as 
%functions of $R_1, P_1$ and $R_2, P_2$ respectively. 
\end{itemize}

We also construct a nested lattice chain of $\Lambda_R, \Lambda_{sR1}, \Lambda_{sR2}, \Lambda_{cR}$ according to Theorem \ref{thm:nam} such that:
\begin{itemize}
\item  $\Lambda_R \subseteq \Lambda_{sR1} \subseteq \Lambda_{sR2} \subseteq \Lambda_{cR}$ or  $\Lambda_R \subseteq \Lambda_{sR2} \subseteq \Lambda_{sR1} \subseteq \Lambda_{cR}$
%$\Lambda_R \subseteq \Lambda_{sR2} \subseteq \Lambda_{cR}$.  
\item $\sigma^2(\Lambda_R) = P_R$%, which determines $V_R$ as $\Lambda_R$ is Rogers good. 
\item the relay uses the codebook $\mathcal{C}_R = \{ \Lambda_{cR} \cap \mathcal{V}_R \}$ consisting of codewords ${\bf t_R}$. This codebook is of rate $R_R = \frac{1}{n}\log\left(\frac{V_R}{V_{cR}}\right) = \frac{1}{n}\log\left(\frac{V_1}{V_{c1}}\right)$ if $\Lambda_{c2}\subseteq \Lambda_{c1}$ and of rate  $R_R = \frac{1}{n}\log\left(\frac{V_R}{V_{cR}}\right) = \frac{1}{n}\log\left(\frac{V_1}{V_{c2}}\right)$ if $\Lambda_{c1}\subseteq \Lambda_{c2}$. This rate $R_R$ in turn fixes the choice of $\gamma$ in Theorem \ref{thm:nam}.  
%The codewords ${\bf t_R} \in \mathcal{C}_R = \{ \Lambda_{cR} \cap \mathcal{V}_R \}$ are randomly and uniformly one-to-one mapped to ${\bf T} = ({\bf t_1}(w_{1}) + {\bf t_2}(w_{2}) - Q_2({\bf t_2}(w_{2}) + {\bf U_2}({2}))) \mod \Lambda_1$. 
\item $\Lambda_{sR1}$ and $\Lambda_{sR2}$ are used to decode lists at the two destinations, and their 
relative nesting depends on $V_{sR1}$ and $V_{sR2}$ (or equivalently $P_{sR1}$ and $P_{sR2}$ as both are Rogers good) subject to $V_R\geq V_{sR1}\geq V_{cR}$ and $V_{cR}\geq V_{sR2}\geq V_R$ which will be specified in the course of the proof. 
\end{itemize}

\smallskip
\noindent
{\bf Encoding:} We use Block Markov encoding. Messages $w_{1b}\in \{1,2\cdots 2^{nR_1}\}$ and $w_{2b}\in \{1,2,\cdots 2^{nR_2}\}$ are the messages the two terminals wish to send in block $b$. %, which consists of $n$ channel uses.
Nodes 1 and 2 send ${\bf X_1}(w_{1b})$ and ${\bf X_2}(w_{2b})$:
\begin{align*}
 {\bf X_1}(w_{1b}) &= ({\bf t_1}(w_{1b}) - {\bf U_1}({b})) \mod \Lambda_1 \\
 {\bf X_2}(w_{2b}) &= ({\bf t_2}(w_{2b}) - {\bf U_2}(b)) \mod \Lambda_2, \end{align*}
 for dithers ${\bf U_1}(b), {\bf U_2}(b)$ known to all nodes which are i.i.d. uniformly distributed over ${\cal V}_1$ and ${\cal V}_2$ and vary from block to block. At the relay,  we assume that it has obtained
 \begin{equation} {\bf T}(b-1) = ({\bf t_1}(w_{1(b-1)}) + {\bf t_2}(w_{2(b-1)}) - Q_2({\bf t_2}(w_{2(b-1)}) + {\bf U_2}(b-1)) ) \mod \Lambda_1 \label{eq:T1}
\end{equation}
  in block $b-1$. 
Note that ${\bf T}(b-1)$ lies in $\{\Lambda_{c2}\cap {\cal V}_1\}$ if $\Lambda_{c1}\subseteq \Lambda_{c2}$ and in $\{\Lambda_{c1}\cap {\cal V}_1\}$ if $\Lambda_{c2}\subseteq \Lambda_{c1}$, and is furthermore uniformly distributed over this set consisting of $2^{nR_R}$ points.  We may thus associate each ${\bf T}(b-1)$ with an index say $i({\bf T}(b-1))$, 
  which the relay then uses as index for the codeword ${\bf t_R}(i({\bf T}(b-1)))$ in ${\cal C}_R$ (also of rate $R_R$). 
  To simplify notation and with some abuse of notation we write ${\bf  t_R} ({\bf T }(b-1) )$ instead of the indexed version ${\bf t_R}(i({\bf T}(b-1)))$.  
   The relay then sends %${\bf \hat{T}}$ is thrown uniformly (or binned) into the $2^{nR}$ bins, and $s({\bf \hat{T}})$ is defined as ${\bf \hat{T}}$'s bin index.
\begin{equation}
{\bf X_R}({\bf T}(b-1)) = ( {\bf  t_R} (  {\bf T }(b-1) ) + {\bf U_R}(b-1) ) \mod \Lambda_R,\label{eq:T2}
\end{equation}
 for ${\bf U_R}(b-1)$ a dither known to all nodes which is uniformly distributed over ${\cal V}_R$.
%\nrd{Yiwei: in the above, the bin index is $\hat{k}$ while for the one-way relay channel it was $s$ - we should stay consistent.(Changed to 's') Also, same question as before, whether the dither depends on each block message? Answered above}

\smallskip
\noindent
{\bf Decoding:} During block $b$, the following messages / signals are known / decoded at each node:
\begin{itemize}
\item Node 1:  knows $w_{11}, \cdots w_{1b}$, $w_{21}, w_{22}, \cdots w_{2(b-2)}$, decodes $w_{2(b-1)}$ 
\item Node 2:  knows $w_{21}, \cdots w_{2b}$, $w_{11}, w_{12}, \cdots w_{1(b-2)}$, decodes $w_{1(b-1)}$ 
\item Node $R$: knows ${\bf T}(1), {\bf T}(2), \cdots {\bf T}(b-1)$, decodes ${\bf T}(b)$
\end{itemize}
{\it Relay decoding:} The relay terminal receives $ {\bf Y_R}(b) = {\bf X_1} (w_{1b}) + {\bf X_2} (w_{2b}) + {\bf Z_R}(b)$, and, following the arguments of \cite{nazer2011compute, Nam:IEEE, Narayanan:2010}  can decode ${\bf T}(b) = ({\bf t_1}(w_{1b}) + {\bf t_2}(w_{2b}) - Q_2({\bf t_2}(w_{2b}) + {\bf U_2}(b))) \mod \Lambda_1 $ if
 \begin{align*}
R_1  \leq \left[\frac{1}{2} \log \left( \frac{P_1}{P_1 + P_2} + \frac{P_1}{N_R}\right)\right]^+,  \;\;   R_2 \leq \left[\frac{1}{2} \log \left( \frac{P_2}{P_1 + P_2} + \frac{P_2}{N_R}\right)\right]^+.
\end{align*}

%\vspace{-0.1in}
%This follows from arguments similar to those in \cite{nazer2011compute, Nam:IEEE}.

%\nrd{Yiwei: Here - where is the "intermediate" lattice list? The $\Lambda_{s1}$ and $\Lambda_{s2}$? Is there a problem in making this super long nested lattice and having them satisfy some rate constraints -- i.e. the generalization of Theorem 2.1 to longer lists? I am aware of this, the generalization is stated below the Theorem 2.1}
{\it Terminal 2 decoding:} Terminal 2 decodes $w_{1(b-1)}$ after block $b$ from the received signals 
\begin{align*}
 {\bf Y_2}(b-1) &= {\bf X_R} ( {\bf T}(b-2) )  + h_{12}{\bf X_1} (w_{1(b-1)}) + {\bf Z_2}(b-1) \\
 {\bf Y_2}(b) &= {\bf X_R} ( {\bf T}(b-1) )  + h_{12}{\bf X_1} (w_{1b}) + {\bf Z_2}(b).
 \end{align*}
This will generally follow the lattice version of regular encoding/sliding-window decoding scheme as described in Section \ref{subsec:DF}. That is, after block $b-1$, terminal 2 first forms ${\bf Y_2^*}(b-1) = {\bf Y_2}(b-1) -  {\bf X_R} ( {\bf T}(b-2) )$ since it has decoded $w_{1(b-2)}$ and knows its own $w_{2(b-2)}$ and hence may form  ${\bf X_R} ( {\bf T}(b-2) )$. 
Then it uses the list decoder of Theorem \ref{thm:list} to produce a list of messages
$w_{1(b-1)}$,
%possible ${\bf t_1}(w_{1(b-1)})$ (or equivalently  $w_{1(b-1)}$), 
denoted by $L_{1-2}^{w_{1(b-1)}}({\bf Y_2^*}(b-1))$, of size $2^{n(R_1 - C(h_{12}^2 P_1/N_2))}$ using the lattice $\Lambda_{s1}$, whose fundamental Voronoi  region volume is selected to asymptotically approach $V_{s1} = \left(\frac{N_2}{h_{12}^2P_1+N_2} \right)^{n/2}V_1$ (in the sense of \eqref{eq:Vs}).  For $R$ approaching $\frac{1}{2} \log\left(1+\frac{h_{12}^2P_1+P_R}{N_2}\right)$, where list decoding is relevant, $V_{c1} = \left(\frac{N_2}{h_{12}^2P_1+ P_R +N_2} \right)^{n/2}V_1$ asymptotically, and thus $V_{c1} < V_{s1}<V_1$ as needed. To resolve which codeword was actually sent, it intersects this list with another list $L_{R-2}^{w_{1(b-1)}}({\bf Y_2}(b))$ of $w_{1(b-1)}$ obtained in this block $b$. %uses joint typicality check in the direct link, which follows Xie's random binning scheme \cite{xie2007network}.
This list $L_{R-2}^{w_{1(b-1)}}({\bf Y_2}(b))$ of messages $w_{1(b-1)}$ is obtained from ${\bf Y_2}(b)$ using lattice list decoding with the  lattice $\Lambda_{sR2}$ whose fundamental Voronoi  region volume is taken to asymptotically approach $V_{sR2} = \left( \frac{h_{12}^2P_1+N_2}{P_R+h_{12}^2P_1+N_2}\right)^{n/2}V_R$. For $R$ approaching $\frac{1}{2} \log\left(1+\frac{h_{12}^2P_1+P_R}{N_2}\right)$, where list decoding is relevant,  $V_{cR} = \left(\frac{N_2}{h_{12}^2P_1+ P_R +N_2} \right)^{n/2}V_R$ asymptotically, and thus $V_{cR} < V_{sR2}<V_R$ as needed.  One may verify that by construction of the nested lattice chains, all conditions of Theorem \ref{thm:list} are met. 
This list of messages $w_{1(b-1)}$ is actually obtained from decoding a list of ${\bf t_R}({\bf T}(b-1))$, and using knowledge of its own 
${\bf t_2}(w_{2(b-1)})$ to obtain a list of ${\bf t_1}(w_{1(b-1)})$ (and hence $w_{1(b-1)}$ by one-to-one mapping) of size approximately $2^{n(R_1 - C(\frac{P_R}{h_{12}^2P_1 + N_2}) ) }$. To see this, notice that each ${\bf t}_R$ is associated with a single ${\bf T}= ({\bf t_1} + {\bf t_2}- Q_2({\bf t_2}+ {\bf U_2}) \mod \Lambda_1$. Then, given ${\bf T}$ and ${\bf t_2}$, one may obtain a single ${\bf t_1}$ as follows:
%since  ${\bf T}$ is uniformly mapped to ${\bf t_R}$.
%\nrd{Note that given ${\bf U_1}$ and ${\bf U_2}$, for fixed ${\bf t_1}$,  ${\bf t_2}$ and ${\bf T}$ are in !!one-to-one correspondence!!, while for fixed ${\bf t_2}$,   ${\bf t_1}$ and ${\bf T}$ are in one-to-one correspondence, as}
%${\bf T} = ({\bf t_1} + {\bf t_2} - Q_2({\bf t_2} + {\bf U_2})) \mod \Lambda_1$, and
{\begin{align}
(&{\bf T} - {\bf t_2} + Q_2({\bf t_2} + {\bf U_2}) )\mod \Lambda_1 \nonumber \\
&= ( ({\bf t_1} + {\bf t_2} - Q_2({\bf t_2} + {\bf U_2})) - {\bf t_2} + Q_2({\bf t_2} + {\bf U_2}) ) \mod \Lambda_1 \nonumber  \\
& = {\bf t_1} \mod \Lambda_1 ={\bf t_1}. \label{eq:equiv1}
\end{align}}
Similarly, given a ${\bf T}$ and ${\bf t_1}$ one may obtain a single ${\bf t_2}$ as 
{ \begin{align}
(&{\bf T} \mod \Lambda_2 - {\bf t_1}) \mod \Lambda_2 \\
&= ( ({\bf t_1} + {\bf t_2} - Q_2({\bf t_2} + {\bf U_2}) ) \mod \Lambda_1 \mod \Lambda_2 - {\bf t_1} )\mod \Lambda_2 \nonumber  \\
&\overset{(a)}{=}  ( ({\bf t_1} + {\bf t_2} - Q_2({\bf t_2} + {\bf U_2}) ) \mod \Lambda_2 - {\bf t_1} )\mod \Lambda_2 \nonumber \\
&= {\bf t_2} \mod \Lambda_2 = {\bf t_2}, \label{eq:equiv2}
\end{align}}
where $(a)$ follows from ${\bf X}\mod \Lambda_1 \mod \Lambda_2 = {\bf X} \mod \Lambda_2 $ when $\Lambda_1\subseteq \Lambda_2$.
%These one-to-one correspondence conditions are consistent with Xie's random binning scheme.
%Thus, since terminal 2 knows $w_{2(b-1)}$ and consequently ${\bf t_2}(w_{2(b-1)})$,
%Terminal 2 receives ${\bf Y_2}(b) = {\bf X_R} ( {\bf T}(b-1) )  + {\bf X_1} (w_{1b}) + {\bf Z_2}$ and 
Hence, the list of decoded codewords ${\bf t_R}$ may be transformed into a list of ${\bf t_1}$ at Terminal node 2, which may in turn be associated with a list of $w_{1(b-1)}$. 
%Terminal 2 also decodes a list of ${\bf t_R}( {\bf T} (b-1) ) $ using lattice list decoding with the middle lattice $\Lambda_{sR2}$ \nrd{re-write with actual lattice list decoder more rigorously}. With knowledge of ${\bf t_2}(w_{2(b-1)})$, this list corresponds to a list of ${\bf t_1}(w_{1(b-1)})$ of size approximately $2^{n(R_1 - C(\frac{P_R}{P_1 + N_2}) ) }$, since  ${\bf T}$ is uniformly mapped to ${\bf t_R}$. We denote this list as \nrd{$L_{R-2} (w_{2(b-1)})$}, which is intersected with \nrd{$L_{1-2}(w_{1(b-1)})$} to determine the unique $w_{1(b-1)}$. 
The two decoded lists of  $w_{1(b-1)}$ are independent due to the independent mapping relationships between $w_1$ and ${\bf t_1}$  at Node 1 and between ${\bf T}$ and ${\bf t_R}$ at the relay. 
List decoding ensures that at least the correct message lies in the intersection with high probability. To ensure no more than one in the intersection, %we need %that there is no more than one message in the intersection, we need
 %the unique ${\bf t_1}(w_{1(b-1)})$ ( or $w_{1(b-1)}$ ) which in block $b$ satisfies the joint typicality check
%\[ ( {\bf x_R}(s({\bf T}(b-1))), {\bf X_2}(w_b), {\bf Y_2}(b) ) \in A^{(N)}_\epsilon ({\bf X_R}, {\bf X_2}, {\bf Y_2})\]
%(we denote the list of $w_{1(b-1))}$ which satisfy this joint typicality check as $L_{R-2} ( w_{1(b-1)} )$and also in block $b-1$ belongs to the list of possible codewords $L_{1-2}(w_{1(b-1)}))$ of size $2^{n(R_1 - C(P_1/N_2))}$.
%Due to the uniform, random binning performed to obtain the bin index of ${\bf \hat{T}}$, the list decoding and joint typicality check are two independent processes to determine the unique codeword. This is possible as long as
 \begin{align*}
 R_1 &< C(P_R/(h_{12}^2P_1 + N_2) ) + C(h_{12}^2P_1/N_2)\\
 &= C((h_{12}^2P_1 + P_R) /N_2).
 \end{align*}

%Since ${\bf Y_2} = {\bf X_R} + {\bf X_1} + {\bf Z_2}$ and the distribution of ${\bf X_1}$ (lattice code $\Lambda_1$ which is Rogers-good) approaches the Gaussian distribution with variance $P_1$ as $n \rightarrow \infty$ \cite{Erez:2004, loeliger1997averaging},
%\[ I(X_R;Y_2|X_2) = \frac{1}{2} \log \left( 1 + \frac{P_R}{P_1+N_2}\right).\]
%Thus,
%\begin{align*}
%R_1 &< I(X_R;Y_2|X_2) + C(P_1/N_2)
%&= \frac{1}{2} \log \left( 1 + \frac{P_R}{P_1+N_2} \right) + \frac{1}{2} \log \left( 1 + \frac{P_1}{N_2}\right) \\
%= \frac{1}{2} \log \left( 1 + \frac{P_R + P_1}{N_2}\right).
%\end{align*}

%Terminal 2 now knows $w_{1(b-1)}$ and $w_{2(b-1)}$ (its own message), it knows ${\bf X_R}({\bf T}(b-1)) $ and may subtract this from ${\bf Y_2}(b) = {\bf X_R}({\bf T}(b-1)) + {\bf X_1}(w_{1b}) + {\bf Z_2}$. Terminal 2 then decodes a list of possible ${\bf t_1}(w_{1b})$  (or equivalently $w_{1b}$) 
%of size $2^{n(R_1 - C(h_{12}^2P_1/N_2))}$ from ${\bf Y_2}(b) - {\bf X_R}{\bf T}(b-1) = h_{12}{\bf X_1}(w_{1b}) + {\bf Z_2}$. This list is denoted as \nrd{$L_{1-2} (w_{1b})$} and used to determine $w_{1b}$ in the next block $b+1$. 
Analogous steps apply to rate $R_2$.
\end{proof}

\subsection{Comparison to existing rate regions}
We briefly compare the new achievable rate region of Theorem \ref{thm:two-way} with three other existing Decode-and-Forward based rate regions for the two-way relay channel with direct links, and to the cut-set outer bound. In particular, in Figure \ref{fig:numerical}, the region ``Rankov-DF'' \cite[Proposition 2]{Rankov:2006}, the blue ``Xie'' \cite[Theorem 3.1 under Gaussian inputs]{xie2007network} and our orange ``This work'' (Theorem \ref{thm:two-way}) are compared to the green cut-set outer bound under three different choices of noise and power constraints for $h_{12}=h_{21}=1$. 
The ``Rankov-DF'' and ``Xie'' schemes use a multiple access channel model to decode the two messages at the relay, while we use lattice codes to decode their sum, which avoids the sum rate constraint. In the broadcast phase, the ``Rankov-DF'' scheme broadcasts the superposition of the two codewords, while the ``Xie'' and our scheme use a random binning technique to broadcast the bin index. The advantage of  the ``Rankov-DF'' scheme is its ability of obtain a coherent gain at the receiver from the source and 
relay at the cost of a reduced power for each message (power split $\alpha P$ and $(1-\alpha)P$). On the other hand, the ``Xie'' and Theorem \ref{thm:two-way} schemes both broadcast the bin index using all of the relay power, but are unable to obtain coherent gains.  We note that our current scheme does not allow for a coherent gain between the direct and relayed links as 1) we decode the sum of codewords and re-encode that, and 2) we use the full relay power to transmit this sum. Whether simultaneous coherent gains are possible to the two receivers while using a lattice-based scheme to decode the sum of codewords is an interesting open question which may possibly be addressed along the lines of \cite{Nokleby:ISWCS}.

At low SNR, the rate-gain seen by decoding the sum and eliminating the sum-rate constraint is outweighed by 1) the loss seen in the rates $\frac{1}{2} \log \left( \frac{P_i}{P_1 + P_2} + SNR\right)$ compared to $\frac{1}{2} \log ( 1 + SNR)$, or 2) the coherent gain present in the ``Rankov-DF'' scheme. At high SNR, our scheme performs well, and at least in some cases, is able to guarantee an improved finite-gap result to the outer bound, as further elaborated upon in \cite{song:allerton:2010}.
Further note that, compared with the two-way relay channel without direct links \cite{Narayanan:2010, Nam:IEEE}, the direct links may provide additional information which translate to rate gains -- direct comparison shows that the rate region in \cite[Theorem 1]{Nam:IEEE} is always contained in that of Theorem \ref{thm:two-way}.

\begin{figure*}
\subfigure{\includegraphics[width=1.9in]{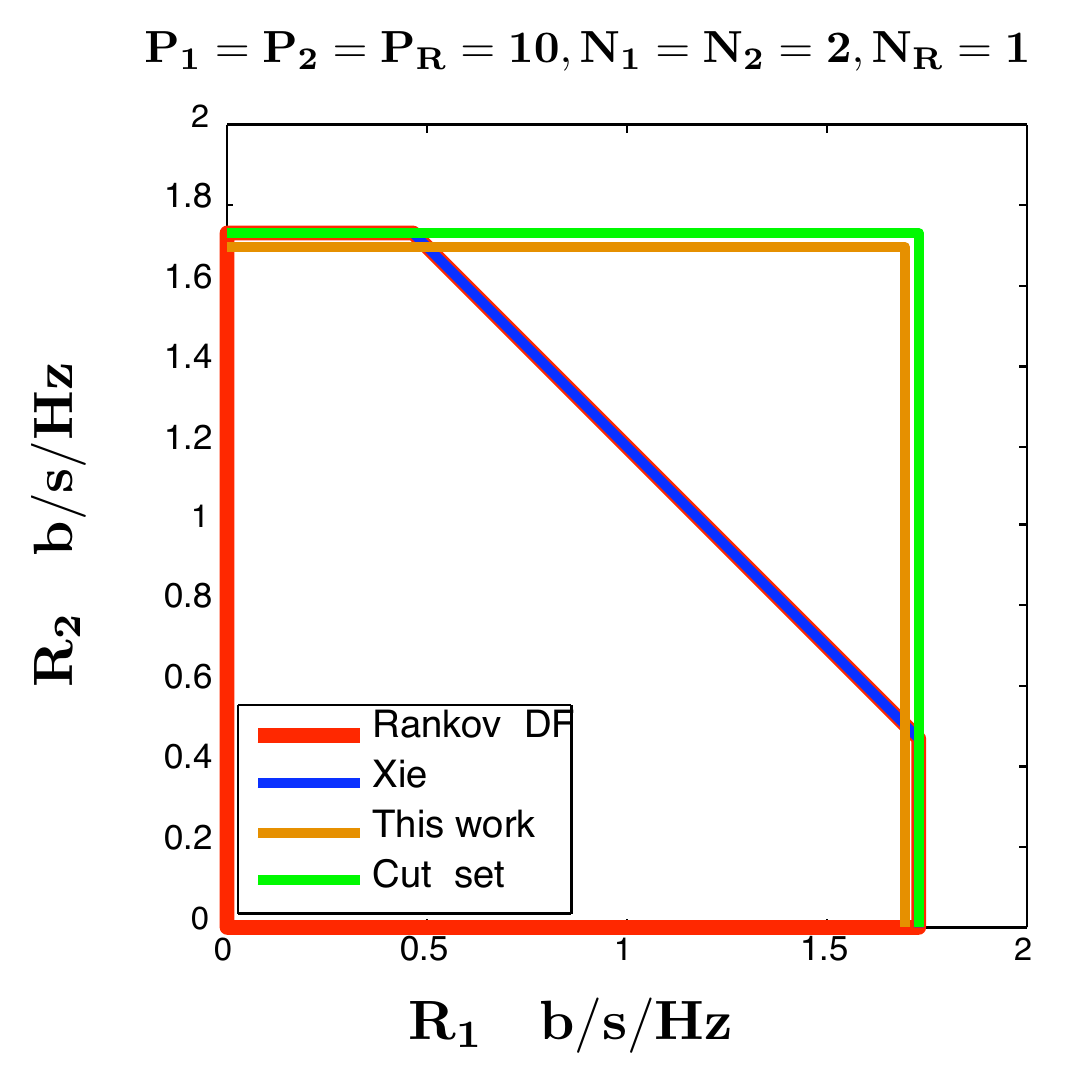}}
\subfigure{\includegraphics[width=2in]{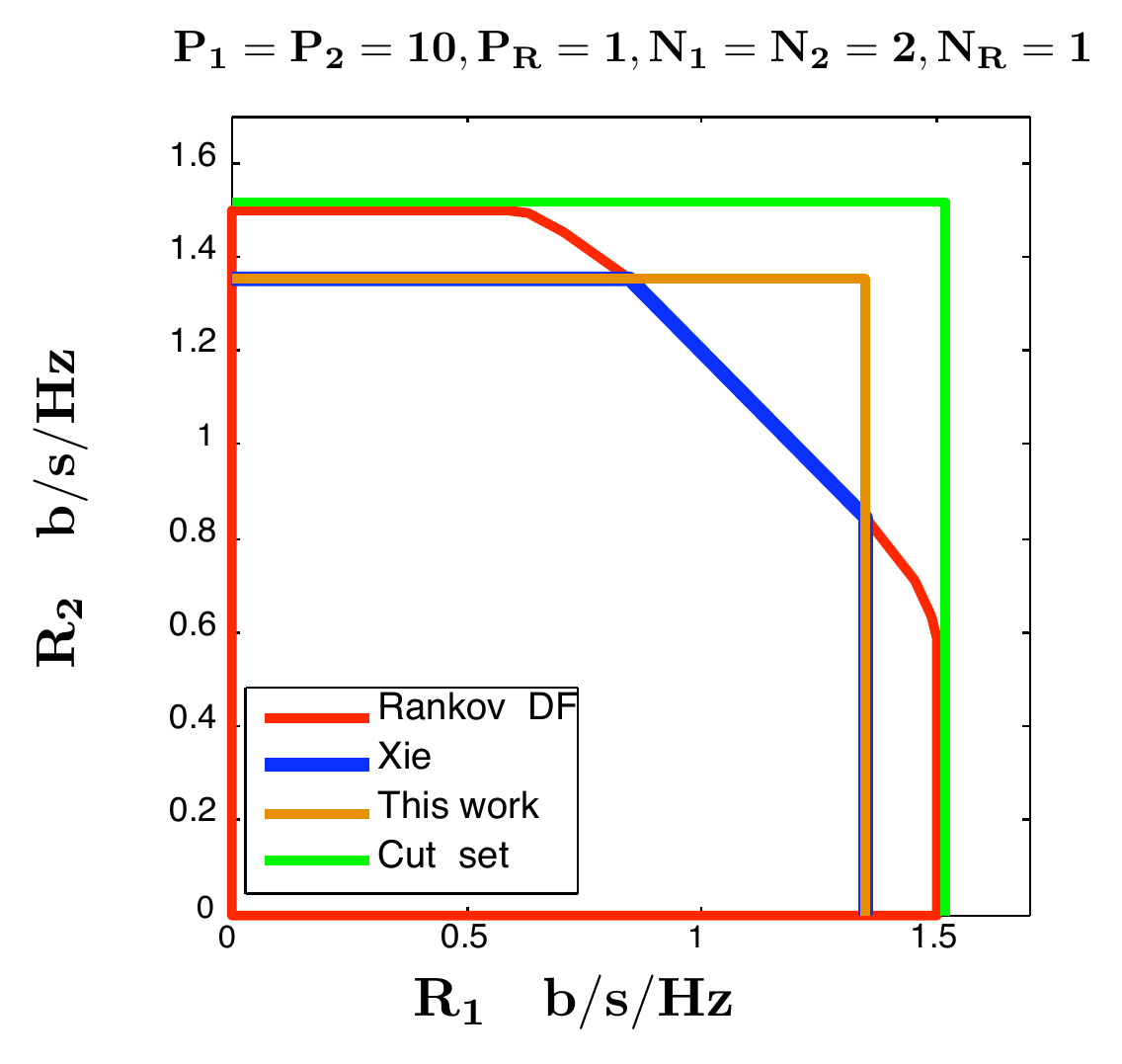}}
\subfigure{\includegraphics[width=2in]{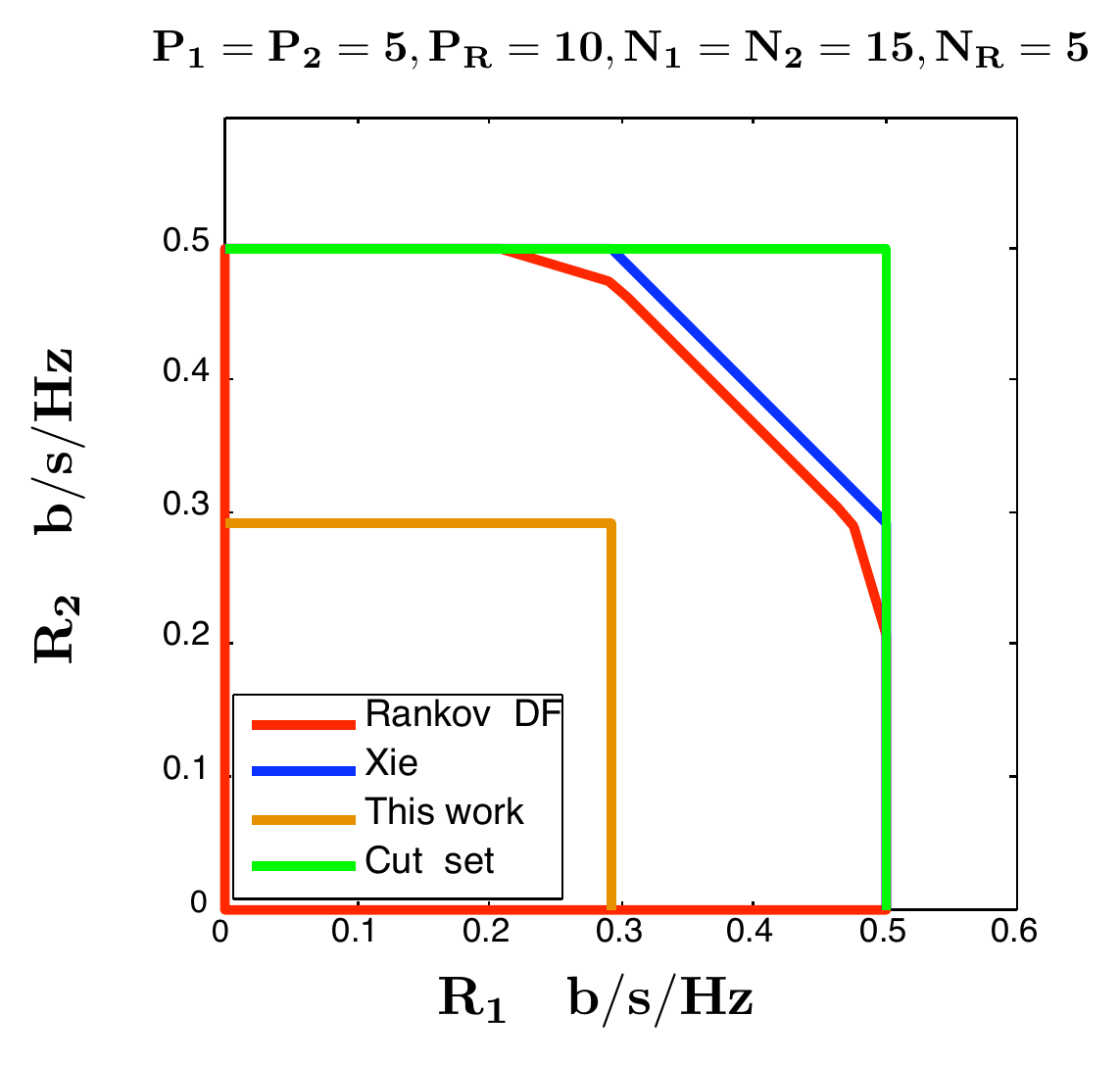}}
%\vspace{-0.9in}
\caption{Comparison of decode-and-forward achievable rate regions of various two-way relay channel rate regions.}
\label{fig:numerical}
\end{figure*}

\subsection{The multiple-access relay channel}

We now consider a second example  of a relay network with two messages and cooperative relay links: the multiple-access relay channel (MARC).  The MARC was proposed and studied in \cite{Kramer:Gastpar:Gupta, sankar2007offset, woldegebreal2007multiple}, and describes a multi-user communication scenario in which two users transmit different messages to the same destination with the help of a relay. As in the TWRC, the MARC can be seen as another example of an extension of the three-node relay channel. The channel model is described by
\begin{align*}
{\bf Y_R} &= {\bf X_1} + {\bf X_2} + {\bf Z_R},  \;\;\;\; { \bf Z_R} \sim {\cal N}({\bf 0},N_R{\bf I}) \\
{\bf Y_D} &= {\bf X_1} + {\bf X_2} + {\bf X_D} + {\bf Z_D} ,  \;\;\;\; { \bf Z_D} \sim {\cal N}({\bf 0},N_D{\bf I}).
\end{align*}
where ${\bf X_1}$, ${\bf X_2}$ and ${\bf X_R}$ have power constraints $P_1$, $P_2$ and $P_R$.

An $(2^{nR_1}, 2^{nR_2}, n)$ code for the multiple access relay channel consists of the two sets of messages $w_i$, $i=1,2$ uniformly distributed over  ${\cal M}_i : = \{1,2,\cdots 2^{nR_i}\}$,  and two encoding functions $X_i^n: {\cal M}_i \rightarrow {\mathbb R}^n$ (shortened to ${\bf X_i}$) satisfying the power constraints $P_i$, a set of relay functions $\{f_j\}_{j=1}^n$ such that the relay channel input at time $j$ is a function of the previously received relay channel outputs from channel uses $1$ to $j-1$, $X_{R,j} = f_j(Y_{R,1}, \cdots Y_{R,j-1})$, and one decoding functions $g: {\cal Y}^n \rightarrow {\cal M}_1 \times {\cal M}_2$ which yields the message estimates $(\hat{w}_1, \hat{w}_2): = g(Y^n)$. 
%We let ${\bf X_R}$ denote $(X_{R,1}, X_{R,2}, \cdots, X_{R,n})$ for blocklength $n$. 
We define the average probability of error of the code to be $P_{n,e} : = \frac{1}{2^{n(R_1+R_2)}} \sum_{w_1\in {\cal M}_1, w_2\in {\cal M}_2} \Pr\{(\hat{w_1}, \hat{w_2}) \neq (w_1,w_2)|(w_1,w_2) \mbox{ sent}\}$. The rate pair $(R_1,R_2)$ is then said to be achievable by the multiple access relay channel if, for any $\epsilon>0$ and for sufficiently large $n$, there exists an $(2^{nR_1},2^{nR_2},n)$ code such that $P_{n,e} < \epsilon$. The capacity region $C$ of the multiple access relay channel is the supremum of the set of achievable rate pairs. 

We derive a new achievable rate region whose achievability scheme combines the previously derived lattice DF scheme,  and the linearity of lattice codes using lattice list decoding. In particular, we demonstrate how we may decode the sum of two lattice codewords at the relay rather than decoding the individual messages, eliminating the sum-rate constraint seen in i.i.d. random coding schemes. 
%This eliminates a sum-rate constraint which is present in random coding based schemes which includes a multiple access channel. 
The relay then forwards a re-encoded version of this which may be combined with lattice list decoding at the destination to obtain a new rate region.

\begin{theorem}
%\nrd{Think -- Reviewer 1 believes differently. Need to decode in $\alpha$ time? CHANGE!!}
{\it Lattices in the AWGN multiple access relay channel.} For any $\alpha\in [0,1]$, the following rates are achievable for the AWGN multiple access relay channel:
%{\begin{align}
%R_1 &< \min \left( \left[\frac{1}{2} \log \left( \frac{P_1}{P_1 + P_2} + \frac{P_1}{N_R}\right)\right]^+, \alpha \frac{1}{2} \log \left( 1 + \frac{P_1}{P_2 + P_R+ N_D} \right) + (1 - \alpha) \frac{1}{2} \log \left( 1 + \frac{P_1+P_R}{ N_D} \right) \right)\\
%R_2 &< \min \left( \left[\frac{1}{2} \log \left( \frac{P_2}{P_1 + P_2} + \frac{P_2}{N_R}\right)\right]^+ ,\alpha \frac{1}{2} \log \left( 1 + \frac{P_2 + P_R}{N_D} \right) + (1 - \alpha) \frac{1}{2} \log  \left( 1 + \frac{P_2}{P_1+ P_R + N_D} \right) \right).
%\end{align}}

\begin{align*}
R_1 < \alpha \min \left( \left[\frac{1}{2} \log \left( \frac{P_1}{P_1 + P_2} + \frac{P_1}{N_R}\right)\right]^+, \frac{1}{2} \log \left(1 + \frac{P_1}{P_2 +P_R +N_D}\right) \right) \\
+ \, (1-\alpha) \min \left( \left[\frac{1}{2} \log \left( \frac{P_1}{P_1 + P_2} + \frac{P_1}{N_R}\right)\right]^+, \frac{1}{2} \log \left(1 + \frac{P_1+P_R}{N_D}\right)\right), \\
R_2 < (1-\alpha) \min \left(\left[\frac{1}{2} \log \left( \frac{P_2}{P_1 + P_2} + \frac{P_2}{N_R}\right)\right]^+, \frac{1}{2} \log \left(1 + \frac{P_2}{P_1 +P_R +N_D}\right) \right) \\
+ \alpha \min \left(\left[\frac{1}{2} \log \left( \frac{P_2}{P_1 + P_2} + \frac{P_2}{N_R}\right)\right]^+, \frac{1}{2} \log \left(1 + \frac{P_2+P_R}{N_D}\right)\right).
\end{align*}

%{ \begin{align}
%R_1 &< \min \left( \left[\frac{1}{2} \log \left( \frac{P_1}{P_1 + P_2} + \frac{P_1}{N_R}\right)\right]^+ , \frac{1}{2} \log\left(1+\frac{P_1+P_R}{N_D}\right)\right)\\
%R_2 &<  \min \left( \left[\frac{1}{2} \log \left( \frac{P_2}{P_1 + P_2} + \frac{P_2}{N_R}\right)\right]^+ , \frac{1}{2} \log\left(1+\frac{P_2+P_R}{N_D}\right)\right)\\
%R_1 + R_2  &< \frac{1}{2} \log \left( 1 + \frac{P_1 + P_2 + P_R}{N_D} \right).
%\end{align}}

\label{thm:MARC}
\end{theorem}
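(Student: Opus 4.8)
The region is exactly the convex combination, with weight $\alpha$, of two corner rate pairs, each achievable under one successive-decoding order at the destination; so the plan is to establish the two corners and then time-share. Each corner reuses machinery already built: the relay decodes the \emph{sum} of the two lattice codewords (as at the relay of Theorem~\ref{thm:two-way}, via \cite{nazer2011compute,Nam:IEEE}), and the destination recovers the two messages by a sliding-window, two-list intersection exactly as in Theorem~\ref{thm:DF}, using the list decoder of Theorem~\ref{thm:list}.

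I would first reuse the codebook construction of Theorem~\ref{thm:two-way} (take $P_1\ge P_2$ for the nesting, without loss of generality): nested lattice chains with $\sigma^2(\Lambda_1)=P_1$, $\sigma^2(\Lambda_2)=P_2$ whose coarse lattices are nested so that the relay can form ${\bf T}(b)=({\bf t_1}(w_{1b})+{\bf t_2}(w_{2b})-Q_2(\cdot))\bmod\Lambda_1$, together with a relay chain with $\sigma^2(\Lambda_R)=P_R$ for re-encoding ${\bf T}$. Under block-Markov encoding the sources send ${\bf X_1}(w_{1b}),{\bf X_2}(w_{2b})$ and the relay sends ${\bf X_R}({\bf T}(b-1))$. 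Since ${\bf Y_R}(b)={\bf X_1}(w_{1b})+{\bf X_2}(w_{2b})+{\bf Z_R}(b)$, the relay recovers ${\bf T}(b)$ by the argument of \cite{nazer2011compute,Nam:IEEE} whenever $R_1\le[\tfrac12\log(\tfrac{P_1}{P_1+P_2}+\tfrac{P_1}{N_R})]^+$ and $R_2\le[\tfrac12\log(\tfrac{P_2}{P_1+P_2}+\tfrac{P_2}{N_R})]^+$. These constraints must hold in \emph{every} block and hence furnish the first argument of every $\min$ in the theorem.

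I would then analyze the destination under two symmetric orders. Under \emph{order~A}, $w_{1b}$ is decoded \emph{immediately} and uniquely from ${\bf Y_D}(b)$ with ${\bf X_2}$ and ${\bf X_R}$ treated as noise; since these are uniform over Rogers-good Voronoi regions and independent of $w_{1b}$, the mixed-noise Lemma~\ref{lem:unique} applies with total noise power $P_2+P_R+N_D$, giving $R_1<C\big(P_1/(P_2+P_R+N_D)\big)$. Because $w_{1(b-1)}$ and $w_{1b}$ are then both known after block $b$, message~$2$ is recovered with one block of delay by intersecting two independent lists as in Theorem~\ref{thm:DF}: a direct-link list for $w_{2(b-1)}$ from ${\bf Y_D}(b-1)$ after cancelling the known ${\bf X_1}(w_{1(b-1)})$ and ${\bf X_R}({\bf T}(b-2))$, which faces only Gaussian noise $N_D$ and so has exponent $C(P_2/N_D)$; and a relay-link list for $w_{2(b-1)}$, extracted from ${\bf T}(b-1)$ inside ${\bf Y_D}(b)$ after cancelling the known ${\bf X_1}(w_{1b})$, which faces noise ${\bf X_2}(w_{2b})+{\bf Z_D}(b)$ of power $P_2+N_D$ and so has exponent $C\big(P_R/(P_2+N_D)\big)$. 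The lists are independent (distinct blocks, independent message-to-codeword maps), so the packing argument of Theorem~\ref{thm:DF} makes the intersection a singleton once $R_2<C(P_2/N_D)+C\big(P_R/(P_2+N_D)\big)=C\big((P_2+P_R)/N_D\big)$, and knowledge of $w_{1(b-1)}$ converts the decoded ${\bf T}(b-1)$ into $w_{2(b-1)}$ via the modulo identities \eqref{eq:equiv1}--\eqref{eq:equiv2}. Order~A thus attains $\big(\min(\cdot,C(\tfrac{P_1}{P_2+P_R+N_D})),\ \min(\cdot,C(\tfrac{P_2+P_R}{N_D}))\big)$, the dots denoting the relay constraints; \emph{order~B} (decode $w_2$ immediately, $w_1$ with delay) is symmetric and attains the swapped pair. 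Using order~A in a fraction $\alpha$ of the blocks and order~B in the rest, with the per-block codebook rates chosen accordingly, the averaged rates are exactly those stated.

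The step I expect to require the most care --- the only genuinely new bookkeeping beyond Theorems~\ref{thm:DF} and~\ref{thm:two-way} --- is verifying that, under each order, the immediately-decoded message is available in \emph{both} the current and the previous block, so that it can be cancelled from the relay-link block and from the direct-link block respectively. This double cancellation is exactly what drops the two effective noise powers to $P_2+N_D$ and $N_D$ and fuses the two single-link exponents into the coherent-looking $C\big((P_2+P_R)/N_D\big)$. I would also check, as in the two-way proof, that the list lattices admit the volumes prescribed by \eqref{eq:Vs} under the nesting order of Theorem~\ref{thm:nam}, and that both per-block rates sit below the relay constraints, which is what forces the outer $\min$ with $[\tfrac12\log(\tfrac{P_i}{P_1+P_2}+\tfrac{P_i}{N_R})]^+$ in each term.
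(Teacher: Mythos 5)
Your proposal is correct and follows essentially the same route as the paper's proof: the relay decodes the sum ${\bf T}(b)$ via compute-and-forward (giving the $[\cdot]^+$ constraints), the destination uses successive decoding with the immediately-decoded message recovered uniquely via the mixed-noise Lemma~\ref{lem:unique} and the delayed message recovered by intersecting the relay-link list (exponent $C(P_R/(P_j+N_D))$, converted from ${\bf T}$ via \eqref{eq:equiv1}--\eqref{eq:equiv2}) with the direct-link list (exponent $C(P_j/N_D)$) exactly as in Theorems~\ref{thm:DF} and~\ref{thm:two-way}, and finally time-sharing between the two decoding orders with parameter $\alpha$ yields the stated region.
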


\begin{proof}

\begin{figure*}
\centering
\includegraphics[width=14cm]{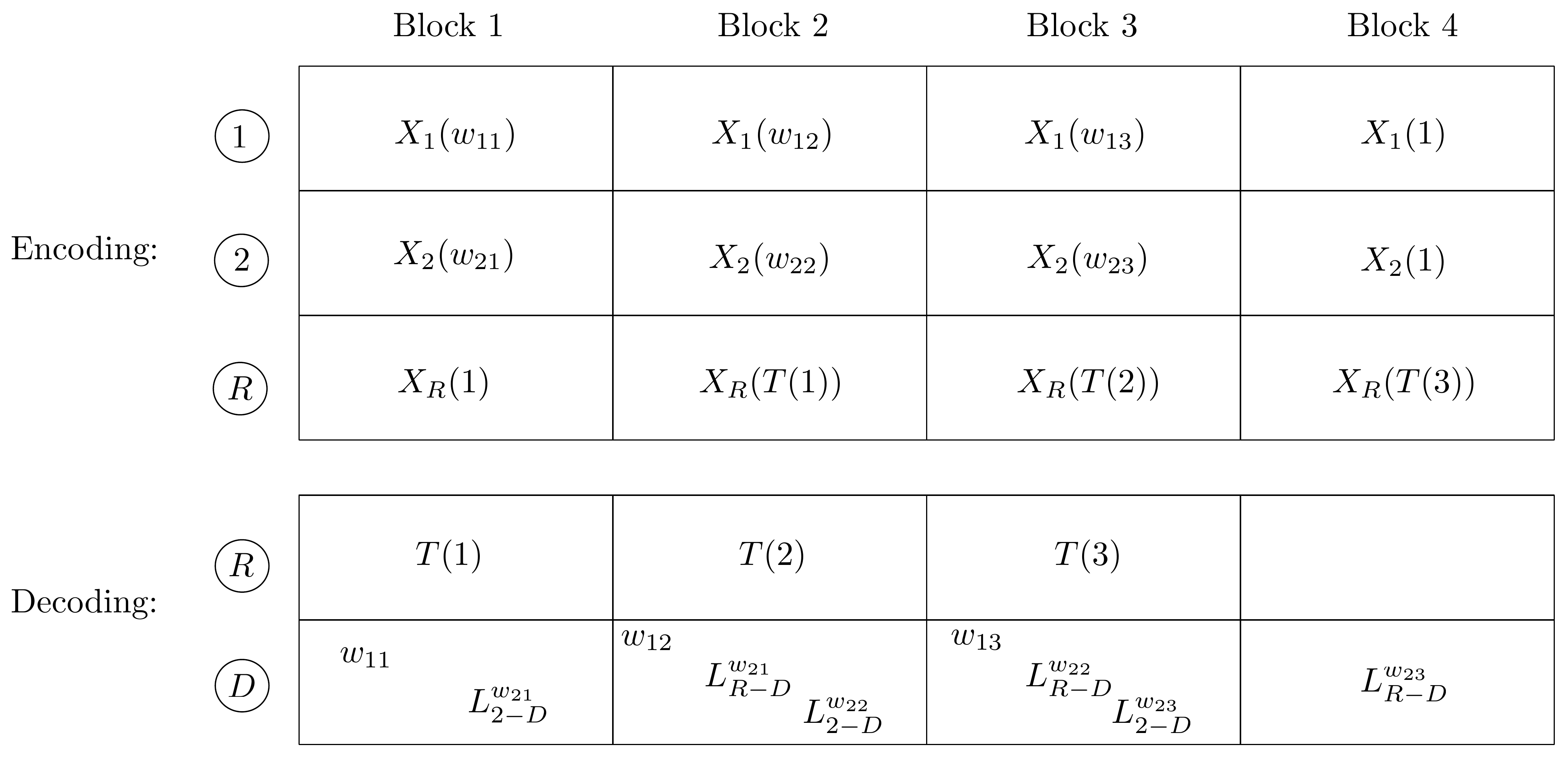}
\caption{Lattice Decode-and-Forward scheme for the AWGN multiple access relay channel. }
\label{DF for multipleaccess}
\end{figure*}

{\bf Codebook construction:} 
% \nrd{!!! Same issue as before -- why not one long nested chain, and do we need Rogers good for The codebook construction is similar to that of the TWRC.  
We construct two nested lattice chains according to Theorem \ref{thm:nam},  $\Lambda_1, \Lambda_2, \Lambda_{s1}, \Lambda_{s2}, \Lambda_{c1}, \Lambda_{c2}$ and $\Lambda_R, \Lambda_{sR1}, \Lambda_{sR2}, \Lambda_{cR}$,  nested in the exact same way as in the codebook construction of Theorem \ref{thm:two-way}. %The second consists of lattices $\Lambda_R, \Lambda_{sR1}, \Lambda_{sR2}, \Lambda_{cR}$ again nested in the exact same way as in the codebook construction of Theorem \ref{thm:two-way}.

{\bf Encoding:} We again use block Markov encoding. At the $b$-th block,  terminal 1 and 2  send ${\bf X_1}(w_{1b}) $ and ${\bf X_2}(w_{2b})$, where
\begin{align*}
 {\bf X_1}(w_{1b}) &= ({\bf t_1}(w_{1b}) - {\bf U_1}(b)) \mod \Lambda_1\\
{\bf X_2}(w_{2b}) &= ({\bf t_2}(w_{2b}) - {\bf U_2}(b)) \mod \Lambda_2. 
\end{align*}
 At the relay,  we assume that it has decoded 
 \[ {\bf T}(b-1) = ({\bf t_1}(w_{1(b-1)}) + {\bf t_2}(w_{2(b-1)}) - Q_2({\bf t_2}(w_{2(b-1)}) + {\bf U_2}(b-1)) \mod \Lambda_1\]
  in block $b-1$. 
  Following the exact same steps as in between \eqref{eq:T1} and \eqref{eq:T2}, 
%  \nrd{Note that this lies in $\mathcal{C}_1$ and by the Crypto-lemma is furthermore uniformly distributed over it $\mathcal{C}_1$. We may thus associate each ${\bf T}(b-1)$ with an index say $i({\bf T}(b-1))$, 
%  which the relay then uses as index for the codeword ${\bf t_R}(i({\bf T}(b-1)))$ in ${\cal C}_R$. To simplify notation and with some abuse of notation we write ${\bf  t_R} (  {\bf T }(b-1) )$ instead of the indexed version.  }
%  %Each of these is one-to-one mapped to codewords ${\bf t_R}\in {\cal C}_R$, and t
 the relay sends
  %${\bf \hat{T}}$ is thrown uniformly (or binned) into the $2^{nR}$ bins, and $s({\bf \hat{T}})$ is defined as ${\bf \hat{T}}$'s bin index.
\[ {\bf X_R}({\bf T}(b-1)) = ({\bf t_R}( {\bf T} (b-1)) - {\bf U_R}(b-1)) \mod \Lambda_R.\]
The dithers ${\bf U_1}(b), {\bf U_2}(b)$, and ${\bf U_R}(b)$ are known to all nodes and are i.i.d. and uniformly distributed over ${\cal V}_1$, ${\cal V}_2$, and ${\cal V}_R$ and vary from block to block.
In the first block 1, terminal 1 and terminal 2 send ${\bf X_1}(w_{11})$ and ${\bf X_2}(w_{21})$ respectively, while the relay sends a known ${\bf X_R}(1)$.

{\bf Decoding:} At the end of each block $b$, the relay terminal receives ${\bf Y_R}(b) = {\bf X_1}(w_{1b}) + {\bf X_2}(w_{2b}) + {\bf Z_R}(b)$ and decodes ${\bf T}(b) = ({\bf t_1}(w_{1b}) + {\bf t_2}(w_{2b}) - Q_2({\bf t_2}(w_{2b}) + {\bf U_2}(b)) \mod \Lambda_1 $ as long as
{ \begin{align*}
R_1 &\leq  \left [ \frac{1}{2} \log \left( \frac{P_1}{P_1 + P_2} + \frac{P_1}{N_R}\right) \right ]^+,  \;\; R_2 \leq \left [ \frac{1}{2} \log \left( \frac{P_2}{P_1 + P_2} + \frac{P_2}{N_R}\right) \right ]^+
\end{align*}}
following arguments similar to those in \cite{Nam:IEEE}.

The destination receives ${\bf Y_D}(b) = {\bf X_1}(w_{1b})  + {\bf X_2}(w_{2b}) +  {\bf X_R}( {\bf T}(b-1) ) + {\bf Z_D}(b)$ and either decodes the messages in the order $w_{1b}$ and then $w_{2(b-1)}$ or the reverse $w_{2b}$ and then $w_{1(b-1)}$. We describe the former; the latter follows analogously and we time-share between the two decoding orders. 
The destination first decodes $w_{1b}$ from ${\bf Y_D}(b)$, treating ${\bf X_2}(w_{2b}) +  {\bf X_R}({\bf T}(b-1)) + {\bf Z_D}(b)$ as noise. This equivalent noise is the sum of signals uniformly distributed over fundamental Voronoi  regions of Rogers good lattices and Gaussian noise. Hence, %using arguments from Lemma 6 and 11 of \cite{Erez:2004} and \cite{nazer2011compute} 
 according to Lemma \ref{lem:unique}, the probability of error in decoding the correct (unique) $w_{1b}$ will decay exponentially as long as
\[ R_1 < C \left( \frac{P_1}{P_2 + P_R + N_D}  \right).\]

It then subtracts ${\bf X_1}(w_{1b})$ from the signal ${\bf Y_D}(b)$ to obtain 
${\bf Y_D^*}(b) =  {\bf X_2}(w_{2b}) +  {\bf X_R}( {\bf T}(b-1) ) + {\bf Z_D}(b)$
 and decodes a list of $w_{2(b-1)}$ denoted by $L_{R-D}^{w_{2(b-1)}}({\bf Y_D^*}(b))$ of size $2^{n\left (R_2 - C\left( \frac{P_R}{P_2+ N_D} \right)\right)}$ 
 assuming side information $w_{1(b-1)}$, 
and treating ${\bf X_2}(w_{2b}) + {\bf Z_D}(b)$ as noise. This list of $w_{2(b-1)}$ is obtained from a lattice list decoder based on ${\bf t_R}({\bf T}(b-1))$ and noting the one-to-one correspondence between ${\bf t_R}({\bf T}(b-1))$ and ${\bf t_2}(w_{2(b-1)})$ and hence $w_{2(b-1)}$ given ${\bf t_1}(w_{1(b-1)}$, using the arguments of \eqref{eq:equiv1} and \eqref{eq:equiv2}. 

%${\bf T}(b-1)$ using lattice list decoding with middle list decoding lattice of second moment greater \nrd{make more rigorous with new lattice list decoding theorem!} than $\frac{P_R(P_2 + N_D)}{P_R + P_2 + N_D}$.  With this side information $w_{1(b-1)}$ (assumed to be decoded correctly in the last block $b-1$), this list \nrd{$L_{R-D} (w_{2(b-1)} )$} of ${\bf T}(b-1)$ corresponds to a list of $w_{2(b-1)}$ of size $2^{n\left (R_2 - C\left( \frac{P_R}{P_2+ N_D} \right)\right)}$. 
 %,  it then can decode a list of possible $w_{2(b-1)}$ of size $2^{n(R_2- I(X_R: Y_2 | X_1))}$ (denoted as $L_{R-D} (w_{2(b-1)} )$ ) which satisfy
%\[ ( {\bf x_R}(s({\bf T}(b-1))), {\bf X_1}(w_{1b}), {\bf Y_D}(b) ) \in A^{(N)}_\epsilon ({\bf X_R}, {\bf X_1}, {\bf Y_D})\]
The destination then intersects the list $L_{R-D}^{w_{2(b-1)}}({\bf Y_D^*}(b))$ with another list $L_{2-D}^{w_{2(b-1)}}({\bf Y_D^*}(b-1))$ of size $2^{n\left(R_2 - C \left(\frac{P_2}{N_D} \right) \right) }$ obtained in the block $b-1$ (described next for block $b$) to determine the unique $w_{2(b-1)}$.  Once the destination has decoded $w_{1b}$, $w_{2(b-1)}$ and $w_{1(b-1)}$, it is also able to reconstruct ${\bf X_R}({\bf T}(b-1))$.

At last, the destination decodes a list $L_{2-D}^{w_{2b}}({\bf Y_D^*}(b))$ of possible $w_{2b}$ of size $2^{n\left(R_2 - C \left(\frac{P_2}{N_D} \right) \right) }$ 
from the signal ${\bf Y_D^*}(b)  = {\bf Y_D}(b)- {\bf X_1} (w_{1b}) - {\bf X_R} ({\bf T}(b-1)) = {\bf X_2} (w_{2b}) + {\bf Z_D} (b) $ which is used to determine $w_{2b}$ in 
the next block $b+1$. To ensure that there is an unique codeword $w_{2(b-1)}$ in the intersection
 of the two lists $L_{R-D}^{w_{2(b-1)}}({\bf Y_D^*}(b))$ and $L_{2-D}^{w_{2(b-1)}}({\bf Y_D^*}(b-1))$, we need
\begin{align*}
R_2 &<  C\left( \frac{P_R}{P_2+ N_D} \right) + C\left(\frac{P_2}{N_D}\right) = \frac{1}{2} \log  \left(1 +  \frac{P_2 + P_R}{N_D} \right).
\end{align*}
We presented the decoding order $w_{1b}, w_{2(b-1)}$. Alternatively, one may decode in the order $w_{2b}$ and $w_{1(b-1)}$ at the analogous rates. 
%%the destination can first decode ${\bf X_2}(w_{2b})$, and then a list of $w_{1(b-1)}$ , and finally a list of $w_{1b}$. This decoding order analogously results in the rate region:
%\begin{align*}
%R_2 &< \frac{1}{2} \log \left( 1 + \frac{P_2}{P_1 + P_R + N_D} \right ) \\
%R_1 &< \frac{1}{2} \log \left( 1 + \frac{P_1 + P_R}{N_D} \right).
%\end{align*}
%In addition to these constraints, for each decoding order, the constraints at the relay must also be met. Hence, t
Time sharing with parameter $0\leq \alpha\leq 1$ between the orders yields the theorem.

\end{proof}
\begin{remark}
Note that the above region is derived using time-sharing between two decoding orders at the destination. This results as we employ successive decoding at the destination in order to allow for the use of lower complexity Euclidean lattice decoding, rather than a more complex form of ``joint'' decoding for lattices proposed for example in \cite{Ozgur:2010:lattice, ordentlich2011interference}. Further note that this region does not always outperform or even attain the same rates as random coding based schemes -- in fact, as in the two-way relay channel, there is a  trade off between rate gains from decoding the sum at the relay node, and coherent gains and joint decoding at the destination.
\end{remark}

%%%%%%%%%%%%%%%%%%%%%%%%%%%%%%%%%%%%%%%%%%%%%%%%%%%%%%%%%%%
\section{Single source Compress and Forward}
\label{sec:CF1}

%\nrd{Had to change notation to match previous sections so notation MUST be carefully checked!}

%\nrd{CHANGE CF lattice to new notation}

We have shown several lattice based Decode-and-Forward schemes for relay networks. 
%that lattices may replace random codes in several single-message and multiple-message relays networks in which the relays decoded and forwarded the messages. 
Forcing the relay(s) to decode the message(s)  they do not  need imposes a rate constraint; Compress-and-Forward (CF) is an alternative type of forwarding which alleviates this constraint.  Cover and El Gamal first proposed a CF scheme for the relay channel in \cite{Cover:1979} in which the relay does not decode the message but instead compresses its received signal and forwards the compression index. The destination first recovers the compressed signal, using its direct-link  side-information (the Wyner-Ziv problem  of lossy source coding with correlated side-information at the receiver), and then proceeds to decode the message from the recovered  compressed signal and the received signal. 
%We note that recent work \cite{lim2011noisy, wu2010optimal,  kramer2011short, Ozgur:2010:lattice, avestimehr2011wireless} has shown that the rates achieved by Wyner-Ziv binning or hashing at the relay may also be achieved without binning.  
%The CF  scheme is generalized to arbitrary relay networks in the recently proposed ``noisy network coding'' scheme \cite{lim2011noisy}.
%\nrd{YIWEI: is this all you want to say about noisy network coding? Perhaps another sentence? Maybe we should, but have not got a good one yet}.

It is natural to wonder whether lattice codes may be used in the original Cover and El Gamal CF scheme for the relay channel.  We answer this in the positive.
We note that lattices have recently been shown to achieve the Quantize-Map-and-Forward rates for general relay channels using Quantize-and-Map scheme (similar to the CF scheme) which quantizes the received signal at the relay and re-encodes it without any form of binning / hashing in \cite{Ozgur:2010:lattice}. The contribution in this section 
is to show an alternative achievability scheme which achieves the same rate in the three node relay channel, demonstrating that lattices may be used to achieve CF-based rates in a number of fashions. We note that our decoder employs a lattice decoder rather than the more complex joint typicality, or ``consistency check'' decoding of \cite{Ozgur:2010:lattice}.

In the CF scheme of \cite{Cover:1979}, Wyner-Ziv coding -- which exploits binning --  is used at the relay to exploit receiver side-information obtained from the direct link between the source and destination. The usage of lattices and structured codes for binning (as opposed to their random binning counterparts) was considered in a comprehensive fashion  in \cite{Zamir:2002:binning}. Of particular interest to the problem considered here is the nested lattice-coding approach of \cite{Zamir:2002:binning} to the Gaussian Wyner-Ziv coding problem. %The Wyner-Ziv coding problem is that of lossy source coding with correlated side-information at the receiver. %One example of a Gaussian Wyner-Ziv problem is one in which the Gaussian source to be compressed is of the form $X+Z$, and the side-information available at the reconstructing node is $X$, for $Z$ independent of $X$ and Gaussian, which we term the $(X+Z,X)$ Wyner-Ziv problem.

\subsection{Lattice codes for the Wyner-Ziv model in Compress-and-Forward}
\label{sec:WZ}
%\nrd{Should we make everything vector here to be consistent with the next and previous sections -- I think so as it looks weird that $n$-dimesional vectors are no longer in bold. Is it scalar alpha or vector alpha in this case? Changed to bold}
We consider the lossy compression of the Gaussian source ${\bf Y} = {\bf X} +{\bf Z_1}$ , with Gaussian side-information ${\bf X} + {\bf Z_2}$ available at the reconstruction node, where ${\bf X} , {\bf Z_1}$ and ${\bf Z_2}$ are independent vectors of length $n$ which are independent and each generated in an i.i.d. fashion according to a Gaussian of zero mean and variance $P, N_1$, and $N_2$, respectively. We use the same definitions for the channel model and for achievability as in Section \ref{subsec:DF}. 
%We note that, with slight abuse of notation, ${\bf X}, {\bf Z_1}$ and ${\bf Z_2}$ denote $n$-dimensional vectors where $n$ is the  blocklength, or number of channel uses.
%, which will tend to infinity.
The rate-distortion function for the source ${\bf X} + {\bf Z_1}$ taking on values in ${\cal X}_1^n = \mathbb{R}^n$ with side-information ${\bf X} + {\bf Z_2}$ taking on values in ${\cal X}_2^n = \mathbb{R}^n$ 
%is defined as the minimum rate required to achieve a distortion $D$ when ${\bf X} +{\bf Z_2}$ is available at the decoder. To be more specific, it 
is the infimum of rates $R$ such that there exist maps $i_n: {\cal X}_1^n  \rightarrow \{1,2,\cdots, 2^{nR}\}$ and $g_n: {\cal X}_2^n \times \{1,2, \cdots, 2^{nR}\}\rightarrow {\cal X}_1^n$ such that $\lim \sup_{n\rightarrow \infty} E[d({\bf X} + {\bf Z_1} , g_n({\bf X} + {\bf Z_2}, i_n({\bf X} +{\bf Z_1}))]\leq D$ for some distortion measure $d(\cdot, \cdot)$.
%The encoding and decoding functions take the forms
%\[  \mbox{encoded index } i = f(X+Z_1), \;\;\; \mbox{ reconstructed source } \widehat{Y} = \widehat{X+Z_1} = g(i,X+Z_2) \]
%for some functions $f(\cdot)$ and $g(\cdot, \cdot)$. The source coding rate corresponds to $\frac{1}{n}\log(\mbox{the cardinality of the range of }f(\cdot))$. In lossy source coding, the smallest rate $R(D)$ at which one may reconstruct the source to a given expected distortion $\frac{1}{n} E[d(X,\widehat{X})]\leq D$ is of interest.
 If the distortion measure $d(\cdot, \cdot)$ is the squared error distortion, $d({\bf X},\widehat{{\bf X}}) = \frac{1}{n}E[|| {\bf X}-\widehat{{\bf X}}||^2]$, then, by  \cite{wyner1978rate},
  the rate distortion function $R(D)$ for the source $ X +  Z_1$ given the side-information $  X + Z_2$ is given by
\begin{align*}
R(D) &= \frac{1}{2} \log \left(\frac{\sigma^2_{ X +  Z_1 |  X +  Z_2}}{D} \right),  \qquad 0 \leq D \leq \sigma^2_{ X  + Z_1| X + Z_2}\\
&= \frac{1}{2} \log \left( \frac{N_1 + \frac{PN_2}{P+N_2}}{D} \right),  \qquad 0 \leq D \leq N_1 + \frac{PN_2}{P+N_2},
\end{align*}
and $0$ otherwise,  where $\sigma^2_{ X +   Z_1 | X +  Z_2}$ is the conditional variance of $X+  Z_1$ given $X + Z_2$.
% (where we note that since all sequences are i.i.d., the conditional covariance matrix of  ${\bf X+  Z_1}$ given ${\bf X + Z_2}$ is  $\sigma^2_{ X +   Z_1 | X +  Z_2} {\bf I_{n\times n}}$).  

A general lattice code implementation of the Wyner-Ziv scheme is considered in \cite{Zamir:2002:binning}. 
In order to mimic the CF scheme achieved by Gaussian random codes of \cite{Cover:1979}, we need a slightly sub-optimal version  of the 
optimal scheme described in \cite{Zamir:2002:binning}. That is, 
in the context of CF, and to mimic the rate achieved by independent Gaussian random codes used for compression in the CF rate of \cite{Cover:1979},
the quantization noise after compression should be independent of the signal to be compressed to allow for two independent views of the source, i.e. to express the compressed signal as ${\bf \hat{Y}} = {\bf Y} - {\bf E_q} = {\bf X} + {\bf Z_1} -  {\bf E_q}$ where ${\bf E_q}$ is independent of ${\bf X} + {\bf Z_1}$. 
This may be achieved by selecting $\alpha_1=1$ in a modified version of the lattice-coding Wyner-Ziv scheme of \cite{Zamir:2002:binning} rather than  the optimal MMSE scaling coefficient  $\alpha_1  = \sqrt{1 - \frac{D}{N_1 + \frac{P N_2}{P+N_2}}}$. 
This roughly allows one to view ${\bf \widehat{Y}} = {\bf X} + {\bf N_1} -{\bf  E_q}$  as an equivalent AWGN channel, and is the form generally used in Gaussian CF as in \cite{Gammal:LN}. Whether this is optimal is unknown.
The second difference from direct application of \cite{Zamir:2002:binning} is that, in our lattice CF scheme, the signal ${\bf X}$ is no longer Gaussian but uniformly distributed over the fundamental Voronoi  region of a Rogers good lattice. We modify the scheme of  \cite{Zamir:2002:binning} to incorporate these two changes next.

%\remark{ Notice that there is a slight difference between the lattice Wyner-Ziv coding scheme described in \cite{Zamir:2002:binning} and Theorem \ref{thm:WZ} which is tailored for application to the lattice CF scheme for the three node Gaussian relay channel to be described next. The reason we choose $\alpha_1 = 1$ rather than the optimal MMSE scaling coefficient  $\alpha_1  = \sqrt{1 - \frac{D}{N_1 + \frac{P N_2}{P+N_2}}}$, and $\sigma^2(\Lambda) = N_1 + \frac{PN_2}{P + N_2} + D$ rather than $N_1 + \frac{PN_2}{P + N_2}$ is to guarantee that the quantization/compression error ${\bf \widehat{Y}} - {\bf Y}$ be independent of ${\bf Y}$, so that we may view ${\bf \widehat{Y}} = {\bf X} + {\bf N_1} -{\bf  E_q}$  as an equivalent AWGN channel. This convention is generally used in Gaussian compress-and-forward as in \cite{Gammal:LN}. Whether this is optimal is unknown.}

\begin{figure*}[ht]
\centering
\includegraphics[width=16cm]{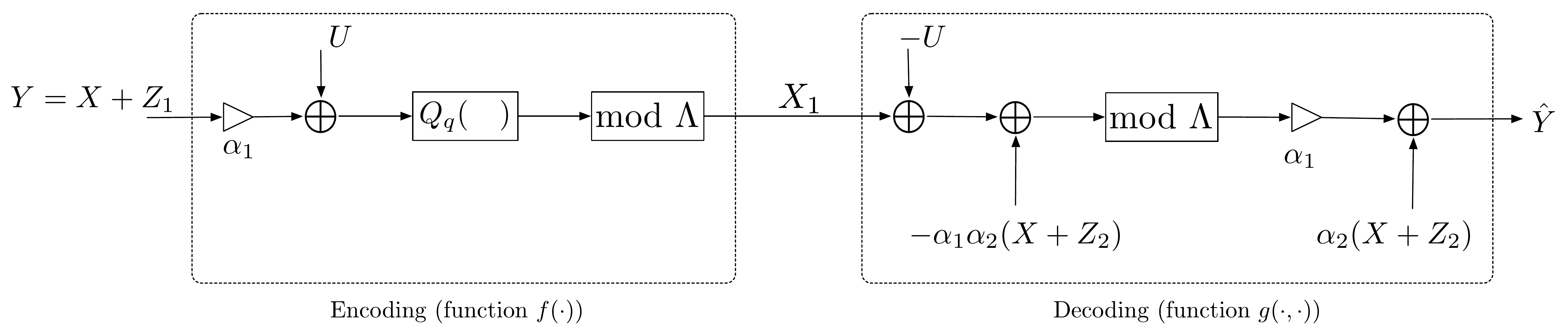}
\caption{Lattice coding for the $( {\bf X} +{\bf Z_1}, {\bf X} +{\bf Z_2})$ Wyner-Ziv problem.}
\label{fig:WZ}
\end{figure*}

\corollary{{\it Lattices for the $( {\bf X} +{\bf Z_1}, {\bf X} + {\bf Z_2})$ Wyner-Ziv problem used in the lattice CF scheme based on \cite{Zamir:2002:binning}. }
Let ${\bf X}$ be uniformly distributed over the fundamental Voronoi  region of a Rogers good lattice with second moment $P$, while ${\bf Z_1}\sim {\cal N}({\bf 0}, N_1 {\bf I})$ and ${\bf Z_2}\sim {\cal N}({\bf 0}, N_2 {\bf I})$. The following rate-distortion function for the lossy compression of the source ${\bf X} +{\bf Z_1}$ to be reconstructed as ${\bf X} + {\bf Z_1} - {\bf E_q}$ (where ${\bf E_q}$ is independent of ${\bf X} + {\bf Z_1}$ and has variance $D$) may be achieved using lattice codes:
%\[ R(D)
%= \frac{1}{2} \log \left(1+ \frac{N_1 + \frac{PN_2}{P+N_2}}{D} \right),  \qquad 0 \leq D \leq N_1 + \frac{PN_2}{P+N_2},\]
\[ R(D)
= \frac{1}{2} \log \left(1+ \frac{N_1 + \frac{PN_2}{P+N_2}}{D} \right),  \qquad 0 \leq D \leq \infty. \]%, \mbox{ and } 0 \mbox{ otherwise.}\]
\label{thm:WZ}

}
\begin{proof}
%The remainder of this Section consists of the proof of Theorem \ref{thm:WZ}.
Consider a pair of nested lattice codes $\Lambda \subseteq \Lambda_q$, where $\Lambda_q$ is Rogers-good with second moment $D$, and $\Lambda$ is Poltyrev-good with second moment $N_1 + \frac{PN_2}{P + N_2}+D$. The existence of such a nested lattice pair good for quantization is guaranteed as in \cite{Zamir:2002:binning}. We consider the encoding and decoding schemes of Figure \ref{fig:WZ}, similar to that of \cite{Zamir:2002:binning}. We let ${\bf U}$ be a quantization dither signal which is uniformly distributed over $\mathcal{V}_q$ and introduce  the following coefficients (choices justified later):
\begin{equation}
 \alpha_1  = 1,%\sqrt{1 - \frac{D}{N_1 + \frac{PN_2}{P+N_2}}}, 
 \;\;\;\; \alpha_2= \frac{P}{P+ N_2}.
 \label{eq:alpha}
 \end{equation}
%be the MMSE coefficients (we will justify these choices later on).

{\bf Encoding:} The encoder quantizes the scaled and dithered signal $\alpha_1( {\bf X} +{\bf Z_1})+{\bf U}$ to the nearest fine lattice point, which is then modulo-ed back to the coarse lattice fundamental Voronoi  region as
\begin{align*}
{\bf I} &:= Q_q( \alpha_1 ({\bf X} +{\bf Z_1}) + {\bf U} ) \mod \Lambda \\
%&= ( \alpha_1 (X+Z_1) + U - ( \alpha_1(X+Z_1) + U) \mod \Lambda_q ) \mod \Lambda \\
%&= ( \alpha_1 ({\bf X}+ {\bf Z_1}) + {\bf U} - {\bf E_q} ) \mod \Lambda,\\
&= ({\bf X}+ {\bf Z_1}+ {\bf U} - {\bf E_q}) \mod \Lambda.
\end{align*}
where ${\bf E_q} := ( {\bf X} +{\bf Z_1} + {\bf U}) \mod \Lambda_q$ is independent of ${\bf X}+{\bf Z_1}$ and uniformly distributed over $\mathcal{V}_q$ according to the Crypto lemma \cite{Forney:ShannonWinener:2003}.  The encoder sends index $i$ of ${\bf I}$ at the source coding rate
\begin{align*}
R &= \frac{1}{n} \log \left( \frac{V(\Lambda)}{V(\Lambda_q)} \right)
%= \frac{1}{2} \log \left( \frac{\sigma^2(\Lambda)}{\sigma^2(\Lambda_q)} \right)
 = \frac{1}{2} \log \left( 1+ \frac{N_1 + \frac{PN_2}{P+N_2}}{D} \right).
\end{align*}

{\bf Decoding:} The decoder receives the index $i$ of ${\bf I}$ and reconstructs ${\bf \widehat{Y}}$ as
\begin{align*}
{\bf \widehat{Y} }&=  \alpha_1( ( {\bf I} - {\bf U} - \alpha_1 \alpha_2 ({\bf X} + {\bf Z_2}) ) \mod \Lambda ) + \alpha_2( {\bf X} + {\bf Z_2}) \\
%&= \alpha_1 ( ( \alpha_1 ( X+ Z_1) + U - E_q - U - \alpha_1\alpha_2(X+Z_2) ) \\
%&\;\;\;\;\;\; \;\;\;\;\;\; \mod \Lambda  ) + \alpha_2(X+Z_2) \\
&= \alpha_1 ( ( \alpha_1( (1-\alpha_2) {\bf X} - \alpha_2 {\bf Z_2}  + {\bf Z_1}) - {\bf E_q} ) \mod \Lambda ) + \alpha_2( {\bf X} + {\bf Z_2})  \\
&\overset{(a)}{\equiv} \alpha_1 ( \alpha_1( (1 - \alpha_2) {\bf X} - \alpha_2 {\bf Z_2} + {\bf Z_1}) - {\bf E_q} )  + \alpha_2({\bf X} + {\bf Z_2}) \\
%& = ( \alpha_1^2 - \alpha_1^2\alpha_2 + \alpha_2) {\bf X} + \alpha_2 ( 1 - \alpha_1^2) {\bf Z_2} + \alpha_1^2 {\bf Z_1} - \alpha_1 {\bf E_q},\\
& = {\bf X} + {\bf Z_1} - {\bf E_q}
\end{align*}
where equivalence (a) denotes asymptotic equivalence (as $n\rightarrow \infty$), since, as in  \cite[Proof of (4.19)]{Zamir:2002:binning}
\begin{align}
\Pr \{ ( &\alpha_1( (1-\alpha_2) {\bf X} - \alpha_2 {\bf Z_2}  + {\bf Z_1}) - {\bf E_q} ) \mod \Lambda \neq  \alpha_1( (1-\alpha_2) {\bf X} - \alpha_2 {\bf Z_2} + {\bf Z_1}) - {\bf E_q}  \}  \rightarrow 0 \label{pe}
 \end{align}
%as $n \rightarrow \infty$
for a sequence of a good nested lattice codes since
\begin{align}
\frac{1}{n} &E || \alpha_1( (1-\alpha_2) {\bf X} - \alpha_2 {\bf Z_2} + {\bf Z_1}) - {\bf E_q} ||^2  =   \frac{PN_2}{P + N_2} + N_1 + D =  \sigma^2(\Lambda).  \label{eq:imp}
\end{align}
Note that there is a slight difference from \cite[Proof of (4.19)]{Zamir:2002:binning} since ${\bf X}$ is uniformly distributed over the fundamental Voronoi  region of a Rogers good lattice rather than Gaussian distributed. However,
according to Lemma \ref{lem:gaussianbound}, $\alpha_1 ( (1-\alpha_2) {\bf X_1} - \alpha_2 {\bf Z_2} + {\bf Z_1} )- {\bf E_q}  = (1-\alpha_2) {\bf X_1} - \alpha_2 {\bf Z_2} + {\bf Z_1} - {\bf E_q}$ may be upper bounded by the pdf of an i.i.d. Gaussian random vector (times a constant) with variance approaching  \eqref{eq:imp}
%\begin{align}
%\frac{1}{n} &E || (1-\alpha_2) {\bf X_1} - \alpha_2 {\bf Z_2} + {\bf Z_1} - {\bf E_q} ||^2 =   \frac{P_1N_2}{P_1 + N_2} + N_1 + D  = \sigma^2(\Lambda),\label{eq:a2}
%\end{align}
since ${\bf X_1}$ is uniformly distributed over the Rogers good ${\cal V}_1$, ${\bf E}_q$ is uniformly distributed over the Rogers good  ${\cal V}_q$ of second moment $D$, and $- \alpha_2 {\bf Z_2} + {\bf Z_1}$ is Gaussian. 
Then because $\Lambda$ is Poltyrev good,  \eqref{pe} can be made arbitrary small as $n \rightarrow \infty$. This guarantees a distortion of $D$ as ${\cal V}_q$ is of second moment $D$. 

%The careful choice of the MMSE coefficients $\alpha_1$ and $\alpha_2$ as in \eqref{eq:alpha} guarantees  \eqref{eq:imp}.  Thus,
%\begin{align*}
%{\bf \widehat{Y} }- {\bf Y} &= ( \alpha_1^2 - \alpha_1^2\alpha_2 + \alpha_2) {\bf X}  + \alpha_2 ( 1 - \alpha_1^2) {\bf Z_2} + \alpha_1^2 {\bf Z_1} - \alpha_1 {\bf E_q}  - ({\bf X}+ {\bf Z_1}) \\
%&= -(1-\alpha_1^2)(1-\alpha_2) {\bf X} + \alpha_2(1-\alpha_1^2) {\bf Z_2} - (1-\alpha_1^2) {\bf Z_1} -  \alpha_1 {\bf E_q} \\
%&= - (1 - \alpha_1^2) ( (1 - \alpha_2) {\bf X} - \alpha_2 {\bf Z_2} + {\bf Z_1}) - \alpha_1 {\bf E_q},
%\end{align*}
%from which we may bound the squared error distortion as
%\begin{align*}
%\frac{1}{n} E || {\bf \widehat{Y}} - {\bf Y} ||^2 &=  ( 1 - \alpha_1^2)^2 \left( \frac{PN_2}{P + N_2} + N_1\right) + \alpha_1^2 D = D,
%\end{align*}
%again through the careful choice of $\alpha_1$ and $\alpha_2$ as in \eqref{eq:alpha}.
%\nrd{YIWEI: is everything with equality for the distortion, never $\leq$? Seems strange to have exact equality.... I thought about it too. But Zamir's paper use the equality everywhere and it is a little messy to change to $\leq$. So we are showing the extreme case.}
\end{proof}

\subsection{Lattice coding for Compress-and-Forward}
\label{sec:LCF}

Armed with a lattice Wyner-Ziv scheme, we mimic every step of the CF scheme for the AWGN relay channel of Figure \ref{fig:relay-channel} and Section \ref{subsec:DF}, described in \cite{Cover:1979} using lattice codes and will show that the same rate as that achieved using random Gaussian codebooks may be achieved in a structured manner.

\theorem{{\it Lattices achieve the CF rate for the relay channel.}
%The following rate For the Gaussian relay channel described by the input/output equations ${\bf  Y_R} = {\bf X_S} + {\bf Z_R}$ at the relay's receiver and ${\bf Y_D} = {\bf X_S} + {\bf X_R} +{\bf Z_D}$ at the destination, with corresponding input and noise powers $P,P_R,N_R,N_2$, t
The following rate may be achieved for the AWGN relay channel using lattice codes in a lattice Compress-and-Forward fashion:
\begin{align*}
R & < \frac{1}{2} \log \left( 1 + \frac{P}{N_D} + \frac{PP_R}{PN_R + PN_D + P_RN_R + N_RN_D} \right).
\end{align*}
\label{thm:LCF}
}
\begin{proof}
%The remainder of this Section is dedicated to the proof of Theorem \ref{thm:LCF}.

\noindent {\bf Lattice codebook construction:}  We employ three nested lattice pairs of dimension $n$ satisfying:
%, where
%%Two of them are used as channel codebooks for the transmitter (Node 1) and the relay (Node 2). The third is used as a quantization/compression codebook by the relay. W
%we drop all subscripts  / superscripts $n$ for ease of exposition and note that all lattices and lattice points are $n$-dimensional. %The codebooks are:%This means we need to use three pairs of 'good' nested lattice codes.\\
\\ \noindent $\bullet$ Channel codebook for Node $S$: codewords ${\bf t_1} \in \mathcal{C}_1 = \{ \Lambda_{c1} \cap \mathcal{V}_1\}$ where $\Lambda_1 \subseteq \Lambda_{c1} $, and $\Lambda_1$ is both Rogers-good and Poltyrev-good and $\Lambda_{c1}$ is Poltyrev-good.  We set $\sigma^2(\Lambda_1) = P$ to satisfy the transmitter  power constraint. We associate each message $w\in \{1,2,\cdots 2^{nR}\}$ with a codeword ${\bf t_1}(w)$ in one-to-one fashion and send a dithered version of ${\bf t_1}(w)$. Note that $\Lambda_{c1}$ is chosen such that $|\mathcal{C}_1| = 2^{nR}$.
\\ \noindent $\bullet$  Channel codebook for the relay: codewords ${\bf t_R}\in \mathcal{C}_R = \{ \Lambda_{cR} \cap \mathcal{V}_R\}$ where $\Lambda_R \subseteq \Lambda_{cR} $, and  $\Lambda_R$ is both Rogers-good and Poltyrev-good and $\Lambda_{cR}$ is Poltyrev-good. We set $\sigma^2(\Lambda_R) = P_R$ to satisfy the relay power constraint. We associate each compression index $i\in \{1,2,\cdots, 2^{nR'}\}$ with the codeword ${\bf t_R}(i)$ in a one-to-one fashion and send a dithered version of ${\bf t_R}(i)$. Note that $\Lambda_{cR}$ is chosen such that $| \mathcal{C}_R| = 2^{nR'} $.
\\ \noindent $\bullet$  Quantization/Compression codebook: ${\bf t_q} \in \mathcal{C}_q = \{ \Lambda_{q} \cap \mathcal{V}\}$ and $\Lambda \subseteq \Lambda_{q} $, where  $\Lambda$ is Poltyrev-good and $\Lambda_q$ is Rogers-good.  We set $\sigma^2(\Lambda_q) = D$, $\sigma^2(\Lambda) = N_R + \frac{P_1N_2}{P_1 + N_2} + D$, such that the source coding rate is $\widehat{R} = \frac{1}{n} \log \left( \frac{V(\Lambda)}{V(\Lambda_q)} \right) = \frac{1}{2} \log \left( 1 + \frac{N_R + \frac{PN_D}{P + N_D}}{D} \right) $. %These settings are explained in the following.

\noindent
{\bf Encoding:} We use block Markov encoding as \cite{Cover:1979}. In block $b$,  Node 1 transmits %chooses the codeword ${\bf t_1}(w_{b})$ associated with the message $w_b$ and transmits
\[ {\bf X_S}(w_{b}) = ( {\bf t_1}(w_b) + {\bf U_1}(b) ) \mod \Lambda_1, \]
where ${\bf U_1}(b)$ is the dither  uniformly distributed over $\mathcal{V}_1$.
The relay quantizes the received signal in the previous block $b-1, \; {\bf Y_R}(b-1) = {\bf X_S}(w_{b-1}) + {\bf Z_R}(b-1)$
 to ${\bf I}(w_{b-1}) = Q_q\left( {\bf X_S}(w_{b-1})+{\bf Z_R}(b-1) + {\bf U}_q - {\bf E}_q \right) \mod \Lambda$ (with index $i(w_{b-1})$) by using the quantization lattice code pair $(\Lambda_q, \Lambda)$ as described in the encoding part of Theorem \ref{thm:WZ}, for ${\bf U}_q$ a quantization dither uniformly distributed over ${\cal V}_q$ and ${\bf E}_q : =\left( {\bf X_S}(w_{b-1}) + {\bf Z_R}(b-1) + {\bf U}_q \right) \mod \Lambda_q$. %where we set $\alpha_1 = 1$ and the second moment of $\Lambda$ to be $\sigma^2(\Lambda) = N_R + \frac{P_1N_2}{P_1 + N_2}+ D$. These settings will be justified later. 
Node 2 chooses the codeword ${\bf t_R}(i(w_{b-1}))$ associated with the index $i(w_{b-1})$ of $ {\bf I}(w_{b-1})$ and sends
 \[{\bf X_R}(w_{b-1}) = ({\bf  t_R}(i(w_{b-1})) + {\bf U_R}(b-1) ) \mod \Lambda \]
with ${\bf U_R}(b-1)$ the dither signal uniformly distributed over $\mathcal{V}_R$ and independent across blocks.

\noindent
{\bf Decoding:} In block $b$, Node $D$ receives
\[ {\bf Y_D}(b) = {\bf X_S}(w_{b}) + {{\bf X_R}(w_{b-1})} + {\bf Z_D}(b). \]
It first decodes  $w_{b-1}$
%${\bf t_R}(w_{b-1})$, and then the associated ${\bf I}(w_{b-1})$ and {${\bf X_R}(w_{b-1})$}, 
using lattice decoding as in \cite{Erez:2004} or Lemma  \ref{lem:unique} as long as %the channel coding rate constraint is satisfied (recall that ${\bf t_R}$ is of rate $R'$)
\[ R' < \frac{1}{2} \log \left( 1 + \frac{P_R}{P + N_D} \right).\]
%which ensures the correct decoding of ${\bf t_R}(w_{b-1})$.
We note that the source coding rate of ${\bf I}$, $\widehat{R}$
%\[ \widehat{R} = \frac{1}{2} \log \left( 1 + \frac{N_2 + \frac{P_1N_3}{P_1 + N_3}}{D} \right), \]
must be less than the channel coding rate $R'$, i.e.
%\nrd{Cite Theorem here, nd say this is what allows us to pick the minimal $D$}
\begin{align}
 \frac{1}{2} \log \left( 1 + \frac{N_R + \frac{PN_D}{P + N_D}}{D} \right) <   \frac{1}{2} \log \left( 1 + \frac{P_R}{P + N_D} \right), \label{D}
 \end{align}
 which sets a lower bound on the achievable distortion $D$.  
Node $D$ then may obtain % subtracts the decoded ${\bf X_R}(w_{b-1})$ from ${\bf Y_D}(b)$ and obtains
\begin{align*}
{\bf Y_D'}(b) &= {\bf Y_D}(b) - {{\bf X_R}(w_{b-1})}= {\bf X_S}(w_{b}) + {\bf Z_D}(b)
\end{align*}
which is used as direct-link side-information in the next block $b+1$.  In the previous block, Node $D$ had also obtained ${\bf Y_D'}(b-1) = {\bf X_S}(w_{b-1}) + {\bf Z_D}(b-1)$. Combining this with ${\bf I}(w_{b-1})$, Node $D$ uses ${\bf Y_D'}(b-1)$ as side-information to reconstruct ${\bf \widehat{Y}_D}(b-1)$ as in the decoder of Theorem \ref{thm:WZ}. % with $\alpha_1 =1$, and $\sigma^2(\Lambda) = N_R + \frac{P_1N_2}{P_1 + N_2} + D$.

%Thus, we see that the CF scheme employs the $( {\bf X} + {\bf Z_1}, {\bf X} + {\bf Z_2})$ Wyner-Ziv coding scheme of Section \ref{sec:WZ} %From the above, we are able to make the following analogy with the general lattice Wyner-Ziv of Section \ref{sec:WZ}: the CF in three-node relay channel corresponds to general Wyner-Ziv coding in which
%where the source to be compressed at the relay is ${\bf X_1} + {\bf Z_R}$ and the side-information at the receiver (from the previous block) is ${\bf X_1}+ {\bf Z_2}$. 

Thus, we see that the CF scheme employs the $( {\bf X} + {\bf Z_1}, {\bf X} + {\bf Z_2})$ Wyner-Ziv coding scheme of Section \ref{sec:WZ} %From the above, we are able to make the following analogy with the general lattice Wyner-Ziv of Section \ref{sec:WZ}: the CF in three-node relay channel corresponds to general Wyner-Ziv coding in which
where the source to be compressed at the relay is ${\bf X_S} + {\bf Z_R}$ and the side-information at the receiver (from the previous block) is ${\bf X_S}+ {\bf Z_D}$.

The compressed ${\bf Y_R}(b-1)$ may now be expressed as
\begin{align*}
{\bf \widehat{Y}_R}(b-1)& = ( \alpha_1^2 - \alpha_1^2\alpha_2 + \alpha_2) {\bf X_S}(w_{b-1})  + \alpha_2 ( 1 - \alpha_1^2) {\bf Z_D}+ \alpha_1^2 {\bf Z_R}- \alpha_1 {\bf E_q}(b-1) \\
&= {\bf X_S}(w_{b-1}) + {\bf Z_R}(b-1) - {\bf E_q}(b-1)
\end{align*}
%by proper choice of $\alpha_2$(\nrd{this step does need a proper $\alpha_2$} ),
where ${\bf E_q}(b-1): = ({\bf Y_D}(b-1)+{\bf U_q}(b-1)) \mod \Lambda_q$  (with ${\bf U_q}(b-1)$ the quantization dither which is uniformly distributed over ${\cal V}_q$) is independent and uniformly distributed over $\mathcal{V}_q$ with second moment $D$.
% \nrd{YIWEI: is that the correct definition of ${\bf E_q}$ in terms of ${\bf Y_2}$ and shouldn't ${\bf U}$  be uniform over ${\cal V}(\Lambda_q)$ rather than ${\cal V}(\Lambda)$? YES}
The destination may decode ${\bf t_1}(w_{b-1})$ from ${\bf Y_D'}(b-1)$ and ${\bf \widehat{Y}_R}(b-1)$ by coherently combining them as
\begin{align}
&\frac{\sqrt{P}}{N_D}  {\bf Y_D'}(b-1)  +\frac{\sqrt{P}}{N_R + D} {\bf \widehat{Y}_R}(b-1) \nonumber \\
%=&  \frac{\sqrt{P_1}}{N_3} \left( X_1(j-1) + Z_3(j-1)) + \frac{\sqrt{P_1}}{N_2 + D} ( X_1(j-1) + Z_2(j-1) - E_q(j-1) \right) \\
=& \left( \frac{\sqrt{P}}{N_D} + \frac{\sqrt{P}}{N_R + D} \right){\bf  X_S}(w_{b-1}) + \frac{\sqrt{P}}{N_D}{\bf  Z_D}(b-1) + \frac{\sqrt{P}}{N_R + D} \left({\bf Z_R}(b-1)- {\bf E_q}(b-1) \right). \label{CFY}
\end{align}
Now we wish to decode $w_{b-1}$ from \eqref{CFY} which is the sum of the desired codeword which is uniformly distributed over a Rogers good lattice, and noise composed of Gaussian noise and ${\bf E_q}$ uniformly distributed over a fundamental Voronoi  region of a Rogers good lattice. This 
scenario may be handled by Lemma \ref{lem:unique}, and we may thus uniquely decode $w_{b-1}$ as long as 
%Notice ${\bf E_q}$ can be upper bounded by the pdf of a Gaussian random variable with the same variance as $n \rightarrow \infty$ according to Lemma \ref{lem:gaussianbound}.  Using modulo lattice decoding \cite{nazer2011compute} \nrd{(since the compute-and-forward framework contains the capacity region of the multiple access channel using successive decoding, the first step may be made equivalent to the above problem)???}, we may decode ${\bf t_1}(w_{b-1})$ (and the associated message $w_{b-1}$) as long as
\[ R < \frac{1}{2} \log \left( 1 + \frac{P}{N_D} + \frac{P}{N_R + D} \right).\]
Combining this with the constraint \eqref{D}, we obtain
\begin{align*}
R &
%< \frac{1}{2} \log \left( 1 + \frac{P_1}{N_3} + \frac{P_1}{N_2 + \frac{( N_2 + \frac{P_1N_3}{P_1 + N_3}) (P_1+N_3)}{P_2}} \right) \\`
< \frac{1}{2} \log \left( 1 + \frac{P}{N_D} + \frac{PP_R}{PN_R + PN_D + P_RN_R + N_RN_D} \right),
\end{align*}
which is the CF rate achieved by the usual choice of Gaussian random codes (in which the relay quantizes the received signal $\bf{Y_R}$ as $\bf{\hat{Y}_R} = \bf{Y_R} + \bf{E_q}$ in which $\bf{E_q}$ is independent of $\bf{Y_R}$) \cite[pg. 17--48]{Gammal:LN}.
\end{proof}

\medskip
%\remark{ Notice that there is a slight difference between the $( {\bf X} + {\bf Z_1} , {\bf X} +{\bf Z_2})$ Wyner-Ziv coding scheme described in Section \ref{sec:WZ}  and its application to the Compress-and-Forward scheme for the three node Gaussian relay channel. The reason we choose $\alpha_1 = 1$ rather than optimal coefficient  $\alpha_1  = \sqrt{1 - \frac{D}{N_R + \frac{P_1 N_2}{P_1+N_2}}}$, and $\sigma^2(\Lambda) = N_R + \frac{P_1N_2}{P_1 + N_2} + D$ rather than $N_R + \frac{P_1N_2}{P_1 + N_2}$ is to guarantee that the quantization/compression error ${\bf \widehat{Y}_R} - {\bf Y_R}$ be independent of all other terms, so that we may view ${\bf \widehat{Y}_R} = {\bf X_1} + {\bf N_R} -{\bf  E_q}$  as an equivalent AWGN channel. This convention is generally used in Gaussian compress-and-forward as in \cite{Gammal:LN}. Whether this is optimal is unknown.}

%We are currently working on lattice coding for ``noisy network coding'' \cite{lim2011noisy} , which can be seen as a generalization of CF in the arbitrary networks. We will first try to decompose the joint typicality decoding in the scheme of \cite{lim2011noisy} into multiple list decoding and/or unique decoding, and then can replace random codes with lattice codes. If it works, it will be promising to combine the linearity of lattice with lattice `noisy network coding' in the large Gaussian network, which can outperform the `Compute-and-Forward' and `noisy network coding'. 

\section{Conclusion}
\label{sec:conclusion}
We have demonstrated that lattice codes may mimic random Gaussian codes in the context of the Gaussian relay channel, achieving the  same Decode-and-Forward and Compress-and-Forward rates as those using random Gaussian codes. One of the central technical tools needed was a new lattice list decoder, which proved useful in networks with cooperation where various links to a destination carry different encodings of a given message. 
We have further demonstrated a technique for combining the linearity of lattice codes with classical Block Markov cooperation techniques in a DF fashion in two multi-source networks. Such achievability schemes outperform known i.i.d. random coding for certain channel conditions. 
%, lattice codes may be shown to outperform known random Gaussian coding schemes under certain channel conditions for multi-source relay network scenarios.
   The question of whether lattice codes can replace random codes in all Gaussian relays networks and thereby achieve the same rates as the random coding counterparts remains open. 
  % For example, we do not currently know whether lattice codes may be used in the general ``noisy network coding'' scheme for Gaussian relay networks \cite{lim2011noisy}. 
%    If so, combining their ability to mimic random codes with their linearity, one may 
%    
%     powerful than random codes in arbitrary networks. (2) 
Another remaining open question is whether the DF and CF schemes may be unified into a single scheme -- from the lattice DF and CF schemes presented here we notice that the relay performs a form of lattice quantization in both scenarios. 
Finally, the extension of these results  -- which roughly imply that structured codes may be used to replace random Gaussian codes in Gaussian networks  -- to discrete memoryless channels is of interest. In particular, structured codes such as ``abelian group codes'' \cite{abelian:2011} may prove useful in this direction. 
%For the general discrete memoryless scenarios,  more general algebraic structure codes may be shown to be better than random codes, such as `abelian group codes'  \cite{abelian:2011}. It can be seen as a generalized version of lattice codes in discrete memoryless case. 

\begin{figure*}
\centering
\includegraphics[width=17cm]{figures/table2ND}
\caption{Lattice Decode-and-Forward scheme for the AWGN multi-relay channel.}
\label{DF for multirelay}
\end{figure*}

\begin{appendix}

\subsection{Details in Decoding step 2. of Theorem \ref{thm:DF}.}
\label{app:list}

In applying the Lattice List Decoder of Theorem \ref{thm:list} to the steps between \eqref{eq:LRD} -- \eqref{eq:RR}, we form the list
\[ L_{R-D}^{w_{b-1}}({\bf Y_D}(b)) = \{w_{b-1} | \; {\bf {t_2}}(w_{b-1}) \in  S_{ \kappa \mathcal{V}_{s2},\kappa \Lambda_{c2}} ({\bf Y_D^\prime}(b)) \mod \kappa \Lambda_2\}, \]
where
%\begin{align*}
%{\bf Y_2}(b) %&= {\bf X'_1}(w_b) + {\bf X'_2}(w_{b-1}) + {\bf X_R}(w_{b-1}) + {\bf Z_2} \\
%&= {\bf X'_1}(w_b) +\kappa {\bf X'_2}(w_{b-1}) + {\bf Z_2}.
%\end{align*}
%%Now define $\kappa: = (1+\sqrt{P_R/(\bar{\alpha}P)})$. Then, ${\bf t^\star_2} = \kappa {\bf t_2}$, ${\bf U^\star_2} =\kappa {\bf U_2}$, and $\Lambda^\star_2 =\kappa  \Lambda_2$, and finally ${\bf X^\star} = \kappa {\bf X'_2}$.
%Thus $ {\bf Y_2}(b) = {\bf X'_1}(w_b) + {\bf X^\star}(w_{b-1}) + {\bf Z_2} $, and so we may compute
\begin{align*}
{\bf Y^\prime_D}(b) &= ( \beta {\bf Y_D}(b) + \kappa{\bf U_2}({b-1})  ) \mod \kappa \Lambda_2 \\
&= (\kappa {\bf t_2}(w_{b-1}) - (1 - \beta) \kappa {\bf X_2'}(w_{b-1}) + \beta({\bf X'_1}(w_b) + {\bf Z_D}(b)) ) \mod \kappa \Lambda_2.
\end{align*}
%Choose an appropriate lattice chain $\Lambda_2 \subseteq \Lambda_{s2} \subseteq  \Lambda_{c2}$ so that $\kappa \Lambda_2 \subseteq \kappa \Lambda_{s2} \subseteq \kappa \Lambda_{c2}$ (i.e. $(\Lambda_2^\star \subseteq \Lambda_{s2}^\star \subseteq  \Lambda_{c2}^\star)$) is  a ``good'' nested lattice chain. %("Good" nested lattice pair means $\Lambda_2^\prime$ is both Rogers-good and Poltyrev-good, while $\Lambda_{c2}^\prime$ is Poltyrev-good).% according to Theorem \ref{thm:nam}.
%\nrd{Yiwei: the meaning of ``good'' according to Theorem 2.1 is not clear as we never use the term ``good'' there. Explained}.
  %We notice that $\sigma^2(\Lambda_2^\star) = P^\star = \kappa^2 \bar{\alpha}P$. % =   \left(1 +\sqrt { \frac{P_R}{\bar{\alpha}P}} \right)^2 \bar{\alpha}P$.
 As in Section \ref{subsec:nested}, choose $\beta$ to be the MMSE coefficient $\beta_{MMSE}= \frac{\kappa^2 \bar{\alpha} P}{\kappa^2 \bar{\alpha} P+ \alpha P + N_D}$, resulting in self-noise ${\bf Z}_{eq}: = ((1 - \beta) \kappa {\bf X_2'}(w_{b-1}) + \beta({\bf X'_1}(w_b) + {\bf Z_D}(b)) ) \mod \kappa\Lambda_2$ of variance
 \[N_{eq} = \frac{\kappa^2 \bar{\alpha}P (\alpha P + N_D ) } { \kappa^2 \bar{\alpha}P+ \alpha P + N_D }.\]
Select $\Lambda_{s2}$ in the lattice chain  $\Lambda_2 \subseteq \Lambda_{s2}\subseteq \Lambda_{c2}$ to have a fundamental Voronoi region of volume $V_{s2} = \left( \frac{\alpha P + N_D}{\alpha P + N_D + (\sqrt{\bar{\alpha}P} + \sqrt{P_R})^2} \right) ^{n/2} V_2 $ {asymptotically} (notice $V_{s2}<V_2$ as needed). This will ensure a list of the desired size $2^{n(R-R_R)}$ as long as $R_R <   C((\sqrt{\bar{\alpha P}} + \sqrt{P_R})^2 / (\alpha P+N_D))$. 
For rates $R$ approaching $\frac{1}{2}\log\left(1+\frac{P+P_R+2\sqrt{\bar{\alpha}P P_R}}{N_D} \right) $ (where list decoding is needed / relevant),  $V_{c2} = \left( \frac{N_D}{ P + P_R + 2\sqrt{\bar{\alpha}PP_R} + N_D} \right)^{n/2} V_2$ asymptotically. Thus $V_{c2} < V_{s2} < V_2$ as needed.
% the receiver can decode a list which will include the true $w_{b-1}$ (one-to-one related to $\kappa {\bf t_2}(w_{b-1}))$ with high probability  as long as \cite{Erez:2004}:

\subsection{Proof of Theorem \ref{thm:DFm}}
\label{app:DFm}

\begin{proof}
Here we demonstrate achievability for the permutation $\pi(2) = 2, \pi(3)=3$, and thus drop $\pi(\cdot)$ to simplify notation. The other permutation may be analogously achieved. 
Source Node 1 transmits a message to the destination Node 4 with the help of two relays: Node 2 and Node 3.  The achievability scheme follows a generalization of the lattice regular encoding/sliding window decoding DF scheme of Theorem \ref{thm:DF}. The only difference is the addition of one relay and thus one coding level.  

\noindent{\bf Codebook construction:} 
We construct three nested lattice chains according to Theorem \ref{thm:nam}:
\begin{itemize}
\item $\Lambda_1\subseteq \Lambda_{s(1-3)}  \subseteq \Lambda_{s(1-4)} \subseteq \Lambda_{c1}$, or   $\Lambda_1\subseteq \Lambda_{s(1-4)}  \subseteq \Lambda_{s(1-3)} \subseteq \Lambda_{c1}$ (relative nesting order depends on the system parameters and will be discussed in the following paragraph)%$\Lambda_1\subseteq \Lambda_{s(1-4)}  \subseteq \Lambda_{c1}$,  
\item  $\Lambda_2\subseteq \Lambda_{s(2-3)}  \subseteq \Lambda_{s(2-4)}  \subseteq \Lambda_{c2}$,  or $\Lambda_2\subseteq \Lambda_{s(2-4)}  \subseteq \Lambda_{s(2-3)}  \subseteq \Lambda_{c2}$
\item $\Lambda_3 \subseteq \Lambda_{s(3-4)}  \subseteq \Lambda_{c3}$
\end{itemize}
How these are ordered depends on the relative values of the power split parameters $\alpha_1, \beta_1, \alpha_2 \in [0,1]$, the power constraints $P_1,P_2,P_3$ and the noise variances $N_2,N_3,N_4$. 
In particular, the second moments of coarse lattices are selected as: $\sigma^2(\Lambda_1) = \alpha_1 P_1$, $\sigma^2(\Lambda_2) = \beta_1 P_1$, and $\sigma^2(\Lambda_3) = (1-\alpha_1 - \beta_1 ) P_1$.
%, which, as all are Rogers good, will define the volumes of their fundamental Voronoi regions. 
The message set $w \in \{1,2,\cdots 2^{nR}\}$ is mapped in a one-to-one fashion  to three codebooks ${\bf t_1}(w) \in \mathcal{C}_1 = \{ \Lambda_{c1} \cap \mathcal{V}_1\}$,  ${\bf t_2}(w) \in \mathcal{C}_2 = \{ \Lambda_{c2} \cap \mathcal{V}_2\}$, and ${\bf t_3}(w) \in \mathcal{C}_3 = \{ \Lambda_{c3} \cap \mathcal{V}_3\}$. These mappings are independent. The fine lattices $\Lambda_{c1}, \Lambda_{c2}, \Lambda_{c3}$ may be chosen to satisfy the needed rate constraint $R$ by proper selection of the corresponding $\gamma$ in Theorem \ref{thm:nam}. 
 The lattices $\Lambda_{s(1-3)}, \Lambda_{s(2-3)}$ will be used for lattice list decoding at relay 3, while $\Lambda_{s(1-4)}$,  $\Lambda_{s(2-4)}$, and  $ \Lambda_{s(3-4)}$ will be used for lattice  list decoding at the destination node 4. They will all be Rogers good, with fundamental Voronoi region volume specified by the desired lattice list decoding constraints; we are able to select this volume (or equivalently second moment) arbitrarily as long as they are smaller than their corresponding nested coarse lattices, by Theorem \ref{thm:nam}. 
 In which order they are nested will depend on the relative volumes, which in turn depends on the systems parameters $\alpha_1, \beta_1, \alpha_2 \in [0,1]$, the power constraints $P_1,P_2,P_3$ and the noise variances $N_2,N_3,N_4$. 
%In particular, we will select the $P_i$ (for $1\leq i\leq K=10$) and $\gamma$ of Theorem \ref{thm:nam} such that
%$\Lambda_1\subseteq \Lambda_{s(1-3)}  \subseteq \Lambda_{c1}$ ,  $\Lambda_1\subseteq \Lambda_{s(1-4)}  \subseteq \Lambda_{c1}$,   $\Lambda_2\subseteq \Lambda_{s(2-3)}  \subseteq \Lambda_{c2}$,  $\Lambda_2 \subseteq \Lambda_{s(2-4)}  \subseteq \Lambda_{c2}$  and  $\Lambda_3 \subseteq \Lambda_{s(3-4)}  \subseteq \Lambda_{c3}$, and such that the rate and lattice list decoder constraints are met.
 %This is done analogously to the codebook construction of Theorem \ref{thm:DF}. 

%\nrd{By Theorem \ref{thm:nam} we are guaranteed the existence of the nested lattice chains
%% $\Lambda_1 \subseteq \Lambda_{c1}$, $\Lambda_2 \subseteq \Lambda_{c2}$ , $\Lambda_3 \subseteq \Lambda_{c3}$ to be used for the three codebooks at Node 1, Node 2, and Node 3 respectively. 
%%There also exist ``good'' lattices $\Lambda_{s(1-3)}$, $\Lambda_{s(1-4)}$, $\Lambda_{s(2-3)}$, $\Lambda_{s(2-4)}$ and $\Lambda_{s(3-4)}$ which satisfy  
%$\Lambda_1\subseteq \Lambda_{s(1-3)}  \subseteq \Lambda_{c1}$ ,  $\Lambda_1\subseteq \Lambda_{s(1-4)}  \subseteq \Lambda_{c1}$,   $\Lambda_2\subseteq \Lambda_{s(2-3)}  \subseteq \Lambda_{c2}$,  $\Lambda_2 \subseteq \Lambda_{s(2-4)}  \subseteq \Lambda_{c2}$  and  $\Lambda_3 \subseteq \Lambda_{s(3-4)}  \subseteq \Lambda_{c3}$ with all lattice Poltyrev good and all but $\Lambda_{c1}, \Lambda_{c2}, \Lambda_{c3}$ Rogers good as well. 
Define the following signals (which will be superposed as described in the Encoding): 
\begin{align*}
{\bf X_1'}(w_b) &= ( {\bf t_1} (w_b) + {\bf U_1}(b) ) \mod \Lambda_1\\
{\bf X_2'}(w_b) &= ( {\bf t_2} (w_b) + {\bf U_2}(b) ) \mod \Lambda_2\\
{\bf X_3'}(w_b) &= ( {\bf t_3} (w_b) + {\bf U_3}(b) ) \mod \Lambda_3,
\end{align*} 
where ${\bf U_1}$, ${\bf U_2}$ and ${\bf U_3}$ are the dithers which are uniformly distributed over $\mathcal{V}_1$, $\mathcal{V}_2$ and $\mathcal{V}_3$, respectively, independent from block to block, and independent of each other.   The encoding and decoding steps are outlined in Figure \ref{DF for multirelay}. We make a small remark on our notation: ${\bf X_i'}$ should not be thought of as the signal being transmitted by Node $i$ (which would be ${\bf X_i}$ but we do not use this, opting instead to write out the transmit signals in terms of ${\bf X_i'}$). Rather, Node $i$ will send a superposition of the signals ${\bf X_i'}, {\bf X_{i+1}'}, \cdots$. Thus, multiple nodes may transmit the same (scaled) codeword ${\bf X_i'}$ which will coherently combine. 

\smallskip
\noindent{\bf Encoding:} We again use block Markov encoding: the message is divided into B blocks of $nR$ bits each. In block $b$, suppose Node 2 knows $\{ w_1, \dots, w_{b-1} \}$ and Node 3 knows $\{ w_1, \dots, w_{b-2} \}$.  Node 1 sends the superposition/sum of  ${\bf X_1'} (w_b)$, ${\bf X_2'}(w_{b-1})$ and ${\bf X_3'}(w_{b-2})$ with power $\alpha_1 P_1$, $\beta_1 P_1$, and $(1 - \alpha_1 - \beta_1) P_1$ respectively. Node 2 sends the superposition/sum of  $\sqrt{\frac{\alpha_2P_2}{\beta_1P_1}} {\bf X_2'} (w_{b-1})$ and $\sqrt{\frac{(1-\alpha_2) P_2}{(1-\alpha_1 - \beta_1)P_1}} {\bf X_3' }(w_{b-2})$ with power $\alpha_2 P_2$, and $(1-\alpha_2) P_2$ respectively. Node 3 sends $\sqrt{\frac{P_3}{(1 - \alpha_1 - \beta_1) P_1}} {\bf X_3'} (w_{b-2})$ with power $P_3$.

\smallskip
\noindent{\bf Decoding:}

 {\it Node 2 decodes $w_b$:} In block $b$, since Node 2 knows $w_{b-1}$ and $w_{b-2}$ and thus ${\bf X_2'} (w_{b-1})$ and ${\bf X_3'} (w_{b-2})$, it can subtract these terms from its received signal 
\[ {\bf Y_2}(b) = {\bf X_1'}(w_b) + {\bf X_2'}(w_{b-1}) + {\bf X_3'} (w_{b-2}) +  \sqrt{\frac{P_3}{(1 - \alpha_1 - \beta_1) P_1}} { \bf X_3'}(w_{b-2}) + {\bf Z_2}(b)\]
 and obtains a noisy observation of ${\bf X'_1}(w_b)$ only. 
Node 2 is able to then uniquely decode $w_b$ as long as (see \cite{Erez:2004} or Lemma \ref{lem:unique})
\[ R < \frac{1}{2} \log \left( 1 + \frac{\alpha_1 P_1}{N_2} \right). \]
%and no list decoding is needed at node 2.

{\it Node 3 decodes $w_{b-1}$:} Since Node 3 knows $w_{b-2}$ and thus ${\bf X_3'}(w_{b-2})$, it subtracts these from ${\bf Y_3}(b)$:
% its received signal 
\[ {\bf Y_3}(b) = {\bf X_1'} (w_b) + {\bf X_2'} (w_{b-1}) + {\bf X_3'}(w_{b-2}) + \sqrt{\frac{\alpha_2P_2}{\beta_1P_1}} {\bf X_2'}(w_{b-1}) + \sqrt{\frac{(1-\alpha_2) P_2}{(1-\alpha_1 - \beta_1)P_1}}{\bf X_3'} (w_{b-2}) + {\bf Z_3}(b)\]  
and obtains a noisy observation of ${\bf X_1'} (w_{b})$ and ${\bf X_2'}(w_{b-1})$, 
\[ {\bf Y_3^*}(b) =  {\bf X_1'} (w_b) +  \left(1+\sqrt{\frac{\alpha_2P_2}{\beta_1P_1}}\right) {\bf X_2'}(w_{b-1}) + {\bf Z_3}(b). \] It then uses $\Lambda_{s(2-3)}$ to decode a list $L_{2-3}^{w_{b-1}}({\bf Y_3^*(b)})$ of possible $w_{b-1}$  of size $2^{n\left(R- C\left(\frac{ \left(\sqrt{\beta_1P_1} + \sqrt{\alpha_2P_2}\right)^2}{\alpha_1P_1 + N_3}\right) \right)}$ in the presence of interference ${\bf X_1'} (w_{b})$ (uniformly distributed over the fundamental Voronoi  region of a Rogers good lattice code) and Gaussian noise ${\bf Z_3}(b)$ (hence we may apply Theorem \ref{thm:list}). 
It then intersects this list $L_{2-3}^{w_{b-1}}({\bf Y_3^*}(b))$  with the list $L_{1-3}^{w_{b-1}}({\bf Y_3^{**}}(b-1))$  of asymptotic size  $2^{n \left( R - C\left(\frac{\alpha_1P_1}{N_3} \right) \right) }$ obtained in block $b-1$ by subtracting off the known signals dependent on $w_{b-2}, w_{b-3}$ to obtain ${\bf Y_3^{**}}(b-1) = {\bf X_1'}(w_{b-1}) + {\bf Z_3}(b-1)$. 
To ensure a unique $w_{b-1}$ in the intersection, by independence of the lists (based on the independent mappings of the messages to the codebooks  ${\cal C}_1$ and ${\cal C}_2$), we need
\begin{align*}
R &<  C\left(\frac{ \left(\sqrt{\beta_1P_1} + \sqrt{\alpha_2P_2}\right)^2}{\alpha_1P_1 + N_3}\right)+ C\left(\frac{\alpha_1P_1}{N_3} \right) \\
&= C\left(\frac{\alpha_1P_1 + \left(\sqrt{\beta_1P_1} + \sqrt{\alpha_2P_2}\right)^2}{ N_3}\right).
\end{align*}
After Node 3 decodes $w_{b-1}$, it further subtracts ${\bf X_2'} (w_{b-1})$ from its received signal and obtains a noisy observation of ${\bf X_1'} (w_{b})$. It again uses the lattice list decoder using $\Lambda_{s(1-3)}$ to output a list $L_{1-3}^{w_{b}}({\bf Y_3^{**}}(b))$ of $w_{b}$ of size  $2^{n \left( R - C\left(\frac{\alpha_1P_1}{N_3} \right) \right) }$ which is used in block $b+1$ to determine $w_b$.

{\it Node 4 decodes $w_{b-2}$:} Finally, Node 4 intersects three lists to determine $w_{b-2}$. These three lists are again independent by the independent mapping of the messages to the codebooks ${\cal C}_1$, ${\cal C}_2$, ${\cal C}_3$,  where each corresponds to one of the three links (between node 1-4, 2-4, and 3-4). The first list $L_{3-4}^{w_{b-2}}({\bf Y_4}(b))$ of $w_{b-2}$ messages is obtained by list decoding using $\Lambda_{s(3-4)}$ on  its received signal 
\begin{align*} {\bf Y_4}(b) =& {\bf X_1'} (w_b) + {\bf X_2'} (w_{b-1}) + {\bf X_3'} (w_{b-2}) + \sqrt{\frac{\alpha_2P_2}{\beta_1P_1}} {\bf X_2'} (w_{b-1}) \\ &+ \sqrt{\frac{(1-\alpha_2) P_2}{(1-\alpha_1 - \beta_1)P_1}} {\bf X_3'} (w_{b-2}) + \sqrt{\frac{P_3}{(1 - \alpha_1 - \beta_1) P_1}} {\bf X_3'}(w_{b-2}) + {\bf Z_4}(b) \end{align*}
which is a combination of scaled signals 
${\bf X_1'}(w_b)$ and $ {\bf X_2'} (w_{b-1})$ which are uniform over the fundamental Voronoi  regions of Rogers good lattices and additive Gaussian noise ${\bf Z_4}(b)$, and is of size 
 \[ |L_{3-4}^{w_{b-2}}({\bf Y_4}(b))|  = 2^{n\left(R- C\left(\frac{ \left(\sqrt{(1 - \alpha_1-\beta_1P_1)} + \sqrt{(1-\alpha_2)P_2} +\sqrt{P_3} \right)^2}{\alpha_1P_1 + \left(\sqrt{\beta_1P_1} + \sqrt{\alpha_2P_2}\right)^2 + N_4}\right) \right)}.\] 
 The second list  $L_{2-4}^{w_{b-2}}({\bf Y_4^*}(b-1))$ is obtained in block $b-1$ and is of size $2^{n\left(R- C\left(\frac{ \left(\sqrt{\beta_1P_1} + \sqrt{\alpha_2P_2}\right)^2}{\alpha_1P_1 + N_4}\right) \right)}$, while the third list $L_{1-4}^{w_{b-2}}({\bf Y_4^{**}}(b-2))$ is obtained in block $b-2$ and is of size  $2^{n \left( R - C\left(\frac{\alpha_1P_1}{N_4} \right) \right)}$. The formation of these lists is described next (they are formed analogously in blocks $b-1$ and $b-2$). 

After the successful decoding of $w_{b-2}$ in block $b$,  node 4 decodes two more lists which are used in the blocks $b+1$ and $b+2$ to determine $w_{b-1}$ and $w_b$ respectively. 
Node 4 first subtracts the ${\bf X_3'}(w_{b-2})$ terms from its received signal ${\bf Y_4}(b)$ to obtain ${\bf Y_4^*}(b)$ and decodes a list of possible $w_{b-1}$ from the terms $ {\bf X_2'}(w_{b-1})$ using $\Lambda_{s(2-4)}$ in the presence of interference terms ${\bf X_1'} (w_b)$ which are uniformly distributed over Rogers good lattices and Gaussian noise (hence Theorem \ref{thm:list} applies). 
This list is denoted as $L_{2-4}^{w_{b-1}}({\bf Y_4^*}(b))$ and is used in the block $b+1$ to determine $w_{b-1}$. 

After Node 4 decodes $w_{b-1}$ in the block $b+1$, it further subtracts the ${\bf X_2'} (w_{b-1})$ terms from ${\bf Y_4^*}(b)$  to obtain ${\bf Y_4^{**}}(b) = {\bf X_1'}(w_b) + {\bf Z_4}(b)$. It then uses $\Lambda_{s(1-4)}$ to decode a list of $w_b$,   denoted as $L_{1-4}^{w_b}({\bf Y_4^{**}}(b))$, which is used in block $b+2$ to determine $w_b$.  

In block $b$, to ensure a unique message $w_{b-2}$ in the intersection of the three independent lists, we need
\begin{align*}
R &< C\left(\frac{ \left(\sqrt{(1 - \alpha_1-\beta_1P_1)} + \sqrt{(1-\alpha_2)P_2} +\sqrt{P_3} \right)^2}{\alpha_1P_1 + \left(\sqrt{\beta_1P_1} + \sqrt{\alpha_2P_2}\right)^2 + N_4}\right) + C\left(\frac{ \left(\sqrt{\beta_1P_1} + \sqrt{\alpha_2P_2}\right)^2}{\alpha_1P_1 + N_4}\right) + C\left(\frac{\alpha_1P_1}{N_4} \right) \\
&= C\left(\frac{\alpha_1P_1 + \left(\sqrt{\beta_1P_1} + \sqrt{\alpha_2P_2}\right)^2 + \left(\sqrt{(1 - \alpha_1-\beta_1P_1)} + \sqrt{(1-\alpha_2)P_2} +\sqrt{P_3} \right)^2 }{ N_4}\right).
\end{align*}
\end{proof}

\end{appendix}

\bibliographystyle{IEEEtran}
\bibliography{refs}
\end{document}